\newif\ifnotes
\definecolor{pastelred}{rgb}{1.0, 0.41, 0.38}
\algnewcommand{\LineComment}[1]{\State \textcolor{gray}{// #1 }}
\Crefname{theorem}{Theorem}{Theorems}
\Crefname{claim}{Claim}{Claims}
\Crefname{lemma}{Lemma}{Lemmas}
\Crefname{proposition}{Proposition}{Propositions}
\Crefname{corollary}{Corollary}{Corollaries}
\Crefname{definition}{Definition}{Definitions}
\crefname{algocf}{alg.}{algs.}
\Crefname{algocf}{Algorithm}{Algorithms}
\renewcommand{\R}{{\mathbb{R}}}
\renewcommand{\epsilon}{\varepsilon}
\newcommand{\eps}{\varepsilon}
\newcommand{\e}{\eps}
\newcommand{\al}{\alpha}
\newcommand{\f}{\frac}
\newcommand{\cd}{\cdot}
\newcommand{\abs}[1]{\lvert #1 \rvert}
\newcommand{\cceil}[1]{\llceil #1 \rrceil}
\newcommand{\cceilx}[1]{\left\lceil\kern-4pt\left\lceil #1 \right\rceil\kern-4pt \right\rceil}
\newcommand{\inv}[1]{{#1}^{-1}}
\newcommand{\todone}[1]{}
\newcommand{\eat}[1]{}
\let \poly \relax
\DeclareMathOperator{\poly}{poly}
\DeclareMathOperator{\rank}{rank}
\DeclareMathOperator{\disc}{disc}
\DeclareMathOperator{\select}{select}
\DeclareMathOperator*{\argmin}{arg\,min}
\newtheorem{thm}{Theorem}[section] 
\newtheorem{lemma}[thm]{Lemma}
\newtheorem{problem}[thm]{Problem}
\newtheorem*{lemma*}{Lemma}
\newtheorem*{lem*}{Lemma}
\newtheorem{prop}[thm]{Proposition}
\newtheorem{conj}[thm]{Conjecture}
\theoremstyle{definition}
\newtheorem{fact}[thm]{Fact}
\newtheorem{definition}[thm]{Definition}
\newtheorem{claim}[thm]{Claim}
\newtheorem*{claim*}{Claim}
\newtheorem{obs}[thm]{Observation}
\newtheorem{cor}[thm]{Corollary}
\newtheorem{remark}[thm]{Remark}
\declaretheoremstyle[%
  spaceabove=-6pt,%
  spacebelow=6pt,%
  headfont=\normalfont\itshape,%
  postheadspace=1em,%
  qed=\qedsymbol%
]{shortspace}
\newlist{thmlist}{enumerate}{1}
\setlist[thmlist]{label=(\alph{thmlisti}),
                  ref=\thethm(\alph{thmlisti}),
                  noitemsep}
\Crefname{listfact}{Fact}{Facts}
\begin{document}

\title{Optimal quantile estimation: beyond the comparison model}
\author{
    Meghal Gupta\thanks{E-mail: \texttt{meghal@berkeley.edu}. This author was supported by an NSF GRFP Fellowship.}\\UC Berkeley \and 
    Mihir Singhal\thanks{E-mail: \texttt{mihirs@berkeley.edu}. This author was supported by an NSF GRFP Fellowship.}\\UC Berkeley \and 
    Hongxun Wu\thanks{E-mail: \texttt{wuhx@berkeley.edu}. This author was supported by Avishay Tal’s Sloan Research Fellowship, NSF CAREER Award CCF-2145474, and Jelani Nelson’s ONR grant N00014-18-1-2562.}\\UC Berkeley
}
\date{\today}

\maketitle
\begin{abstract}
    Estimating quantiles is one of the foundational problems of data sketching. Given $n$ elements $x_1, x_2, \dots, x_n$ from some universe of size $U$ arriving in a data stream, a \textit{quantile sketch} estimates the rank of any element with additive error at most $\e n$. A low-space algorithm solving this task has applications in database systems, network measurement, load balancing, and many other practical scenarios.

Current quantile estimation algorithms described as optimal include the GK sketch (Greenwald and Khanna 2001) using $O(\eps^{-1} \log n)$ words (deterministic) and the KLL sketch (Karnin, Lang, and Liberty 2016) using $O(\eps^{-1} \log\log(1/\delta))$ words (randomized, with failure probability $\delta$). However, both algorithms are only optimal in the comparison-based model, whereas most typical applications involve streams of integers that the sketch can use aside from making comparisons. 

If we go beyond the comparison-based model, the deterministic q-digest sketch (Shrivastava, Buragohain, Agrawal, and Suri 2004) achieves a space complexity of $O(\eps^{-1}\log U)$ words, which is incomparable to the previously-mentioned sketches. It has long been asked whether there is a quantile sketch using $O(\eps^{-1})$ words of space (which is optimal as long as $n \leq \poly(U)$). In this work, we present a \textit{deterministic} algorithm using $O(\eps^{-1})$ words, resolving this line of work.

\end{abstract}

\thispagestyle{empty}
\newpage
\tableofcontents
\pagenumbering{roman}
\newpage
\pagenumbering{arabic}

\sloppy
\section{Introduction}
Estimating basic statistics such as the mean, median, minimum/maximum, and variance of large datasets is a fundamental problem of wide practical interest. Nowadays, the massive amount of data often exceeds the memory capacity of the algorithm. This is captured by \emph{streaming model}: The bounded-memory algorithm makes one pass over the data stream $x_1, x_2, \dots, x_n$ from a universe $[U] = \{1, \dots, U\}$ and, in the end, outputs the statistic of interest. The memory state of the algorithm is therefore a \emph{sketch} of the data set that contains the information about the statistic and allows future insertions. Here, memory consumption is conventionally measured in units of words, where $1$ word equals $\log n + \log U$ bits. 

Most of these simple statistics can be computed exactly with a constant number of words. But the median, or more generally, the $\phi$-quantile, is one exception. In their pioneering paper, Munro and Paterson \cite{munro1980selection} showed that even an algorithm that makes $p$ passes over the data stream still needs $\Omega(n^{1/p})$ space to find the median. Fortunately, for many practical applications, it suffices to find the $\epsilon$-approximate $\phi$-quantile: Instead of outputting the element of rank exactly $\phi n$, the algorithm only has to output an element of rank $(\phi \pm \epsilon) n$. Such algorithms are called \emph{approximate quantile sketches}. They are actually implemented in practice, appearing in systems or libraries such as Spark-SQL~\cite{armbrust2015spark}, the Apache DataSketches project~\cite{apache}, GoogleSQL~\cite{googlesql}, and the popular machine learning library XGBoost~\cite{chen2016xgboost}. 

There are also other queries the sketch could need to answer: For example, online queries asked in the middle of the stream, or rank queries, where the algorithm is asked to estimate the rank of an element up to $\epsilon n$ error. As finding approximate quantiles is equivalent to answering rank queries. To solve all of them, it suffices to solve the following strongest definition. 

\begin{problem}[Quantile sketch]  \label{prob:all-quantile-sketch}
The problem of quantile sketching (or specifically, $\eps$-approximate quantile sketching) is to find a data structure $A$ taking as little space as possible in order to solve the following problem: Given a stream of elements $\pi = x_1, x_2, \dots, x_n \in [U]$, we define the partial stream $\pi_t = x_1, x_2, \dots, x_t$. For element $x \in [U]$, let $\mathrm{rank}_{\pi_t}(x)$ be be the number of elements in $\pi_t$ that are at most $x$. When a query $x$ arrives at time $t$, then $A$ must output an approximate rank $r$, such that $|r - \mathrm{rank}_{\pi_t}(x)| \leq \epsilon t.$
\end{problem}

Two notable quantile sketches include the Greenwald and Khanna (GK) sketch~\cite{greenwald2001space} using $O(\eps^{-1} \log n)$ words (deterministic) and KLL sketch~\cite{karnin2016optimal} using $O(\eps^{-1} \log\log(1/\delta))$ words (randomized, with failure probability $\delta$). Both algorithms follow the comparison-based paradigm, where the sketch cannot see anything about the elements themselves and can only make black-box comparisons between elements it has stored. They are known to be optimal in this paradigm (\cite{cormode2020tight} shows the GK is optimal for deterministic algorithms and \cite{karnin2016optimal} shows that KLL is optimal for randomized algorithms).

However, most typical applications of quantile sketches apply to streams of integers (or elements of some finite universe), rather than just to black-box comparable objects. For example, the elements in the universe could be one of the following: network response times (with a preset timeout), IP addresses, file sizes, or any other data with fixed precision. This may allow for a better quantile sketch than in the comparison-based model. The best previously-known non-comparison-based algorithm is the q-digest sketch introduced in \cite{shrivastava2004medians}, which is a deterministic sketch using $O(\epsilon^{-1} \log U)$ words. Unfortunately, this isn't really better than the GK sketch, as $n$ is typically much less than $\poly(U)$. On the other hand, the only lower bound we know is the trivial lower bound of $\Omega(\epsilon^{-1})$ words in the regime where $n\leq\poly(U)$ (which holds for both deterministic and randomized algorithms). Motivated by this gap, Greenwald and Khanna, in their survey~\cite{greenwald2016quantiles}, asked if the q-digest algorithm is already optimal, and as such, one cannot substantially improve upon comparison-based sketches.

In this work, we resolve this question fully and provide a deterministic quantile sketch that uses the optimal $O(\epsilon^{-1})$ words. This is the first quantile sketch that goes beyond the comparison-based lower bound (in the natural regime of $n\leq\poly(U)$) and is the first direct improvement on the q-digest sketch in the 20 years since it was proposed. 

\begin{thm} \label{thm:main}
There exists a deterministic streaming algorithm for \Cref{prob:all-quantile-sketch} using $O(\epsilon^{-1})$ words (more specifically, $O(\epsilon^{-1} (\log (\epsilon n) + \log(\epsilon U)))$ bits) of space\footnotemark. %
\end{thm}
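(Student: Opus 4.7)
My plan is to maintain a dynamic partition of the universe $[U]$ into $k = O(\epsilon^{-1})$ disjoint dyadic intervals, each equipped with an estimated count of the stream elements belonging to it. Specifying each dyadic interval requires $O(\log U)$ bits (as a level/offset in the implicit balanced binary tree over $[U]$) and the count requires $O(\log n)$ bits, so the total space is $O(\epsilon^{-1})$ words. Rank queries at a point $x$ are answered by summing the counts of all intervals entirely to the left of $x$ and adding a linearly interpolated contribution from the interval containing $x$; the resulting error is dominated by the count of a single interval.

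The invariant I would maintain is that every interval's count stays below $\Theta(\epsilon t)$ at time $t$, which already yields the $\epsilon t$ error bound on any rank query. When a new element $x$ arrives at time $t+1$, I would locate the containing interval $I_j$ in $O(\log k)$ time and increment its count. If the threshold $\epsilon(t+1)$ is exceeded, I would split $I_j$ at its dyadic midpoint, distributing the count evenly between the two children. Symmetrically, when two adjacent dyadic-sibling intervals together hold few enough elements they can be merged into their parent; this keeps the total number of intervals bounded by $O(\epsilon^{-1})$ while respecting the dyadic tree structure, which one is free to exploit since this is the non-comparison model.

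The main obstacle will be bounding the compounded error from repeated splits, since distributing a count of $c$ evenly between two children can leave each child off from its true count by up to $c/2$, and this bias persists into all future queries touching that region. Absent extra care, a chain of splits along a root-to-leaf path could accumulate substantial error. To control this, I would exploit the dyadic structure: each split halves the universe size, so per-split biases are geometrically smaller at deeper levels, while any single query interacts with at most $O(\log U)$ nodes of the tree. A potential-function or charging argument, attributing each unit of cumulative query error to a specific past split event along the implicit root-to-leaf path, should then show the total error stays $O(\epsilon t)$. A matching argument must verify that merges do not violate the invariant, which is straightforward if merges are only triggered when the combined count is well below the threshold. Combining this error analysis with the space accounting yields the deterministic sketch claimed in Theorem~\ref{thm:main}.
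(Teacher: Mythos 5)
There is a genuine gap, and it sits exactly at the point you flag as ``the main obstacle.'' When you split an interval whose count has reached the threshold $\Theta(\e t)$ and distribute the count evenly between the two dyadic children, the bias you introduce is up to half that \emph{count}, i.e.\ $\Theta(\e t)$ in rank space. This quantity has nothing to do with the length of the interval being split, so your claim that ``per-split biases are geometrically smaller at deeper levels'' is false: every split, at every depth, occurs when the count is at the threshold, and deeper splits happen \emph{later} in the stream, so their biases are, if anything, larger. A fixed query point $x$ can lie inside an interval that is split $\Theta(\log(\e U))$ times over the course of the stream, and the misattributed mass from these splits accumulates additively (indeed it can compound, since the count being redistributed at a deeper split already contains the parent's misattributed half). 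The worst-case query error is therefore $\Theta(\e n \log(\e U))$, not $O(\e t)$; rescaling $\e$ to compensate inflates the number of intervals to $\Theta(\e^{-1}\log(\e U))$, which is precisely the q-digest bound the theorem is meant to beat. No potential-function or charging argument can rescue the scheme as stated, because the information about how the $c$ elements are actually distributed inside the interval has been irrecoverably discarded at split time.

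The paper's construction is built around overcoming exactly this. It keeps the q-digest discipline of capping each node of the dyadic tree at capacity $\al = \e n/\log(\e U)$, so that the $\log(\e U)$ ancestors of a query point contribute total error only $\e n$ --- this is the step your flat histogram is missing. The price is $\Theta(\e^{-1}\log(\e U))$ nonempty nodes, each naively needing a counter; the paper's main idea is to avoid storing those counters by batching insertions and maintaining the partial counts of the ``exposed'' nodes with a \emph{recursive} quantile sketch over a logarithmically smaller universe (the exposed nodes themselves), so that after each batch is flushed, every node in the outer tree is either exactly full or exactly empty and needs no counter at all. Iterating this $\log^*(\e U)$ times with carefully tuned per-level error parameters is what brings the space to $O(\e^{-1})$ words. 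Your proposal contains neither the per-node capacity mechanism that controls the root-to-leaf error nor any mechanism for learning where elements fall within an interval before committing them, so it does not establish the theorem.
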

\footnotetext{Here, technically, when we write $\log(\epsilon n)$ and $\log(\epsilon U)$, it really should be $\max\{\log (\epsilon n), 1\}, \max\{\log (\epsilon U), 1\}$ to avoid the uninteresting corner cases.}

\begin{table}[h]
    \centering
    \renewcommand{\arraystretch}{1.2} 
    \begin{tabular}{>{\centering}m{2.5cm}|>{\centering}m{3.2cm}|c|>{\centering\arraybackslash}m{6cm}}
    \textbf{Algorithm} & \textbf{Type} & \textbf{Space (words)} & \textbf{Space (bits)} \\
    \hline
    GK sketch \cite{greenwald2001space}& deterministic comparison-based  & $O(\epsilon^{-1} \log (\epsilon n))$ & $O(\epsilon^{-1} (\log^2 (\epsilon n) + \log(\epsilon n) \cdot \log U))$\\
    \hline
    q-digest \cite{shrivastava2004medians} & deterministic bounded-universe & $O(\epsilon^{-1} \log U)$ & $O(\epsilon^{-1} (\log^2 U + \log(\epsilon n) \cdot \log U))$\\
    \hline 
    KLL sketch \cite{karnin2016optimal} & randomized comparison-based & $O(\epsilon^{-1} \log \log (1 / \delta))$ & $O(\epsilon^{-1} \log \log (1 / \delta) \cdot (\log \log (1 / \delta) + \log U) + \log n)$\\
    \hline
    Our algorithm (\cref{thm:main}) & deterministic bounded-universe & $O(\epsilon^{-1})$ & $O(\epsilon^{-1} (\log (\epsilon n) + \log (\epsilon U)))$\\
    \end{tabular}
    \caption{The word and bit complexity of quantile sketches.}
    \label{table:complexity}
\end{table}

Our sketch uses less space than not only the deterministic q-digest and GK sketches but also the randomized KLL sketch, when compared in words. Note that randomized algorithms, like KLL sketch, have failure probabilities and retain their theoretical guarantee only against non-adaptive adversaries. The fact that our algorithm is deterministic provides stronger robustness. As these sketches are already implemented in practice, we hope that our algorithm can help improve the performance of these libraries.

\subsection{Discussion and further directions} \label{sec:further-dirs}

\paragraph{Optimality of our algorithm.} As we discussed earlier, the quantile sketch lower bound of $\Omega(\eps^{-1})$ words only holds in the regime where $n\leq\poly(U)$. However, we conjecture that our algorithm is optimal in general for deterministic algorithms. Specifically, there is a simple example showing any sketch for \Cref{prob:all-quantile-sketch} requires at least $\epsilon^{-1} \log(\epsilon U)$ bits (see \Cref{sec:lb}), but we also need to show a lower bound of $\epsilon^{-1} \log (\epsilon n)$ bits. We make the following conjecture about deterministic parallel counting, which would imply our lower bound because any algorithm for \Cref{prob:all-quantile-sketch} can also solve the $k$-parallel counters problem for $k = \Theta(1 / \epsilon)$.

\begin{conj}[Deterministic parallel counters] \label{conj:parallel-count}
We define the $k$-parallel counters problem as following: There are $k$ counters initiated to $0$. Given a stream of increments $i_1, i_2, \dots, i_n \in [k]$ where $i_t$ means to increment the $i_t$-th counter by $1$, the algorithm has to output the final count of each counter up to an additive error of $n / k$.

We conjecture that any deterministic algorithm for this problem requires at least $\Omega(k \log (n / k))$ bits of memory. 
\end{conj}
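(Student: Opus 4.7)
The plan is to establish this lower bound via a direct encoding argument: construct a family of $2^{\Omega(k \log(n/k))}$ input streams and argue that any correct deterministic algorithm's memory state at some carefully chosen intermediate time must be distinct on each of them. The structural feature to exploit is that $n$ is not given to the algorithm in advance, so the memory at every intermediate time $t$ must already be capable of producing a valid output with error budget only $t/k$, since the stream could end there. This lets us extract information at many different scales from a single stream.

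Concretely, I would partition the stream into $R = \Theta(\log(n/k))$ phases of geometrically growing lengths, with phase $j$ containing $\Theta(k \cdot 2^j)$ increments, so the total length after phase $j$ is $T_j \asymp k \cdot 2^j$ and the end-of-phase error budget is $\Theta(2^j)$. Inside each phase $j$, I would encode $\Theta(k)$ bits by partitioning the $k$ counters into a ``high'' group and a ``low'' group (balanced, so the total phase length is constant) and allocating each counter a number of increments depending on its group, with the gap between ``high'' and ``low'' chosen a large constant factor above the end-of-phase error budget $\Theta(2^j)$. Across all $R$ phases this packs $\Omega(Rk) = \Omega(k \log(n/k))$ bits into a single input stream.

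The crucial step is then to show that two distinct encodings $x \ne x'$ force distinct algorithm states. Let $j^\ast$ be the largest phase at which $x$ and $x'$ disagree. By construction the two truncated streams at time $T_{j^\ast}$ differ by strictly more than twice the error budget $T_{j^\ast}/k$ on at least one counter, so the algorithm's state at time $T_{j^\ast}$ must differ on the two streams (otherwise the common output would violate the error guarantee on at least one of them). Since the suffixes after phase $j^\ast$ are identical, the states at $T_R$ must also differ. Counting distinct encodings $x$ then gives the desired $2^{\Omega(Rk)}$ lower bound on the number of reachable states.

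The main obstacle I expect is making this separation work \emph{simultaneously} for many pairs $(x, x')$, i.e., amplifying ``some pair of states differ'' to ``the total number of reachable states is $2^{\Omega(Rk)}$.'' The naive per-phase gap $\sim 2^{j}$ only barely exceeds the error budget, and accumulated contributions from earlier phases can shrink the effective signal at counter $i$ to well below the threshold. Handling this seems to require taking the encoding $x$ from a sub-code with large minimum $\ell_\infty$-distance \emph{at every truncation level} $T_j$ simultaneously, which may force a slight loss in rate per phase that must be checked not to dominate. A backup approach, which I would also explore, is to reduce directly from a round-structured communication-complexity problem such as augmented indexing: each phase becomes one round, the algorithm's memory serves as the message, and round elimination delivers the bit lower bound in a single stroke — this seems cleaner but requires setting up the hard distribution on increment streams so that the standard round-elimination lemmas apply.
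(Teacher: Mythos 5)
First, a point of context: the statement you are proving is \Cref{conj:parallel-count}, which the paper explicitly leaves as an open conjecture --- there is no proof in the paper to compare against, only a citation to a \emph{randomized} lower bound of $k\cdot\min(\log(n/k),\log\log(1/\delta))$ bits from \cite{aden2022amortized}. So any complete argument here would be a new result, and it is worth being correspondingly skeptical of a short encoding argument.

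The fatal gap is the sentence ``Since the suffixes after phase $j^\ast$ are identical, the states at $T_R$ must also differ.'' This assumes the deterministic transition map is injective, which it is not: two distinct memory states can map to the same state after reading the same suffix. All your argument legitimately establishes is that the states at time $T_{j^\ast}$ differ (and even that only after fixing the cancellation issue you flag). By the time the error budget has grown to $n/k$, the algorithm is entitled to have forgotten the phase-$j^\ast$ information entirely, and nothing in the argument prevents the corresponding states from merging. Once this propagation step is removed, the fooling set at any single time $t$ is limited by a packing bound: count vectors summing to $t$ with pairwise $\ell_\infty$-separation $2t/k$ number only $2^{O(k)}$ (the simplex has $\binom{t+k-1}{k-1}\approx (et/k)^{k-1}$ lattice points and each $\ell_\infty$-ball of radius $t/k$ contains $\approx (t/k)^{k-1}$ of them), so you recover only $\Omega(k)$ bits, not $\Omega(k\log(n/k))$. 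The entire content of the conjecture is precisely that the algorithm must \emph{retain} fine-grained early information that it is no longer obligated to output --- i.e., that states cannot merge --- and neither of your proposed fixes addresses this: a code with large $\ell_\infty$-distance at every truncation level only repairs the (secondary) cancellation problem, and an augmented-indexing reduction hits the same wall, since the single message Bob receives corresponds to the final state, whose error guarantee is too coarse to decode any bit from phase $j < R$. (Separately, note that your argument also silently assumes the algorithm must be correct at every intermediate time $t$ with budget $t/k$; this is the right reading for the paper's intended reduction from quantile sketching, but it is an assumption worth making explicit, since the conjecture as stated only requires the final counts.)
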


This conjecture essentially says that to maintain $k$ counters in parallel, one needs to maintain each counter independently. A recent paper by Aden-Ali, Han, Nelson, and Yu~\cite{aden2022amortized} studies this problem for randomized algorithms. The authors of that paper proved that any randomized algorithm with failure probability $\delta$ must use at least $k \cdot \min(\log (n / k), \log \log (1/\delta))$ bits when $\log \log (1/\delta) = \Omega(k)$. Setting $1 / \delta = 2^{2^k}$, this directly implies a $\Omega(\min\{\epsilon^{-1} \log (\epsilon n), \epsilon^{-2}\})$-bit lower bound for any algorithm solving \Cref{prob:all-quantile-sketch}. Thus, our algorithm is also optimal in the regime when $\epsilon^{-1} > \log(\epsilon n)$.

\paragraph{Improvements in the randomized setting.}

Deterministic algorithms are used at the heart of the randomized ones. Many randomized algorithms (including the algorithm by Felber and Ostrovsky \cite{felber2017randomized}, the KLL sketch \cite{karnin2016optimal}, and the mergeable summary of \cite{agarwal2013mergeable}) follow the paradigm of first sampling a number of elements from the stream and then maintaining them with a careful combination of deterministic sketches. 

As long as $n\leq \poly(U)$, our algorithm is optimal even in the randomized setting, but when this condition is not met, it is possible to do better in the randomized setting. If $n$ is known in advance, one can simply sample $\frac{\log 1 / \delta}{\epsilon^2}$ elements and feeds them into our sketch.\footnote{If $n$ is not known is advance, instead of simple sampling, one can replace the use of GK sketch in KLL with our algorithm. As the compactor hierarchy part of KLL stores only $O(1 / \epsilon)$ elements, it results in the same space complexity as the known $n$ case.} It uses a memory of $O(\epsilon^{-1} (\log \log (1/\delta) + \log U) + \log n)$ bits, which strictly improves that of the KLL sketch. We note that, in the most common regime where $\delta > 1/2^{\epsilon n}$, there is a $\Omega(\epsilon^{-1} (\log \log (1/\delta) + \log \epsilon U))$-bit lower bound for streaming quantile sketches.\footnote{This follows from  the $\epsilon^{-1} \log\epsilon U$ lower bound in  \Cref{sec:lb} (which holds for both deterministic and randomized algorithms), and the aforementioned $k \cdot \min(\log(n/k), \log \log (1/\delta))$ %
 lower bound in \cite{aden2022amortized} (setting $k = 1/\epsilon$).}  So our algorithm is also very close to optimal in the randomized setting as well. 

\subsection{Related works}

\paragraph*{More on quantile sketches.} Early works on quantiles sketches include \cite{munro1980selection,alsabti1997one,manku1998approximate}. Among them, the MRL sketch \cite{manku1998approximate} and its randomized variant from \cite{agarwal2013mergeable} lead to the aforementioned KLL sketch. Another variant of the problem is the \emph{biased} quantile sketches (also called \emph{relative error} quantile sketches), meaning that for queries of rank $r$, the algorithm can only have an error of $\epsilon r$ instead of $\epsilon n$. That is, we require that the $0.1\%$ quantiles are extremely accurate, while the $50\%$ quantile can allow much more error. This question was raised in~\cite{gupta2003counting}; since then, people have proposed deterministic~\cite{cormode2006space,zhang2007efficient} and randomized~\cite{cormode2021relative} algorithms for this problem. There are also other variants such as sliding windows~\cite{arasu2004approximate}, weighted streams~\cite{assadi2023generalizing} and relative value error~\cite{masson2019ddsketch}. In practice, there are also the t-digest sketch~\cite{dunning2019computing} and the moment-based sketch~\cite{gan2018moment}, which do not have strict theoretical guarantees. In particular, \cite{cormode2021theory} shows that there exists a data distribution, such that even i.i.d.\ samples from that distribution can cause t-digest to have arbitrarily large error.

\section{Preliminaries} \label{sec:prelim}

\subsection{Definitions for streams}

Define the \textit{rank} of an element $x$ in a stream $\pi$, denoted $\rank_\pi(x)$, to be the total number of elements of $\pi$ that are less than or equal to $x$. We also define a notion of distance between two streams. For two streams $\pi, \pi'$ of equal length, define their \textit{distance} as follows:
\[d(\pi, \pi') = \max_{x \in [1, U]} \abs{\rank_\pi(x) - \rank_{\pi'}(x)}.\] 
We observe that this distance satisfies some basic properties, i.e., the triangle inequality, and subadditivity under concatenation of streams:
\begin{obs}[Triangle inequality] \label{obs:tri-ineq}
For all streams $\pi, \pi', \pi''$ of the same length,
\[d(\pi, \pi') \le d(\pi, \pi'') + d(\pi'', \pi')\]
\end{obs}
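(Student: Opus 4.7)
The statement follows immediately from the standard triangle inequality on the real line, applied pointwise in $x$, and then taking a maximum. Concretely, my plan is to fix an arbitrary $x \in [1, U]$, write $a = \rank_\pi(x)$, $b = \rank_{\pi'}(x)$, $c = \rank_{\pi''}(x)$, and observe that $|a - b| \le |a - c| + |c - b|$ by the usual triangle inequality on $\R$. Each of the two terms on the right is bounded above by $d(\pi, \pi'')$ and $d(\pi'', \pi')$ respectively, by the definition of $d$ as a maximum over $x$. Thus for every $x$ we have
\[\abs{\rank_\pi(x) - \rank_{\pi'}(x)} \le d(\pi, \pi'') + d(\pi'', \pi').\]

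Since the right-hand side is independent of $x$, taking the maximum of the left-hand side over $x \in [1, U]$ preserves the inequality, which yields exactly $d(\pi, \pi') \le d(\pi, \pi'') + d(\pi'', \pi')$, as desired. There is no real obstacle here; this is the standard proof that any $\ell_\infty$-style distance on functions inherits the triangle inequality from the triangle inequality on scalars. The only thing worth noting is that we must have all three streams of the same length so that all three rank functions are defined on the same domain $[1, U]$, which is given by the hypothesis of the observation.
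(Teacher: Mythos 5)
Your proof is correct and is exactly the standard argument the paper has in mind (the paper states \cref{obs:tri-ineq} without proof, as the pointwise triangle inequality on $\R$ followed by taking the maximum over $x$ is immediate). Nothing to add.
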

\begin{obs} \label{obs:subadd}
For all streams $\pi, \pi'$ of the same length and $\rho, \rho'$ of the same length,
\[d(\pi \circ \rho, \pi' \circ \rho') \le d(\pi, \pi') + d(\rho, \rho'),\]
where $\pi \circ \rho$ denotes concatenation of the streams $\pi$ and $\rho$.
\end{obs}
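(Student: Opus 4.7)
The plan is to reduce the inequality to the classical triangle inequality for $|\cdot|$ on $\R$, applied pointwise in the query value $x$. The single substantive fact I need is that rank is additive under concatenation: for every $x \in [1, U]$,
\[\rank_{\pi \circ \rho}(x) = \rank_\pi(x) + \rank_\rho(x),\]
which follows immediately from the definition of $\rank$, since each position of $\pi \circ \rho$ lies in exactly one of $\pi$ or $\rho$ and contributes to the rank if and only if its value is at most $x$. The same identity holds for $\pi' \circ \rho'$. Note that the length-matching hypotheses guarantee that $\pi \circ \rho$ and $\pi' \circ \rho'$ have the same length, so $d(\pi \circ \rho, \pi' \circ \rho')$ is well-defined.

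Given the additivity, for each fixed $x$ I would rewrite
\[\rank_{\pi \circ \rho}(x) - \rank_{\pi' \circ \rho'}(x) = \bigl(\rank_\pi(x) - \rank_{\pi'}(x)\bigr) + \bigl(\rank_\rho(x) - \rank_{\rho'}(x)\bigr),\]
and apply the triangle inequality for absolute values to obtain the pointwise bound
\[\bigl|\rank_{\pi \circ \rho}(x) - \rank_{\pi' \circ \rho'}(x)\bigr| \le \bigl|\rank_\pi(x) - \rank_{\pi'}(x)\bigr| + \bigl|\rank_\rho(x) - \rank_{\rho'}(x)\bigr|.\]
Each summand on the right is at most the corresponding maximum over $x$, namely $d(\pi, \pi')$ and $d(\rho, \rho')$ respectively, so the right-hand side is uniformly bounded by $d(\pi, \pi') + d(\rho, \rho')$ for every $x$.

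Taking the maximum over $x \in [1, U]$ on the left then yields precisely $d(\pi \circ \rho, \pi' \circ \rho')$ on the left and leaves the bound $d(\pi, \pi') + d(\rho, \rho')$ on the right, completing the argument. There is essentially no obstacle here: the only conceptual ingredient is rank-additivity under concatenation, and everything else is a one-line application of the triangle inequality followed by taking a maximum.
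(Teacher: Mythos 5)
Your proof is correct and is exactly the argument the paper has in mind (the paper states this observation without proof, treating it as immediate): rank-additivity under concatenation, the pointwise triangle inequality, and a maximum over $x$. Nothing further is needed.
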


\subsection{Other notation}

Throughout this paper, we use standard asymptotic notation, including big $O$ and little $o$. For clarity, we sometimes omit floor and ceiling signs where they might technically be required. 

All logarithms in this paper are considered to be in base 2, and we define the \textit{iterated logarithm} $\log^*(m)$ to be the number of times we need to apply a logarithm to the number $m$ to bring its value below 1.

We also define the function $\cceil x$, for any $x \in \R^+$, to be the smallest power of 2 that is at least $x$. In particular, we always have $x \le \cceil x \le 2x$.

\section{Technical overview} \label{sec:overview}

In this section, we explain the main idea of our algorithm.

First, we get a few technical details out of the way. We will assume for now that we know $n$ beforehand. For this overview, we will focus on describing a data structure that uses $O(\eps^{-1}\log(\eps^{-1})\log\log U)$ words of memory. After that, we will briefly describe the modifications that we perform to bring the space complexity down to $O(\e^{-1})$ words. %

\paragraph{The eager q-digest sketch.}
Before explaining our algorithm, it would be instructive to first reivew the q-digest algorithm because our algorithm is based on it. At a high level, this data structure is a tree where every node represents some subset of the stream elements received so far. The node doesn't store each element exactly, but only an interval that contains all of the elements it represents and a count of how many elements it represents. The version we describe slightly differs from the typical treatment, and we call it eager q-digest. The data structure has the following structure and supports the following operations.

\begin{itemize}
    \item \textbf{Structure:} The eager q-digest is a binary tree of depth $\log U$. The nodes in the bottom level of the tree (which we call the \textit{base level}) correspond left-to-right to each element $1, 2, \dots, U$ in the universe. Each non-base level node corresponds to a subinterval of $[1, U]$ consisting of the base level nodes below it. Each node $u$ represents a subset of $W[u]$ elements ($W[u]$ is the weight/count of the node) that have been received so far; that is, when an element is inserted, it increments the counter $W[u]$ at some node. The $W[u]$ elements that $u$ represents must all be within the node's interval.
    
    \item \textbf{Insertion:} We insert elements into the tree top-down as follows: upon receiving an element $x\in [1,U]$, look at the path from the root to $x$ and increment the counter $W[u]$ of the first non-full node $u$. A node is \emph{full} if its weight is already at capacity, which we set to be $\alpha:=\frac{\epsilon n}{\log U}$. Base level nodes are permitted to exceed capacity.
    
    \item \textbf{Rank queries:} We are given an element $x\in [U]$ for which we want to return the rank. To do this, answer with the total weight of everything on the path from the root to the base node $x$ and everything to the left of that path in the tree. All the elements inserted in nodes to the left of this path must have been less than $x$ (since their intervals only contain elements less than $x$) and all the elements inserted to the right must be larger. As such, the error in the rank estimate is only the sum of nodes along the path (not including $x$), which is bounded by the depth of the tree times the weight of each node above $x$, at most $\alpha \log U = \eps n$.

    \item \textbf{Quantile queries:} We are given a rank $r \in [n]$ for which we want to return an element between the rank-$(r-\e n)$ and the rank-$(r+\e n)$ element of the stream. The ability to do this follows from the ability to answer rank queries, since we can simply perform a binary search.\footnote{This is true in a black-box way; see Section~\ref{sec:practice} for details.}
\end{itemize}

Let us look at an example of an eager q-digest. Each node has capacity (maximum weight) $\al=5$ for this example.

\begin{figure}[ht]
\centering
\begin{tikzpicture}[yscale=0.9,
    level/.style={sibling distance=60mm/#1}, 
    every path/.style={line width=1pt},
    every node/.style={outer sep = -0.5pt},
    filledNode/.style={circle, draw, fill=black, minimum size=4mm}]
    
    \node[filledNode, regular polygon, inner sep=3, regular polygon sides=3,draw,label={[font=\scriptsize]right:\parbox{0.8cm}{$[1,4]$ \\ $W$\hspace{-1pt}$=$5}}] (top) {}
        child {node[filledNode, label={[font=\scriptsize]right:$[1,2]$}] (left1) {}
          child {node[filledNode, label={[font=\scriptsize]right:$[1,1]$}] (left11) {}}
          child {node[filledNode, label={[font=\scriptsize]right:$[2,2]$}] (left12) {}}
        }
        child {node[filledNode, rectangle,draw, scale=0.8, label={[font=\scriptsize]right:\parbox{0.8cm}{$[3,4]$ \\ $W$\hspace{-1pt}$=$5}}] (right1) {}
          child {node[filledNode, star,draw, label={[font=\scriptsize]right:\parbox{0.8cm}{$[3,3]$ \\ $W$\hspace{-1pt}$=$3}}] (right11) {}}
          child {node[filledNode, label={[font=\scriptsize]right:$[4,4]$}] (right12) {}}
        };
    
    \coordinate (A) at ([shift={(0,1.2)}]left1.north);
    \coordinate (B) at ([shift={(-1.6,-0.4)}]left11.south);
    \coordinate (C) at ([shift={(1.6,-0.4)}]left12.south);

    \draw[red,rounded corners=14mm] (A) to (B) to (C) -- cycle;
    \node at ([shift={(-1.9,-1.5)}]A) {\scriptsize{\parbox{1cm}{Total $W$\hspace{-1pt}$=$9}}} ;
\end{tikzpicture}
\caption{An example eager q-digest tree.}
\label{fig:eager-q-digest}
\end{figure}

In this example, triangle represents $5$ elements in the interval $[1,4]$, square represents $5$ more elements in the interval $[3,4]$, and star represents $3$ more elements exactly equal to $3$. If we insert the number $3$ into the example, it would not get inserted into triangle or square because they are full, and so it would be put into star and increment the count by $1$. If we want to then find the rank of the number $3$ (in the pictured tree exactly, before the insertion), we return the sum of the weights on the circled nodes plus the path to $x$, which is $9+5+5+3=22$. This can be off by at most $10$ -- we know the $9$ elements represented by the circled nodes are definitely less than $3$, the ones inserted directly to the star are exactly $3$, the ones to the right are definitely more than $3$. The ones inserted to the triangle and square are the only unknowns. 

\paragraph{Analyzing the space complexity of eager q-digest.}

The space complexity (in bits) of q-digest (and similarly of eager q-digest) is well known to be $O(\eps^{-1}((\log U)^2+\log U\log n))$. Let us understand why, so we can see where we might improve upon this. The space complexity is approximately the product of the following two things:
\begin{enumerate}[label=(\arabic*)]
    \item\label{itm:node-count} The number of non-empty nodes. This is at most $O(\eps^{-1} \log U)$ since the number of full nodes (which is within a constant factor of the number of non-empty nodes) is $n/\alpha=\eps^{-1} \log U$.
    \item\label{itm:node-storage} The amount of space necessary per non-empty node. Naively, we would need to store the location of each nonempty node (the interval it corresponds to) and the weight of the node (the number of stream elements it corresponds to). This would take $\log U+\log n$ space.
\end{enumerate}

As such, in total the space complexity is $O(\eps^{-1}(\log U)^2+\log U\log n)$. In our sketch, we do not reduce~\ref{itm:node-count}, the number of nodes. Therefore, we must reduce the storage in~\ref{itm:node-storage} substantially. This has two parts: efficiently storing the corresponding interval (location in the binary tree) of each node and efficiently storing the count.

It is actually quite simple to store the interval/location of each node: To see this, notice that the non-empty nodes form a connected tree of their own within the large binary tree.  Since the tree is binary, storing the edge from a parent to child in the tree of nonempty nodes takes only $O(1)$ space.
This observation is quite straightforward from the way we formulated q-digest, but the usual implementation of q-digest doesn't push to the top eagerly, and so is unable to directly save this $\log U$ term.

\paragraph{The main challenge: avoiding storing counters.} The second challenge is to avoid storing a counter $W[u]$ at each node. One useful observation about the structure of the tree of non-empty nodes is that all internal nodes are full (at capacity) and only its leaves, which we call \emph{exposed nodes}, need counters. Unfortunately, a constant fraction of the non-empty nodes are exposed nodes, so this doesn't actually save on space. 

Another idea is to store only an approximate count at each node. Unfortunately, we cannot just store an independent approximate count at each node, or even only a counter that estimates when the count surpasses the threshold $\alpha$; this is impossible to do deterministically without using $\log \alpha$ space (which is too large). Even in the randomized setting, approximately counting each node independently does not improve upon KLL.

\begin{figure}[ht]
\centering

\begin{tikzpicture}[yscale=0.9,
  level/.style={sibling distance=60mm/#1},
  every node/.style={outer sep = -0.5pt},
  every path/.style={line width=1pt},
  circleNode/.style={circle, draw, fill=black, minimum size=4mm},
  emptyNode/.style={circle, draw, minimum size=4mm}
]
  \node[circleNode, label={[font=\scriptsize]right:\parbox{0.8cm}{$[1,8]$ \\ $W$\hspace{-1pt}$=$\hspace{-0.5pt}$\alpha$}}] (root) {}
    child {node[circleNode, label={[font=\scriptsize]right:\parbox{0.8cm}{$[1,4]$ \\ $W$\hspace{-1pt}$=$\hspace{-0.5pt}$\alpha$}}] {}
      child {node[circleNode, label={[font=\scriptsize]right:\parbox{0.8cm}{$[1,2]$ \\ $W$\hspace{-1pt}$=$\hspace{-0.5pt}$\alpha$}}] {}
        child {node[emptyNode, label={[font=\scriptsize]right:\parbox{0.8cm}{$[1,1]$ \\ $W$\hspace{-1pt}$=$?}}] {$v_1$}}
        child {node[emptyNode, label={[font=\scriptsize]right:\parbox{0.8cm}{$[2,2]$ \\ $W$\hspace{-1pt}$=$?}}] {$v_2$}}
      }
      child {node[emptyNode, label={[font=\scriptsize]right:\parbox{0.8cm}{$[3,4]$ \\ $W$\hspace{-1pt}$=$?}}] {$v_3$}} %
    }
    child {node[circleNode, label={[font=\scriptsize]right:\parbox{0.8cm}{$[5,8]$ \\ $W$\hspace{-1pt}$=$\hspace{-0.5pt}$\alpha$}}] {}
      child {node[emptyNode, below left=of \tikzparentnode, label={[font=\scriptsize]right:\parbox{0.8cm}{$[5,6]$ \\ $W$\hspace{-1pt}$=$?}}] {$v_4$}}
    };
\end{tikzpicture}
\caption{The tree formed by non-empty nodes in eager q-digest. (The filled nodes are the full nodes.)}
\label{fig:exposed-nodes}
\end{figure}

The situation is summarized above in \Cref{fig:exposed-nodes}. At each of the exposed nodes, denoted $v_1,v_2, \ldots, v_\ell$, we want to store some approximate version of counters $W[v_1],W[v_2],\ldots, W[v_\ell]$ that represent how many elements are inserted into that node using significantly less than $\log{n}$ space, ideally $O(1)$ space.

For simplicity, assume that elements are received in ``batches'' of size $\widehat{n}$ (to be determined later), which we can use unlimited space to process. Our only constraint is to minimize storage space between batches.
Let us assume that before the batch, all the counters $W[v_1],W[v_2],\ldots, W[v_\ell]$ are less than $\alpha/2$ and set $\widehat{n}=\alpha/2$ so the set of exposed nodes won't change within the batch. 
At the end of the batch, we need to find suitable approximate values $\widehat{W}[v_1],\widehat{W}[v_2]\ldots, \widehat{W}[v_\ell]$ to increment the counters by, based on the true counts $C[v_1],C[v_2],\ldots, C[v_\ell]$ of the stream elements.

Let us quantify how ``inaccurate'' these approximate counts can be compared to the true counts. The amount of additional error (in rank-space) introduced by answering a rank query for some universe element below a node $v_i$ should be at most $\eps \widehat{n}$ -- we can tolerate this much because it only doubles $\eps$ and we could've chosen $\eps$ to be half as big at the start. The value of this rank query, or the total weight of all the nodes to the left of $v_i$ and the path to $v_i$ changes by $\Big|\Big(\widehat{W}[v_1]+\ldots+\widehat{W}[v_i]\Big)-\Big(C[v_1]+\ldots+ C[v_i]\Big)\Big|$, and so we need to ensure that, for all $i$,
\begin{equation} \label{eq:wi}
    \Big|\Big(\widehat{W}[v_1]+\ldots+\widehat{W}[v_i] \Big)-\Big(C[v_1]+\ldots+ C[v_i]\Big)\Big| < \eps \widehat{n}.
\end{equation}

Here is a simple way to make that happen: Take the $0$-th element, the $(\eps \widehat{n})$-th element, the $(2\eps \widehat{n})$-th element and so on, and increment the counters $W[v_i]$ corresponding to those elements each by $\eps \widehat{n}$. Then, Equation~\ref{eq:wi} is satisfied, and also the counters can be stored in $O(\log(\eps^{-1}))$ bits since they are always multiples of $\eps \widehat{n}=\eps \alpha/2$ and so only have $2\eps^{-1}$ possibilities. 

\paragraph{The main idea: recursive quantile sketch.}
Of course, the glaring issue is how to find (an approximation of) the $0$-th element, the $(\eps \widehat{n})$-th element, the $(2\eps \widehat{n})$-th element and so on, or at least which $v_i$ each one corresponds to, \emph{without} storing the entire batch of $\widehat{n}$ elements. In particular, we have reduced to the following problem: We receive $\widehat{n}$ elements in a stream in the universe $\{v_1, \ldots, v_\ell\}$, and we need to return the approximate $0$-th element, the $(\eps \widehat{n})$-th element, the $(2\eps \widehat{n})$-th element and so on. These are just quantile queries! In particular, we need a quantile sketch on a universe of size $\ell$ receiving $\widehat{n}$ elements. The new universe size $\ell$ is at most the number of exposed nodes of the eager q-digest, which is at most $\eps^{-1}\log U$, and so we have a big saving -- the new quantile sketch is on a logarithmically smaller universe, and so even naively using eager q-digest for the inner sketch will save space.

This solves the problem. The outer quantile sketch requires only $O(\eps^{-1}\log(\eps^{-1})\log U)$ space because it needs $O(\log(\eps^{-1})$ space per node, and the inner sketch requires only $\eps^{-1}\log\log U(\log\log U +\log \widehat{n})$ space because its universe size is $\log U$. Both of these are within $O(\eps^{-1}\log(\eps^{-1})\log\log U)$ words of memory. An illustration of the recursive step is shown in \Cref{fig:recursive-step}, where we build a new eager q-digest whose universe is the exposed nodes of our original eager q-digest. This new eager q-digest will process $\widehat{n}$ elements and ultimately return the $0$-th element, the $\eps \widehat{n}$-th element, $2\eps \widehat{n}$-th element, and so on.

\begin{figure}[ht] \label{fig:recursive-step}
\centering

\begin{tikzpicture}[yscale=0.9,
  level/.style={sibling distance=35mm/#1},
  every path/.style={line width=1pt},
  every node/.style={outer sep = -0.5pt},
  circleNode/.style={circle, draw, fill=black, minimum size=4mm},
  emptyNode/.style={circle, draw, minimum size=4mm}
]
  \begin{scope}[local bounding box=tree1]
    \node[circleNode, label={[font=\scriptsize]right:\parbox{0.8cm}{$[1,8]$ \\ $W$\hspace{-1pt}$=$\hspace{-0.5pt}$\alpha$}}] (root1) {}
      child {node[circleNode, label={[font=\scriptsize]right:\parbox{0.8cm}{$[1,4]$ \\ $W$\hspace{-1pt}$=$\hspace{-0.5pt}$\alpha$}}] {}
        child {node[circleNode, label={[font=\scriptsize]right:\parbox{0.8cm}{$[1,2]$ \\ $W$\hspace{-1pt}$=$\hspace{-0.5pt}$\alpha$}}] {}
          child {node[emptyNode, xshift=-0.3cm, label={[font=\scriptsize]right:\parbox{0.8cm}{$[1,1]$ \\ $W$\hspace{-1pt}$=$0}}] {$v_1$}}
          child {node[emptyNode, xshift=0.3cm, label={[font=\scriptsize]right:\parbox{0.8cm}{$[2,2]$ \\ $W$\hspace{-1pt}$=$0}}] {$v_2$}}
        }
        child {node[emptyNode, label={[font=\scriptsize]right:\parbox{0.8cm}{$[3,4]$ \\ $W$\hspace{-1pt}$=$0}}] {$v_3$}}
      }
      child {node[circleNode, label={[font=\scriptsize]right:\parbox{0.8cm}{$[5,8]$ \\ $W$\hspace{-1pt}$=$\hspace{-0.5pt}$\alpha$}}] {}
      child {node[emptyNode, xshift=-0.5cm, label={[font=\scriptsize]right:\parbox{0.8cm}{$[5,6]$ \\ $W$\hspace{-1pt}$=$0}}] {$v_4$}}
    };
    \node at ($(root1) + (0, -5cm)$) {\bf\vdots};
  \end{scope}

  \draw[->, thick, line width=1.5pt, >=stealth] ([yshift=0mm] tree1.east) -- ++(1.5,0) ;

  \begin{scope}[xshift=7.5cm, local bounding box=tree2]
    \node[emptyNode, label={[font=\scriptsize]right:$[1,6]$}] (root2) {} %
        child {node[emptyNode, label={[font=\scriptsize]right:$[1,2]$}] {} %
            child {node[emptyNode, label={[font=\scriptsize]right:$[1,1]$}] {$v_1$}}
            child {node[emptyNode, label={[font=\scriptsize]right:$[2,2]$}] {$v_2$}}
        }
        child {node[emptyNode, label={[font=\scriptsize]right:$[3,6]$}] {} %
            child {node[emptyNode, label={[font=\scriptsize]right:$[3,4]$}] {$v_3$}}
            child {node[emptyNode, label={[font=\scriptsize]right:$[5,6]$}] {$v_4$}}
    };

    \draw[decorate,decoration={brace,amplitude=5pt},xshift=-0.4cm,thick] (3.4,-3.7) -- (-2.6,-3.7) node[midway, below=0.2cm] {\scriptsize{New base level nodes}};
  \end{scope}
\end{tikzpicture}
\caption{An inner eager q-digest tree whose universe is the exposed nodes of the original tree. (The filled nodes are the full nodes.)}

\end{figure}

\paragraph{Modifications to get the optimal bounds.}

We can iterate this construction recursively by building a new eager q-digest on the exposed nodes of the second eager q-digest. This process will continue to reduce the universe size nearly logarithmically each time. The number of layers before reaching a constant sized universe is roughly $\log^* U$, and so to get constant error and constant space, we will need to be careful with how we set the error fraction $\eps_i$ for each recursive layer and argue that the total size of the sketches converges.

We also made an assumption that when we started receiving the batch of $\widehat{n}$ elements, all the exposed nodes had weight at most $\alpha/2$. However, the node could have any weight $j\eps\alpha$. To deal with this, we need the lower level q-digest to deal with nodes getting ``overfilled.''

Our final algorithm also manages to get rid of $\log(\eps^{-1})$ factors in the space complexity. This takes a number of additional considerations. One is that the nodes cannot even store counts that require $O(\eps^{-1})$ bits, but truly need to just be either empty or full. To deal with this, we will increase the batch size to $\widehat{n}\alpha$ but now we will need to deal with nodes getting overfilled again. A second issue is that, as described, at the last layer of recursion, the number of nodes would be $\eps^{-1}\log(\eps^{-1})$, which is slightly too large. To deal with this, we will have to use an optimized eager q-digest, which we discuss in Section~\ref{sec:eager}.

\section{Warm up: optimized eager q-digest}  \label{sec:eager}

In this section, we will describe the \textit{optimized eager q-digest} algorithm. This slightly improves the q-digest algorithm of \cite{shrivastava2004medians}. The space complexity of optimized eager q-digest will be fairly similar to that of q-digest (it achieves $O(\eps^{-1} \log \eps n \log \e U)$ bits instead of $O(\eps^{-1} (\log U + \log \eps n) \log U)$ bits). 

Although it does not contain the main idea of this paper, we need it as a building block of our algorithm. Also, we hope that this section can be a warm-up that familiarizes readers with our notation and the basics about q-digest.

Though we have already talked briefly about the eager q-digest in the technical overview, we will start anew in this section by building the algorithm up from the original q-digest, since we make several more modifications than what we described in that section.

\paragraph*{Tree structure of the original q-digest sketch.} In the original q-digest sketch of \cite{shrivastava2004medians}, there is a underlying complete binary tree $T$ of depth $\log U$. We say that those nodes at depth $\log U$ are at the \emph{base-level} of $T$. These nodes correspond (from left to right) to each element $1, 2, \dots, U$ in the universe.  

We label each node in $T$ with a subinterval of $[1, U]$. First, the base-level node corresponding to $i$ is labeled with $[i,i]$. For a node above $u$ the base level, its interval is the union of all its base-level descendants. For every node $u \in T$, it also has a weight $W[u]$ associated to it. Intuitively, one can think of the nodes $u \in T$ with weight $W[u]$ and interval label $[a_u, b_u]$ as a representative of $W[u]$ many elements in the stream that are within $[a_u, b_u]$. 

In the original q-digest all nodes $u$ except the base level nodes can have weight at most $W[u] \leq \alpha$. This is the \emph{capacity} of the node and is usually set to $\alpha = \frac{\epsilon n}{\log U}$. When there is an insertion of stream element $x$, the algorithm finds the base-level node $v$ whose interval equals $[x,x]$ and increases $W[x]$ by $1$.  This is always possible as there is no capacity constraint for base-level nodes. 

Since this tree $T$ has as many as $2U - 1$ nodes, the q-digest algorithm does not store the tree $T$ nor the labels. It only store the set $S$ of \emph{non-empty nodes}, those nodes $v$ with $W[v] > 0$. As there are more and more insertions, the set $S$ grows. Whenever $|S| > \frac{\log U}{\epsilon}$, the q-digest algorithm performs a \emph{compression}. 

One way of performing such compression is to find all nodes $u$ such that $W[u] > 0$ and $W[\mathrm{parent}(u)] < \alpha$, and move one unit of weight from $W[u]$ to $W[\mathrm{parent}(u)]$. After there is no such node $u$, let $F \subseteq S$ be the set of \emph{full nodes} $v$ with $W[v] = \alpha$. We know that $|F| \leq \frac{n}{\alpha} = \frac{\log U}{\epsilon}$. Now for every nonempty node $u \in S$, its parent must be a \emph{full node}. So compression gets the number of nonempty node down to $|S| \leq 3 |F|  = O\left(\frac{\log U}{\epsilon}\right)$. For every $u \in S$, the actual information stored by original q-digest are 1. the position of $u$ in the tree $T$ (which takes $\log U$ bits); 2. weight $W[v]$ (which takes $\log \alpha \approx \log (\epsilon n)$ bits).

Finally, for all these to make sense, we have to be able to answer rank queries. In order to estimate $\rank(x)$, we simply add up the weights $W[u]$ of all nodes $u$ whose intervals contain at least one element less or equal to $x$. This might overcount the number of actual stream elements which are at most $x$; any node whose interval contains both an element which is at most $x$ and greater than $x$ can contribute to the overcounting. These nodes are all (strict) ancestors of the node in the base level corresponding to $x$, so there are at most $\log U$ of them, and their total weight is thus at most $\alpha \cdot \log U$. Thus the answer to the rank query is off by at most $\alpha \cdot \log U \le \eps n$.

Now, having described the original q-digest algorithm, we will describe the modifications we make to it to get the optimized eager q-digest.

\paragraph*{Modification 0: Enforcing capacity constraints on base-level nodes. } In our algorithm, we will need every node, including those at the base level, to satisfy the capacity constraint. But an element $x \in [U]$ could potentially have multiplicity $> \alpha$ in the input stream. 

To handle this, rather than the trees ending at the base level, we let them continue as infinite paths (i.e., unary trees) descending from each node of the base level. Let $u$ be a base-level node that is labeled with interval $[x,x]$. All nodes on the infinite path below $u$ will also just be labeled $[x,x]$. As a sanity check, since we are not storing the tree $T$ anyway, it makes sense to be infinite. 

\paragraph*{Modification 1: Use a forest of $1 / \epsilon$ trees. } To improve the $\log U$ factor to $\log (\epsilon U)$, we have to equally divide the universe into $1 / \epsilon$ intervals and maintain a tree for each one. This gives us a forest of $1 / \epsilon$ trees, while allowing us to set $\alpha$ to $\frac{\epsilon n}{\log(\epsilon U)}$.\footnote{This is because the depth of each tree becomes at most $\log(\epsilon U)$ and the error for answering rank queries is at most the depth multiplied by $\alpha$.}

Roughly speaking, this change corresponds to removing the top $\log(1/\e)$ levels of the q-digest tree while keeping the levels below it. Although only offering a small improvement here, this is actually essential for our final algorithm. It is one of the ingredients that allow us to avoid the extra $O(\epsilon^{-1} \log (\epsilon^{-1}))$ term in the number of words used.

\paragraph*{Modification 2: Move weights up eagerly. } Next we describe how nodes are inserted into the eager q-digest. The original q-digest algorithm moves weight up the tree lazily; that is, it does so when the number of nodes stored exceeds its limit. By contrast, the eager q-digest will do so \textit{eagerly}: upon receiving an element of the stream, it will immediately move it up as much as possible.

More formally, when we receive an element $x$ of the stream, we do not increase the weight of the base-level node with interval $[x,x]$ as we would in a normal q-digest. Instead, we immediately move this weight up. That is, we pick the highest non-full node whose interval contains $x$, and we increment its weight by 1.

\paragraph*{Space Complexity: Full nodes and non-full nodes. }  We now look at the space complexity of optimized eager q-digest. An ordinary q-digest has to store, for every non-empty node, both its location in $T$ and its weight. However, in an optimized eager q-digest, the non-empty nodes are \textit{upward closed}; that is, every parent of a non-empty node is also non-empty. (In fact, every parent of an non-empty node is actually full, since otherwise the weight would have been pushed up to the parent.) Thus, the non-empty nodes form at most $1/\e$ trees which include the roots of their components in $T$. Storing the topology of a binary tree of size $k$ only requires space $k$ (it is enough to use $2$ bits for each node to record whether it has left/right child). Thus the total space required to describe the locations of the non-empty nodes is only $O(|S| + 1/\e)$ bits, where $|S|$ is the total number of non-empty nodes. 

At this point, for all the full nodes, we are already done. Since we know that their weight is exactly $\alpha$, there is nothing more to store. Since $|S| \leq 3|F| \leq \frac{3n}{\alpha} = O\left(\frac{\log(\epsilon U)}{\epsilon}\right)$, we are able to store all the full nodes with only $O(1 / \epsilon)$ words. However, there are still the non-full nodes in $S$. Since we have to store the weight for each of them, this takes $O(|S| \log \alpha) =  O\left(\frac{\log(\epsilon U)}{\epsilon} \cdot \log (\epsilon n)\right)$ space.

This completes the description of eager q-digest. We have saved an $|S| \log U$ term in the space complexity by not having to store the location of each non-empty node, but the $|S| \log \alpha$ term from storing the weights of non-full nodes in $S$ still remains. In the following section, the main idea of our algorithm is to recursively maintain these non-full nodes in $S$ with another recursive layer of our algorithm. When carefully implemented, we are able to ensure that every node in our trees are either full or empty, except at the very last layer of recursion. This removes the extra $|S| \log \alpha$ term. 

\section{Our \texorpdfstring{$O(\epsilon^{-1})$}{O(1/epsilon)}-word algorithm} \label{sec:main}

In this section, we will implement the sketch in \Cref{table:spec}, proving \cref{thm:main}. We assume throughout this section that $\e U$ is at least a sufficiently large absolute constant, since otherwise we can increase $U$ without affecting our asymptotic space complexity.

\begin{table}[h]
    \centering
\begin{tabular}{|m{12.8cm}|}
\hline
\textbf{Our $\epsilon$-approximate quantile sketch} \\
\vspace{0.2cm}
Space complexity: $O(\epsilon^{-1}(\log (\epsilon n) + \log(\epsilon U)))$ bits. \\
\vspace{0.2cm}
Supported operations:
\begin{itemize}
    \item $\Call{Insert}{x}$: Adds an element $x \in [U]$ to the stream. (\Cref{alg:insertion})
    \item $\Call{Rank}{x}$: Returns the rank of $x$ up to $\pm$ $\epsilon t$ error where $t$ is the number of elements in the current stream. (\Cref{alg:query})
\end{itemize}\\
In \Cref{sec:runtime}, we will show that each operation takes $O(\log (1 / \epsilon))$ amortized time under mild assumptions.\\
\hline
\end{tabular}   
    \caption{Our quantile sketch.}
    \label{table:spec}
\end{table}

To start with, we will also assume that we know an upper bound on $n$ (this upper bound will become $n_0$), and that it is sufficiently large (that is, $n_0 \ge n^*$, where $n^*$ is a function of $U, \e$). Furthermore, we will initially allow rank queries to have error up to $\e n_0$. We will maintain these assumptions until \cref{sec:unknown-n}, where we will then describe how to dispense with these assumptions.

We now outline how this section will proceed. In \cref{sec:structure,sec:insertion}, we describe the data structure, and how to handle insertions into the data structure, including how to merge layers of the data structure. In \cref{sec:err_merges}, we bound the error introduced into the data structure with each merge. Then, in \cref{sec:queries}, we describe how to perform rank queries and show a bound of $\e n_0$ on the error of a query. We next, in \cref{sec:unknown-n}, describe how to make our data structure work even when $n < n^*$, and also improve our bound on error of a query to $\e t$ (where $t$ is the size of the stream so far). In \cref{sec:params}, we pick the numerical parameters of our data structure such that the claims of the previous section hold. Finally, in \cref{sec:space,sec:runtime}, we analyze the space and time complexity of our algorithm, respectively.

\subsection{Structure of the sketch} \label{sec:structure}

As mentioned before, our sketch will be formed from recursive applications of the eager q-digest. We now define the structure of the recursive layers, which we number $0, 1, \dots, k$. 

\paragraph*{The $0$-th layer.} We start with the top layer (layer 0) and introduce our notation. The top layer has the same structure as an ordinary optimized eager q-digest forest. We call this underlying forest $T_0$. It has universe size $U_0 = U$ and error parameter $\eps_0 = \epsilon / 8$. We would like to emphasize that in optimized eager q-digest, $T_0$ is a forest with $1 / \epsilon_0$ infinite trees where most nodes have weight $0$. We call these nodes \emph{empty}. %

Whether empty or not, each node in this infinite forest is labeled with an interval. The $1 / \epsilon_0$ roots of the trees are labeled with $[1, \epsilon_0 U], [\epsilon_0 U + 1, 2 \epsilon_0 U], \dots, [(1 - \epsilon_0) U + 1, U]$, respectively. Then, if a node is labeled with interval $[a,b]$, its two children are labeled with $[a,(a+b - 1)/2]$ and $[(a + b + 1) / 2, b]$ respectively. (Since we assumed that $\epsilon$ and $U$ are powers of $2$, these are all integers.) As a special case, if $a = b$, the node is going to have only one child, labeled $[a,a]$.

In $T_0$, each node $u$ has a weight $W_0[u]$ that cannot exceed capacity $\alpha_0 = \eps_0 n_0 / \cceil{\log(\eps_0 U_0) + 1}$, where $n_0$ is an upper bound on $n$. We define the set of \emph{full nodes}, $F_0$, as the set of nodes that have a weight of exactly $\alpha_0$. Recall from optimized eager q-digest that we know $F_0$ is a upward-closed set of nodes and is therefore itself a forest of at most $1/\epsilon_0$ trees. (See \Cref{sec:eager} for details). We will enforce the invariant that every node in the tree $T_0$ is either \emph{full} or \emph{empty}. So nodes in $F_0$ are the only nodes in $T_0$ that we actually use and store. As mentioned before, this allows us to store each node with only a constant number of bits. 

Note that if we were to add new full nodes to this structure, the empty children of full nodes in $F_0$, as well as the empty roots of trees, are potentially positions for new nodes. We call these empty nodes the \emph{exposed nodes}. Formally, the \emph{exposed nodes} of $T_0$ is the set of empty nodes that do not have a full parent. For a concrete example, see the forest $T_0$ in \Cref{fig:main-example}. 

\paragraph{Intuition: Batch processing of insertions.} Let us first jump ahead and sketch the purpose of having layer $i$ $(1 \leq i \leq k)$. Imagine if we insert a new element in the stream. Then, an execution of the eager q-digest algorithm will increase the weight of one exposed node in $V_0$ to $1 \ll \alpha_0$. However, our algorithm cannot do the same, because it would break our invariant of having only full nodes in $T_0$. Instead, we maintain the exposed nodes $V_0$  with our recursive structure (layers $\geq 1$) and insert the new element into layers $\geq 1$. These recursive layers act like a ``buffer''; once they accumulate $n_1$ elements, we clear them and compress those elements into new full nodes in~$T_0$. 

In general, for layer $i$ ($1 \leq i < k$), we group $n_{i + 1}$ insertions in a \emph{batch} and insert them to layer $\geq i + 1$. After each batch, we compress the elements in layer $\geq i + 1$ into full nodes in layer $i$ and clear layer~$\geq i + 1$. Full details of how we handle insertion will be discussed in \Cref{sec:insertion}.

\paragraph*{The $i$-th layer ($1 \leq i \leq k$).} 

Roughly speaking, the upper part of the layer $i$ structure (which we call $T_i$) resembles an optimized eager $q$-digest forest with whose ``universe size'' is $U_i$, which is an upper bound on $|V_{i-1}|$ (when we pick the values of the parameters, we will prove this upper bound in \cref{claim:U-exposed}). %
At depth $h_i \coloneqq \log(\epsilon_i U_i)$, they have exactly $|V_{i-1}|$ nodes\footnote{Note that in an optimized eager $q$-digest, the base level contains $U_i$ nodes; we just remove the remaining $U_i - |V_{i-1}|$ nodes and their descendants, and also any inner nodes with no descendants remaining.}. Each such node $u$ will correspond to an exposed node $v \in V_{i - 1}$, in order (from left to right). We call this depth the \emph{base level} of $T_i$. This is the upper part of $T_i$.

For the interval labeling of the upper part, as each base level node $u$ corresponds to an empty node $v \in V_{i - 1}$, naturally, $u$ just inherits the interval label of $v$. Strictly above the base level, the interval of each node is the union of the intervals of its base-level descendants.  

Now we start to describe the lower part of $T_i$. Unlike the optimized eager $q$-digest, we will also allow $T_i$ to grow beyond the base level. (We give some intuition for this in \Cref{remark:lower_part}, which readers may skip on the first read.) For each base-level node $u$ that corresponds to $v \in V_{i - 1}$, we copy the empty infinite subtree of $v$ in $T_{i- 1}$, and put it as the subtree of $u$ in $T_i$. This also copies the interval labels on nodes in the subtree. For a concrete example, see the forests $T_1, T_2$ in \Cref{fig:main-example}. 

\begin{remark} \label{remark:correspondence}
    Because we copied the subtrees from $T_{i - 1}$, for any node $u$ below the base level (including the base level itself), there exists a unique node $u' \in T_{i - 1}$ corresponding to it. (We will soon see that $u'$ is in fact an empty node.)
\end{remark}

A node $u$ in $T_i$ has weight $W_i[u]$ and capacity $\alpha_i = \e_i n_i / \cceil{\log(\e_i U_i) + 1}$. We again call a node \emph{full} when it reaches its capacity. $F_i$ is defined to be the set of all full nodes in $T_i$. We maintain the similar invariant as layer~$0$:
\begin{center}
For all $0 \leq i < k$, the forest $T_i$ will contain either \emph{full} or \emph{empty} nodes.\footnote{Note when $i = k$, since there are no further recursive layers, we do not require the invariant for it. Insertions to $T_k$ are simply handled as in a normal optimized eager q-digest. (See \Cref{sec:insertion} for more details.)}
\end{center}
Note that this invariant means that, for layers $i < k$, instead of storing the weight map $W_i$, it suffices to only store $F_i$, since the contents of $W_i$ are determined by $F_i$.

Finally, $V_i$, the set of \emph{exposed nodes} of $T_i$, is defined as the set of of empty nodes which do not have a full parent (for $1 \leq i < k$)\footnote{For $i = k$, we define $V_k$ instead to be the set of non-full nodes without a full parent.}. Note that there may be some exposed nodes above the base level. (This results in a subtlety in the interval labels. See \Cref{remark:interval} for details. Readers may skip it on their first read.)

\begin{remark} \label{remark:lower_part}
Suppose that we do not allow the tree $T_i$ to grow beyond the base level. Then the total weight of it can be at most $2 \alpha_i |V_{i - 1}|$. In other words, layer $\geq i$ will not be able to handle more than that many intersections. But it turns out that we will later want to set $n_i \gg 2 \alpha_i |V_{i-1}|$, so we have to allow $T_i$ to grow beyond the base level. (More specifically, we want to set $n_i$ so that $\epsilon_i n_i \geq \alpha_{i - 1}$, which is essential for \Cref{lemma:round}.)
\end{remark}

\begin{remark} \label{remark:interval}
First, for the upper part of $T_i$, a node labeled with $[a,b]$ may not have children with evenly split interval labels ($[a,(a+b - 1) / 2]$ and $[(a + b + 1) / 2, b]$). This is clear since the labels of nodes above base level are derived bottom-up by taking the union of intervals at their base-level descendants. It is, though, tempting to think that for the lower part of $T_i$ (below the base level), all nodes labeled will have two children with equally split intervals. This, however, is also not always the case. It is possible that a base-level node $u$ corresponds to an exposed node $v \in V_{i - 1}$ that is in the upper part of $T_{i - 1}$. Then when we copy the subtree of $v$, those two children will not have equally split interval labels. For example, this happens in \Cref{fig:main-example}, at the node labeled $[1,6]$ in the tree $T_2$. Its two children split into $[1,4]$ and $[5,6]$, while an even split is $[1,3]$ and $[3,6]$. 
\end{remark}

\renewcommand{\labelitemii}{$\circ$}
\begin{remark}
    In order to avoid interrupting the flow of the paper, we will defer the precise definitions of the parameters $k, n_i, U_i, \e_i$ until \cref{sec:params}. However, so that the reader can have a sense of the scale of each of these parameters, we will give approximate values now that can be used as guidelines. All the parameters except $k$ will be powers of 2, to avoid divisibility issues. We pick the following rough values:
    \begin{itemize}[topsep=0pt]
        \item The number of layers will be $k + 1 \approx \log^*(\e U)$.
        \item The approximation parameters $\e_i$ are all very close to $\e$, and can be thought of as essentially equal to $\e$.
        \item The $U_i$ will satisfy the approximate recursion $\e U_{i+1} \approx \log(\e U_i)$, so by the last level we will have $U_k \approx 1/\e$.
        \item The batch sizes $n_i$ will shrink very slowly (only by polylogarithmic factors in $\e U$), so they can all be thought of as roughly $n$, though decreasing.
        \begin{itemize}
            \item In particular, even the last batch size $n_k$ is almost $n$ in this sense, so one can think of the algorithm as spending most of its time at layer $k$, with a ``universe'' of size $O(1/\e)$.
        \end{itemize}
        \item Similarly, the capacities $\al_i$ are also all approximately $\e n$, though also decreasing in $i$.
    \end{itemize}
\end{remark}

\begin{figure}[H]
\centering
\begin{tikzpicture}[xscale=0.9, yscale=0.75,
level/.style={sibling distance=35mm/#1},
every node/.style={outer sep = -0.5pt},
every path/.style={line width=1pt},
filledCircle/.style={circle, draw, fill=black, minimum size=4mm},
emptyCircle/.style={circle, draw, minimum size=4mm},
filledSquare/.style={rectangle, draw, fill=black, minimum size=4mm},
emptySquare/.style={rectangle, draw, minimum size=4mm},
filledTriangle/.style={regular polygon, regular polygon sides=3, inner sep= 2.5, draw, fill=black, minimum size=4mm},
emptyTriangle/.style={regular polygon, inner sep = 2.5, regular polygon sides=3, draw, minimum size=4mm}
]
\node at (-5, -1.5) {\huge{$T_0:$}};

\begin{scope}
\node[emptySquare, label={[font=\scriptsize]right: $[1,4]$}] {}
    child { node[emptyCircle, label={[font=\scriptsize]right: $[1,2]$}] {}
        child { node[emptyCircle, label={[font=\scriptsize]right: $[1,1]$}] {} 
            child[draw=lightgray] { node[emptyCircle, label={[font=\scriptsize, text=gray]right: $[1,1]$}] {} }
        }
        child { node[emptyCircle, label={[font=\scriptsize]right: $[2,2]$}] {} 
            child[draw=lightgray] { node[emptyCircle, label={[font=\scriptsize, text=gray]right: $[2, 2]$}] {} }
        }
    }
    child { node[emptyCircle, label={[font=\scriptsize]right: $[3,4]$}] {}
        child { node[emptyCircle, label={[font=\scriptsize]right: $[3,3]$}] {} 
            child[draw=lightgray] { node[emptyCircle, label={[font=\scriptsize, text=gray]right: $[3, 3]$}] {} }
        }
        child { node[emptyCircle, label={[font=\scriptsize]right: $[4,4]$}] {} 
            child[draw=lightgray] { node[emptyCircle, label={[font=\scriptsize, text=gray]right: $[4, 4]$}] {} }
        }
    };

    \node at (0.2, -5) {\bf\vdots};
\end{scope}

\begin{scope}[xshift=7.5cm]
    \node[filledCircle, label={[font=\scriptsize]right: $[5,8]$}] {}
        child { node[emptySquare, label={[font=\scriptsize]right: $[5,6]$}] {}
            child { node[emptyCircle, label={[font=\scriptsize]right: $[5,5]$}] {} 
                child[draw=lightgray] { node[emptyCircle, label={[font=\scriptsize, text=gray]right: $[5,5]$}] {} }
            }
        child { node[emptyCircle, label={[font=\scriptsize]right: $[6,6]$}] {} 
            child[draw=lightgray] { node[emptyCircle, label={[font=\scriptsize, text=gray]right: $[6, 6]$}] {} }
        }
    }
    child { node[filledCircle, label={[font=\scriptsize]right: $[7,8]$}] {}
        child { node[filledCircle, label={[font=\scriptsize]right: $[7,7]$}] {} 
            child[draw=lightgray] { node[emptySquare, label={[font=\scriptsize, text=gray]right: $[7,7]$}] {} }
        }
        child { node[emptySquare, label={[font=\scriptsize]right: $[8,8]$}] {} 
            child[draw=lightgray] { node[emptyCircle, label={[font=\scriptsize, text=gray]right: $[8, 8]$}] {} }
        }
    };

    \node at (0, -5) {\bf\vdots};
\end{scope}

\draw[->, thick, line width=1.5pt, >=stealth] (4,-5.3) -- (4,-6.8) ;

\node at (-5, -8.5) {\huge{$T_1:$}};

\begin{scope}[yshift=-7cm]
\node[emptyTriangle, label={[font=\scriptsize]right: $[1,6]$}] {}
    child { node[emptySquare, label={[font=\scriptsize]right: $[1,4]$}] {}
        child[draw=lightgray] { node[emptyCircle, label={[font=\scriptsize, text=gray]right: $[1,2]$}] {} 
            child { node[emptyCircle, xshift=0.1cm, label={[font=\scriptsize, text=gray]below: $[1,1]$}] {} }
            child { node[emptyCircle, xshift=-0.1cm, label={[font=\scriptsize, text=gray]below: $[2,2]$}] {} }
        }
        child[draw=lightgray] { node[emptyCircle, label={[font=\scriptsize, text=gray]right: $[3,4]$}] {} 
            child { node[emptyCircle, xshift=0.1cm, label={[font=\scriptsize, text=gray]below: $[3,3]$}] {} }
            child { node[emptyCircle, xshift=-0.1cm, label={[font=\scriptsize, text=gray]below: $[4,4]$}] {} }
        }
    }
    child { node[emptySquare, label={[font=\scriptsize]right: $[5,6]$}] {}
        child[draw=lightgray] { node[emptyCircle, label={[font=\scriptsize, text=gray]right: $[5,5]$}] {} 
            child { node[emptyCircle, label={[font=\scriptsize, text=gray]right: $[5,5]$}] {} }
        }
        child[draw=lightgray] { node[emptyCircle, label={[font=\scriptsize, text=gray]right: $[6,6]$}] {} 
            child { node[emptyCircle, label={[font=\scriptsize, text=gray]right: $[6,6]$}] {} }
        }
    };

    \node at (0.2, -5) {\bf\vdots};
\end{scope}

\begin{scope}[xshift=7.5cm, yshift=-7cm]
    \node[filledCircle, label={[font=\scriptsize]right: $[7,8]$}] {}
        child { node[emptySquare, inner sep=0pt, text depth=-0.25cm, label={[font=\scriptsize]right: $[7,7]$}] {$\boldsymbol{\triangle}$}
            child[draw=lightgray] { node[emptyCircle, label={[font=\scriptsize, text=gray]right: $[7,7]$}] {} 
                child { node[emptyCircle, label={[font=\scriptsize, text=gray]right: $[7,7]$}] {} }
            }
        }
        child { node[filledSquare, label={[font=\scriptsize]right: $[8,8]$}] {}
            child[draw=lightgray] { node[filledCircle, fill=lightgray, label={[font=\scriptsize, text=gray]right: $[8,8]$}] {} 
                child { node[emptyTriangle, label={[font=\scriptsize, text=gray]right: $[8,8]$}] {} }
            }
        };

    \node at (0, -5) {\bf\vdots};
\end{scope}

\draw[->, thick, line width=1.5pt, >=stealth] (4,-12.3) -- (4,-13.8) ;

\node at (-5, -15.5) {\huge{$T_2:$}};

\begin{scope}[yshift=-14cm]
\node[emptyCircle, label={[font=\scriptsize]right: $[1,7]$}] {}
    child { node[emptyTriangle, label={[font=\scriptsize]right: $[1,6]$}] {}
        child[draw=lightgray] { node[emptyCircle, label={[font=\scriptsize, text=gray]right: $[1,4]$}] {} }
        child[draw=lightgray] { node[emptyCircle, label={[font=\scriptsize, text=gray]right: $[5,6]$}] {} }
    }
    child { node[emptyTriangle, label={[font=\scriptsize]right: $[7,7]$}] {}
        child[draw=lightgray] { node[emptyCircle, label={[font=\scriptsize, text=gray]right: $[7,7]$}] {} }
    };

    \node at (0, -4) {\bf\vdots};
\end{scope}

\begin{scope}[xshift=7.5cm, yshift=-14cm]
    \node[emptyCircle, label={[font=\scriptsize]right: $[8,8]$}] {}
        child { node[emptyTriangle, xshift=-1.5cm, text depth=-0.25cm, label={[font=\scriptsize]right: $[8,8]$}] {} 
            child[draw=lightgray] { node[emptyCircle, label={[font=\scriptsize, text=gray]right: $[8,8]$}] {} }
        };
    \node at (0.2, -4) {\bf\vdots};
\end{scope}
\matrix [draw, below=of current bounding box.south, yshift=0.9cm, xshift=1cm, thin, column sep=1.5cm] {
\node[emptySquare, text depth=-0.25cm, label={[label distance=0.2cm]right: Exposed nodes in $T_0$}]{}; &
\node[emptyTriangle, text depth=-0.25cm, label={[label distance=0.2cm]right: Exposed nodes in $T_1$}]{}; &
\node[filledCircle, text depth=-0.25cm, label={[label distance=0.2cm]right: Full nodes}]{};\\
};
\end{tikzpicture}
\caption{The structure of different layers. Here $\epsilon = 0.5$, so there are $1/ \epsilon = 2$ trees in each layer. The nodes below the base-level of each layer is marked as gray. Note that when we construct $T_i$, we take all the exposed nodes in $T_{i - 1}$ and use them as the base-level nodes to build $1/\e$ trees. Then we copy their subtrees in $T_{i - 1}$ to be their subtrees in $T_i$.
}
\label{fig:main-example}
\end{figure}

\subsection{Handling insertions} \label{sec:insertion}

In this subsection, we formally explain how we handle insertions. 

\paragraph*{Insertions.} Recall that in \Cref{sec:structure}, we only require our invariant to hold for layers $i \neq k$. For layer $k$, it is maintained by a normal eager q-digest. For any insertion $x$, we first insert it into the layer $k$ as we would in a normal optimized eager q-digest. In other words, we find the exposed node in $T_k$ whose interval contains $x$ and increase its weight, $W_k[v]$, by $1$. This node always exists due to the following observation.

\begin{obs} \label{obs:interval-cover}
For all layers $1 \leq i \leq k$. the intervals of the exposed nodes $V_i$ are always disjoint and cover the entire universe $[1,U]$. 
\end{obs}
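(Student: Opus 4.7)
The plan is to prove the observation by induction on $i$, simultaneously establishing the stronger claim that within each tree of the forest $T_i$, the intervals of the exposed nodes in that tree partition the interval of the tree's root.

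For the base step (which I would also prove for $T_0$ as a prerequisite), the forest $T_0$ consists of $1/\epsilon_0$ trees whose roots have intervals $[1,\epsilon_0 U], \dots, [(1-\epsilon_0)U+1, U]$, and these clearly partition $[1,U]$. For the within-tree claim, I would use a finite downward induction on the set of full descendants of the root: if the root is empty, then it is itself the unique exposed node of that tree. If the root is full, then its one or two children (one at or below the base level, two strictly above) have intervals that together partition the root's interval by construction; one can then apply the inductive hypothesis to any full child while noting that every empty child is itself exposed. The recursion terminates because $F_0$ is finite (bounded in size by $n_0/\alpha_0$).

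For the inductive step $i \geq 1$, assume the intervals of $V_{i-1}$ partition $[1,U]$. By construction, the base level of $T_i$ consists of $|V_{i-1}|$ nodes in one-to-one order correspondence with $V_{i-1}$, inheriting the same interval labels, so the base-level intervals still partition $[1,U]$. Since non-base-level node intervals in $T_i$ are unions of their base-level descendants' intervals, the tree roots of $T_i$ together partition $[1,U]$. The same finite downward induction as in the base case now establishes that exposed nodes partition each tree root's interval. The crucial structural property exploited (that children's intervals partition the parent's) holds in the upper part of $T_i$ by construction, and in the lower part because those subtrees are copied verbatim from $T_{i-1}$, which preserves it.

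The main subtlety is the boundary case $i = k$, where the full-or-empty invariant is not maintained and the definition of $V_k$ uses ``non-full'' in place of ``empty''. However, full nodes in $T_k$ are still upward-closed, since $T_k$ is an optimized eager $q$-digest and the eager push-up operation maintains this property; so the frontier-style argument goes through after replacing ``empty'' with ``non-full''. A secondary minor issue is that some nodes in the upper part of $T_i$ may have only one child when all base-level descendants on one side are absent, but in such cases the lone child's interval equals the parent's, so the partition property is still satisfied.
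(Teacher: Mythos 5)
Your argument is correct, and it supplies essentially the justification the paper leaves implicit: \cref{obs:interval-cover} is stated without proof, and the intended reasoning is exactly yours --- full nodes are upward-closed, children's intervals partition the parent's interval at every level (including the copied lower parts and the one-child degenerate cases), so the exposed nodes form a frontier whose intervals partition the union of the root intervals, which is $[1,U]$. Your handling of the two boundary cases (layer $k$ with ``non-full'' replacing ``empty'', and single-child nodes in the upper part) is also the right reading of the paper's definitions.
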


Then for $i = k, k - 1, \dots, 1$, we check if the total number of elements inserted so far, denoted by $t$, is a multiple of $n_i$. If so, we need to compress layers $\geq i$ into full nodes in layer $i - 1$. Specifically, we will chose these $n_i$'s so that $n_{i}$ is always a multiple of $n_{i + 1}$ for all $I$ (we prove this in \cref{fact:n-dec}). Therefore if $w_{\text{tot}}$ is a multiple of $n_i$, layers $\geq i + 1$ have already been compressed into full nodes of layer $i$. We will only need to compress layer $i$ into full nodes in layer $i - 1$ and merge them into $T_{i - 1}$. We call this procedure \Call{Merge}{$i$} and will describe it next. The pseudocode for the insertion procedure as a whole is summarized below in \Cref{alg:insertion}. 

\begin{algorithm}[h]
\caption{Inserting an element of the stream}
\label{alg:insertion}
\begin{algorithmic}[1]
\Procedure{Insert}{$x$} 
\LineComment{Insert $x$ into $T_k$.}%
\State $v \gets \text{the highest non-full node in $T_k$ whose interval contains $x$}$ \label{Line:insert-find}%
\State $W_k[v] \gets W_k[v] + 1$ \label{Line:insert-add} \Comment{Recall $W_k[v]$ is the weight of $v$.}
\LineComment{Compress and merge}
\State $t \gets t + 1$ \Comment{$t$ is the total number of stream elements inserted so far.}
\For{$i = k \dots 1$}
\If{$n_k \mid t$}
    \State \Call{Merge}{$i$} \Comment{Compress $T_i$ into full nodes of layer $i - 1$.}
    \State Clear the structure at layer $i$.\LineComment{Note the set $V_{i - 1}$ changes after \Call{Merge}{$i$}. After we clear layer $i$, the structure of $T_i$ implicitly changes according to the new $V_{i - 1}$.}
\EndIf
\EndFor
\If{$t = n_0$}
    \State \Call{Double}{}() \Comment{This handles unknown $n$ (\cref{sec:unknown-n}); the reader may ignore it for now.} \label{line:double}
\EndIf
\EndProcedure
\end{algorithmic}
\end{algorithm}

Next, we explain how \Call{Merge}{$i$} compresses layer $i$ into full nodes in layer $i - 1$. We follow a delicate three-step strategy. On a high level, it is carefully designed so that we incur an error (which is defined formally later in \Cref{sec:err_merges}) of at most $h_i \cdot \alpha_i + \alpha_{i + 1}$ from the compression. (Recall that $h_i \coloneqq \log(\epsilon_i U_i)$ is the depth of the base level in $T_i$.) This is important to our analysis.

\paragraph{Merge Step 1: Move the weight into $T_{i - 1}$.} In the first step, we move all the weight in $T_i$ into empty nodes in $T_{i - 1}$. There are two cases:

\begin{itemize}
    \item For every node $u$ with weight below the base level (including the base level itself) in $T_i$, there is a unique empty node $u'$ in $T_{i - 1}$ corresponding to it. (See \Cref{remark:correspondence}.) We move all the weights for $u$ into that of $u'$. Formally, we just increase weight $W_{k - 1}[u']$ by $W_k[u]$. 
    \item For every node $u$ strictly above the base level of $T_i$, there is no node in $T_{i - 1}$ that directly corresponds to it. Instead, we will take an arbitrary descendant $v \in T_i$ of it at the base level. As $v$ corresponds to an (exposed) empty node $v' \in T_{i - 1}$, we will move the weight of $u$ there. Formally, we increase weight $W_{k - 1}[v']$ by $W_k[u]$.
\end{itemize}

This is summarized in \Cref{alg:push-down}. We defer the error analysis of this step to later in this section. Before we proceed, let us state a simple property about this step. 

\begin{obs} \label{obs:not_exceeding_cap}
We will choose the parameters so that $\alpha_i \cdot h_i \leq \alpha_{i - 1}$ (this will be shown in \cref{fact:al-rec}). (Recall that $h_i \coloneqq \log(\epsilon_i U_i)$ is the depth of the base level in $T_i$.) Thus, after this step, all nodes in $T_{i - 1}$ still have weight at most $\alpha_{i - 1}$.
\end{obs}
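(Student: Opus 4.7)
The plan is to show that Step 1 raises the weight of every node in $T_{i-1}$ to at most $\alpha_{i-1}$. By the invariant maintained in \cref{sec:structure}, each node of $T_{i-1}$ is either full (weight $\alpha_{i-1}$) or empty (weight $0$) before the merge, and Step 1 only deposits weight into empty nodes of $T_{i-1}$. Hence full nodes keep their weight, and it suffices to bound the weight added to each empty node.

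First, I would isolate which empty nodes can actually receive weight. Since the subtrees strictly below the base level of $T_i$ are verbatim copies of the subtrees hanging below the exposed nodes $V_{i-1}$ (see \cref{remark:correspondence}), the only empty nodes of $T_{i-1}$ that can receive weight are the exposed nodes $V_{i-1}$ themselves (which correspond to base-level nodes of $T_i$) and their empty descendants (which correspond to below-base-level nodes of $T_i$). For an empty descendant $u'$ of an exposed node, the correspondence is bijective, so $u'$ only inherits weight from its unique counterpart $u \in T_i$, giving a total of $W_i[u] \le \alpha_i \le \alpha_{i-1}$ by the parameter condition.

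The main case is an exposed node $v' \in V_{i-1}$, which corresponds to some base-level node $v \in T_i$. Two sources contribute weight to $v'$: the direct contribution $W_i[v] \le \alpha_i$, and indirect contributions from every above-base-level node $w \in T_i$ that selected $v$ as its arbitrary base-level descendant. Every such $w$ must be a proper ancestor of $v$, and since $v$ sits at depth $h_i$ it has exactly $h_i$ proper ancestors above the base level, each of weight at most $\alpha_i$. Summing the direct and indirect contributions gives total weight at $v'$ of at most $(h_i+1)\alpha_i$, which by the parameter condition $\alpha_i \cdot h_i \le \alpha_{i-1}$ (the extra $+1$ being absorbed by the exact form established in \cref{fact:al-rec}, e.g.\ via the rounded definition $\alpha_i = \eps_i n_i / \cceil{\log(\eps_i U_i) + 1}$) is at most $\alpha_{i-1}$.

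The main obstacle is really just the ancestor-counting argument in the exposed case: verifying that no matter how the arbitrary base-level descendants are chosen for the above-base-level nodes of $T_i$, each exposed node $v'$ accumulates contributions from at most $h_i$ distinct above-base-level nodes. Once this counting is in place, everything else reduces to bookkeeping against the $T_{i-1}$ invariant and the copy-subtree construction of $T_i$ from $V_{i-1}$.
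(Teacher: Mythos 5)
Your proof is correct and fills in exactly the counting the paper leaves implicit: each exposed node $v'$ receives weight from its base-level counterpart $v$ plus from at most $h_i$ strict ancestors of $v$ above the base level, for a total of at most $(h_i+1)\alpha_i$, and nodes below the base level receive only their unique counterpart's weight. You are also right to flag the $+1$: it is absorbed because \cref{fact:al-rec} actually gives $\alpha_{i-1} = \alpha_i\cceil{h_i+1} \ge \alpha_i(h_i+1)$, which is slightly stronger than the $\alpha_i h_i \le \alpha_{i-1}$ quoted in the observation's statement.
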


Therefore, this step does not exceed the capacity of nodes in $T_{i - 1}$. But it does create a number of non-full nodes: It merges $T_i$ into $T_{i - 1}$ while breaking our invariant of having only full or empty node in $T_{i - 1}$. So the purpose of Step 2 and 3 is exactly to restore this invariant.

\begin{algorithm}[h]
\caption{Moving the weights from layer $i$ to empty nodes in layer $i - 1$}
\label{alg:push-down}
\begin{algorithmic}[1]
\Procedure{Move}{$i$}
\ForAll{$u \in F_i$\footnotemark}
    \If{$u$ is strictly above the base level of $T_i$}
        \State Let $v \in T_i$ be an arbitrary descendant of $u$ at the base level.
        \State Let $v' \in T_{i - 1}$ be the node corresponding to $v$ (by \Cref{remark:correspondence}).\label{line:v_prime}
        \State $W_{i - 1}[v'] \gets W_{i- 1}[v'] + W_i[u]$ 
    \Else        
        \State Let $u' \in T_{i - 1}$ be the node corresponding to $u$ (by \Cref{remark:correspondence}).
        \State $W_{i - 1}[u'] \gets W_{i- 1}[u'] + W_i[u]$ \label{line:u_prime}
    \EndIf
\EndFor
\EndProcedure
\end{algorithmic}
\end{algorithm}
\footnotetext{When $i=k$, this will actually be all $u$ such that $W_k[u]$ is nonzero, rather than just all full nodes.}

\paragraph*{Merge Step 2: Compressing into full nodes.} Naturally, given the non-full nodes in $T_{i - 1}$, we want to first perform a compression step similar to q-digest: Whenever a node $v \in T_{i - 1}$ has a parent that is not full, we move weight from $v$ to $\mathrm{parent}(v)$. 
\begin{algorithm}[h]
\caption{Compressing weights $W_{i-1}$ into full nodes}
\label{alg:push-up}
\begin{algorithmic}[1]
\Procedure{Compress}{$i - 1$}
\While{there exists $v \in T_{i - 1}$, $W_{i - 1}[v] > 0$ and $W_{i - 1}[\mathrm{parent}(v)] < \alpha_{i - 1}$}
    \State $W_{i - 1}[v] \gets W_{i - 1}[v] - 1$
    \State $W_{i - 1}[\mathrm{parent}(v)] \gets W_{i - 1}[\mathrm{parent}(v)] + 1$ \label{line:push-up-end} \Comment{Moving the weights.\footnotemark} 
\EndWhile
\EndProcedure
\end{algorithmic}
\end{algorithm}
\footnotetext{We are keeping the algorithm description simple by moving weights one unit at a time. In an actual implementation, one should of course move the maximum amount possible at each time.}

Let $F_{i - 1}$ be the set of full nodes after this step. We call the nodes that are neither full nor empty \emph{partial nodes}. All the partial nodes are now either non-full children of full nodes in $F_{i - 1}$ or an partially-full root. Importantly, we have the following observation.

\begin{obs} \label{obs:leftover-disjoint}
After this step, the interval labels of the partial nodes are all disjoint.
\end{obs}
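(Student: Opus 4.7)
The plan is to reduce the disjointness claim to a structural statement about the tree: two nodes in the forest $T_{i-1}$ have overlapping interval labels if and only if one is an ancestor of the other. Indeed, across different trees of the forest the roots' intervals partition $[1,U]$ into disjoint pieces, so nodes in different trees have disjoint intervals. Within a single tree, each node's interval equals the union of its base-level descendants' intervals (by the definition in Section~\ref{sec:structure}, and by induction using \Cref{obs:interval-cover} applied to layer $i-2$), while siblings' sets of base-level descendants are disjoint. Thus intervals of incomparable nodes are disjoint, and intervals of a node and its descendant are nested.

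Given this reduction, it suffices to show that after \Call{Compress}{$i-1$}, no partial node is an ancestor of another partial node. First I would record the immediate consequence of the \textbf{while}-loop terminating: for every node $v$ in $T_{i-1}$ with $W_{i-1}[v] > 0$, either $v$ is a root of its tree or $W_{i-1}[\mathrm{parent}(v)] = \alpha_{i-1}$ (i.e.\ the parent is full). Otherwise the loop condition would still be satisfied and the procedure would not have terminated.

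From this I would derive, by walking up the tree, that every proper ancestor of any non-empty node is either full or does not exist (the walk only stops at a root). Concretely, if $v$ is non-empty and not a root, then $\mathrm{parent}(v)$ is full, hence non-empty, hence its own parent is full too (if it exists), and so on. So every proper ancestor of a partial node $v_1$ is full, and in particular no partial node can be an ancestor of $v_1$. Combined with the structural fact in the first paragraph, this implies that any two distinct partial nodes are incomparable (or lie in different trees of the forest) and therefore have disjoint interval labels, which is exactly what \Cref{obs:leftover-disjoint} asserts. I do not anticipate a real obstacle here; the only thing to be careful about is the ``partial roots'' case, which is handled uniformly because a root simply has no ancestors to worry about.
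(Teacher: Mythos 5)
Your proof is correct and follows essentially the same route as the paper's: the paper likewise observes that after \Call{Compress}{$i-1$} every partial node is a non-full child of a full node or a partially-full root, so no partial node is an ancestor of another, and disjointness of intervals for incomparable nodes does the rest. Your version just makes the loop-termination and walk-up-the-tree steps explicit.
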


This is because no partial node can be an ancestor of another. These partial nodes are the leftovers that we will round up in Step 3. 

\paragraph*{Merge Step 3: Round up the leftovers.} As the interval labels of these leftover partial nodes are disjoint by \Cref{obs:leftover-disjoint}, we can sort these nodes by their interval. Then, roughly speaking, we are going to take the (offline) quantile sketch of these nodes as the result for rounding.

More formally, suppose there are $\ell$ partial nodes. After sorting, these nodes are $v_1, v_2, \dots, v_\ell$. Suppose each partial node $v_j$ is labeled $[a_j, b_j]$. We will have $a_1 \leq b_1 < a_2 \leq b_2  < \cdots < a_{\ell} \leq b_{\ell}$. Let $r = \frac{1}{\alpha_{i - 1}}\sum_{j = 1}^\ell W_{i - 1}[v_j]$ be the number of full nodes that we are expected to round up to.\footnote{This is always an integer, because $\sum_{j = 1}^\ell W_{i - 1}[v_j]$ is equal to $n_i$ minus the total weight in the full nodes formed in Step 1 and 2, and we always choose $n_i$ to be a multiple of $\alpha_{i - 1}$.} For every $m \in [r]$, we find the first $q_m \in [\ell]$ such that $\sum_{j = 1}^{q_m} W_{i - 1}[v_j] \geq m \cdot \alpha_{i - 1}$.  These $v_{q_1}, v_{q_2} \dots, v_{q_r}$ are the ``quantiles'' of these sorted partial nodes.

Then we set the weight of all $v_{q_m}$'s (for all $m \in [r]$) to $\alpha_{i - 1}$ and the weight of all other $v_j$'s to zero. Note these $v_{q_m}$'s must be disjoint since by \Cref{obs:not_exceeding_cap}, any node has weight at most $\alpha_{i - 1}$. This rounds up the partial nodes into $r$ many full nodes and finishes this step. An implementation of this procedure is given below in \Cref{alg:round}.

\begin{algorithm}[h] 
\caption{Rounding the leftovers}
\label{alg:round}
\begin{algorithmic}[1]
\Procedure{Round}{$i - 1$}
\State $c \gets 0$ \Comment{$c$ is the cumulative total weight} \label{line:round-start}
\State $m \gets 1$
\ForAll{partial node $v$ in left-to-right order}
    \State $c \gets c + W_{i - 1}[v]$
    \If{$c \ge m \cdot \al_{i - 1}$}
        \State $W_{i - 1}[v] \gets \alpha_{i - 1}$
        \State $m \gets m + 1$ \label{line:round-end}
    \Else
        \State  $W_{i - 1}[v] \gets 0$
    \EndIf
\EndFor
\EndProcedure
\end{algorithmic}
\end{algorithm}

\paragraph{Conclusion.} Finally, our merging operation is implemented by performing these three steps sequentially.

\begin{algorithm}[h]
\caption{Merging layer $i$ into layer $i-1$}
\label{alg:merge}
\begin{algorithmic}[1]
\Procedure{Merge}{$i$}
\State \Call{Move}{$i$}
\State \Call{Compress}{$i-1$}
\State \Call{Round}{$i-1$} 
\LineComment{During this process; the set of full nodes $F_{i - 1}$ that we store changes as we move weights around. }
\EndProcedure
\end{algorithmic}
\end{algorithm}

\subsection{Error analysis for merges} \label{sec:err_merges}

Before we analyze each step of \Call{Merge}{$i$}, let us first define the error metric. 

\paragraph*{Consistency and Discrepancy.} First, we define the notion of \textit{consistency} between our layer $i$ sketch $T_i$ and a stream of elements $\pi$. Intuitively, this describes what layer $i$ should look like upon receiving stream $\pi$ if the merge had not introduced any error. %

\begin{definition} [Consistency]
We say that a stream $\pi$ is \emph{consistent} with a subset of nodes $S \subseteq T_i$ if and only if there exists a map $f$ that maps $\{1, 2, \dots, |\pi|\}$ to $S$ satisfying the following.
\begin{enumerate}
    \item Each node $u \in S$ is mapped to exactly $W_i[u]$ times.
    \item For every $1 \leq j \leq |\pi|$, the interval label of node $f(j)$ contains $\pi_j$.
\end{enumerate}
\end{definition}

Then we define the discrepancy between $T_i$ and the stream $\pi$. This quantifies the amount of additional error we have. 

\begin{definition}[Discrepancy] \label{def:disc}
We define the discrepancy between a stream $\pi$ and a subset of nodes $S \subseteq T_i$ as 
$$\mathrm{disc}(\pi, S) \coloneqq \min_{\pi' \text{ consistent with } S} d(\pi, \pi').$$
Here, as defined in \Cref{sec:prelim}, the distance between two streams is
\[d(\pi, \pi') = \max_{x \in [1, U]} \abs{\rank_\pi(x) - \rank_{\pi'}(x)}.\]
\end{definition}

\paragraph*{Analysis of Step 1.}  Now, we show that Step 1 increases the discrepancy by at most $\epsilon_i n_i$.

\begin{lemma}[Step 1] \label{lemma:push-down}
Let $T_i$ be the layer-$i$ sketch before \Cref{alg:push-down} (Step 1). Also, let $S$ be the set of originally empty nodes in $T_{i - 1}$ whose weight increases during \Cref{alg:push-down}. 

For any stream $\pi$, we have 
$$\mathrm{disc}(\pi, S) \leq \disc(\pi, T_i) + \epsilon_i n_i.$$
\end{lemma}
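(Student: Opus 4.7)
The approach is to start with a stream $\pi^*$ attaining $d(\pi, \pi^*) = \disc(\pi, T_i)$ that is consistent with $T_i$, and carefully modify it into a stream $\pi''$ consistent with $S$, while showing $d(\pi^*, \pi'') \leq \eps_i n_i$. The lemma then follows from the triangle inequality on $d$: $\disc(\pi, S) \leq d(\pi, \pi'') \leq d(\pi, \pi^*) + d(\pi^*, \pi'')$.

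Let $f \colon \{1, \dots, |\pi^*|\} \to T_i$ be the consistency map for $\pi^*$. The \textsc{Move} procedure of \Cref{alg:push-down} induces a natural map $\Phi$ from nodes of positive weight in $T_i$ into $S$: a node $u$ at or below the base level maps to the node $u' \in S$ with the same interval (Line~\ref{line:u_prime}), while a node $u$ strictly above the base level maps to $v' \in S$, where $v'$ corresponds to the arbitrary base-level descendant $v$ of $u$ chosen in \textsc{Move} (Line~\ref{line:v_prime}). In either case, $\Phi(u)$'s interval is contained in $u$'s interval. Setting $g := \Phi \circ f$, one checks that $g$ hits each $v \in S$ exactly $(\text{increase in } W_{i-1}[v])$ times, so $g$ is a valid consistency map for $S$. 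We then construct $\pi''$ by $\pi''_j := \pi^*_j$ whenever $f(j)$ is at or below the base level (so $\pi^*_j$ still lies in $g(j)$'s interval, as the intervals coincide), and otherwise choose $\pi''_j$ to be any point in the interval of $g(j)$.

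It remains to bound $d(\pi^*, \pi'')$. For any $x \in [1, U]$, $\lvert \rank_{\pi^*}(x) - \rank_{\pi''}(x) \rvert$ is at most the number of indices $j$ for which exactly one of $\pi^*_j, \pi''_j$ is $\leq x$. By construction, any such $j$ must have $f(j)$ strictly above the base level, and moreover $f(j)$'s interval must contain $x$ (else $\pi^*_j$ and $\pi''_j$ lie on the same side of $x$). The nodes of $T_i$ whose interval contains $x$ form a single root-to-leaf path, so the above-base-level nodes of this form number at most $h_i$, each of weight at most $\al_i$. Hence $\lvert \rank_{\pi^*}(x) - \rank_{\pi''}(x) \rvert \leq h_i \cdot \al_i \leq \eps_i n_i$ by the choice $\al_i = \eps_i n_i / \cceil{h_i + 1}$.

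The main obstacle is cleanly setting up the weight-unit correspondence $g$ induced by \textsc{Move} and verifying it is a valid consistency map for $S$. The other key observation is that the potentially displaced weight contributing to the rank error at any fixed $x$ all lies on a single root-to-$x$ path above the base level, which lets us reuse the standard depth-times-capacity error bound familiar from q-digest analysis. Combining via the triangle inequality yields $\disc(\pi, S) \leq \disc(\pi, T_i) + \eps_i n_i$, as claimed.
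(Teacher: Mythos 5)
Your proposal is correct and follows essentially the same route as the paper's proof: take an optimal consistent stream $\pi^*$, keep elements mapped at or below the base level unchanged, reassign elements mapped strictly above the base level to an arbitrary point in the chosen descendant's interval, and bound the rank perturbation at any $x$ by the at most $h_i$ above-base-level nodes on the path containing $x$, each of weight at most $\al_i$, giving $h_i\al_i \le \e_i n_i$ and concluding via the triangle inequality. No gaps.
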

\begin{proof}
Let $\pi^* \coloneqq \argmin_{\pi^* \text{ consistent with } T_i} d(\pi, \pi^*)$ and $f^*$ be the consistent mapping from $\pi_*$ to $T_i$. We will construct a stream $\pi'$ and a mapping $f'$ such that $\pi'$ is consistent with $S$ with mapping $f'$ and $d(\pi^*, \pi') \leq \epsilon_i n_i$. This finishes the proof because the distance we define satisfies the triangle inequality $d(\pi, \pi') \leq d(\pi, \pi^*) + \epsilon_i n_i$. 

For any element $\pi^*_j$ ($1 \leq j \leq |\pi^*|$) there are two cases:
\begin{enumerate}
    \item If $f^*(j) = u$ for a node $u$ below the base level of $T_i$, let $u' \in T_{i - 1}$ be the corresponding node (as in \Cref{line:u_prime}, \Cref{alg:push-down}). We let $\pi'_j = \pi^*_j$ and set $f'(j) = u'$.
    \item If $f^*(j) = u$ is a node $u$ strictly above the base level of $T_i$, let $v \in T_i$ be its descendant at the base level and $v' \in T_{i - 1}$ be the corresponding exposed node (as in \Cref{line:v_prime}, \Cref{alg:push-down}). We select an arbitrary element $y$ in the interval of $v$ (which is equal to that of $v'$), and let $\pi'_j = y$. Then we set $f'(y) = v'$.
\end{enumerate}
From this construction, it is clear that $\pi'$ is consistent with $S$ under $f'$. To upper bound $d(\pi^*, \pi)$, consider any query $x \in [1, U]$, the difference of the rank of $x$ in $\pi$ and in $\pi'$ is bounded by the number of $j$'s such that $x$ lies strictly between $\pi^*_j$ and $\pi'_j$. 

As $\pi^*_j \neq \pi'_j$, this can only happen in Case 2. Moreover, as $\pi^*_j$ was initially in the interval of $u$, and $\pi'_j$ is in the interval of $v$ (wich is contained by that of $u$), we know that $x$ must also be in the interval of $u$. Since there are at most $h_i$ such nodes $u$ strictly above the base level of $T_i$, and each is mapped to $\alpha_i$ times, we have at most $h_i \alpha_i$ many such $j$'s. We will choose the parameters in \Cref{sec:params} so that $h_i \alpha_i \leq \epsilon_i n_i$ (this will follow from \eqref{eq:def-al}). This proves $d(\pi^*, \pi) \leq \epsilon_i n_i$.
\end{proof}

Then we need to argue that when $S$ is merged with the original nodes in $T_{i - 1}$, their discrepancies at most add up. This follows from the following observation, which is a consequence of \cref{obs:subadd}:

\begin{obs} \label{obs:disc-merge}
    For two disjoint sets of nodes $S, T$ and any two streams $\pi_1$ and $\pi_2$, we have 
    $$\disc(\pi_1 \circ \pi_2, S \cup T) \leq \disc(\pi_1, S) + \disc(\pi_2, T),$$
    where $\circ$ means concatenating two streams. 
\end{obs}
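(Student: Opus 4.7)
The plan is to follow the hint and deduce this from the subadditivity property for stream distance under concatenation (Observation \ref{obs:subadd}). The key idea is that we can construct a witness stream for $\disc(\pi_1 \circ \pi_2, S \cup T)$ by concatenating optimal witness streams for the two discrepancies on the right-hand side. Disjointness of $S$ and $T$ is exactly what is needed to ensure that the concatenated witness stream remains consistent with $S \cup T$.

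More concretely, first I would pick streams $\pi_1'$ and $\pi_2'$ achieving the minima in the definitions of $\disc(\pi_1, S)$ and $\disc(\pi_2, T)$, with associated consistency maps $f_1 : \{1, \ldots, |\pi_1'|\} \to S$ and $f_2 : \{1, \ldots, |\pi_2'|\} \to T$. I would then define the candidate stream $\pi' \coloneqq \pi_1' \circ \pi_2'$ together with the obvious combined map $f : \{1, \ldots, |\pi_1'| + |\pi_2'|\} \to S \cup T$ defined by $f(j) = f_1(j)$ for $j \le |\pi_1'|$ and $f(j) = f_2(j - |\pi_1'|)$ otherwise.

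Next I would verify that $\pi'$ is consistent with $S \cup T$ under $f$. The interval-containment condition is inherited node-by-node from $f_1$ and $f_2$. For the weight-counting condition, here is where disjointness of $S$ and $T$ is crucial: each $u \in S \cup T$ lies in exactly one of $S, T$, so the number of preimages of $u$ under $f$ is exactly its number of preimages under the single relevant map $f_1$ or $f_2$, which equals $W_i[u]$ by the consistency of $\pi_1'$ and $\pi_2'$.

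Finally, by the definition of discrepancy and Observation \ref{obs:subadd},
\[
\disc(\pi_1 \circ \pi_2, S \cup T) \;\le\; d(\pi_1 \circ \pi_2, \pi_1' \circ \pi_2') \;\le\; d(\pi_1, \pi_1') + d(\pi_2, \pi_2') \;=\; \disc(\pi_1, S) + \disc(\pi_2, T),
\]
which is the claim. I do not foresee a real obstacle here; the only thing to be careful about is making the dependence on disjointness explicit in the weight-counting step, since without it a node in $S \cap T$ would be counted twice in the combined map and consistency could fail.
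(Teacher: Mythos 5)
Your proof is correct and matches the paper's intent: the paper states this observation as an immediate consequence of Observation~\ref{obs:subadd}, and your argument — concatenating optimal witness streams, using disjointness of $S$ and $T$ to verify consistency of the combined map, and then applying subadditivity of the stream distance — is exactly the argument being elided. The only addition you make is to spell out the role of disjointness in the weight-counting step, which is a worthwhile clarification rather than a deviation.
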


\paragraph*{Analysis of Step 2.} It is not hard to see that Step 2 never increases discrepancy. 

\begin{lemma}[Step 2] \label{lemma:push-up}
 For any stream $\pi$ that is consistent with $T_{i - 1}$, after we perform \Cref{alg:push-up} on $T_{i - 1}$, $\pi$ is still consistent with $T_{i - 1}$. This implies that for any stream $\pi$, $\disc(\pi, T_{i - 1})$ is always nonincreasing after perform \Cref{alg:push-up} on $T_{i - 1}$.
\end{lemma}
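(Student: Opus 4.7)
The plan is to verify that a single iteration of the \textsc{while} loop in \Cref{alg:push-up} preserves consistency, and then induct on the number of iterations. Once this is done, the statement about discrepancy follows immediately, since the set of streams consistent with $T_{i-1}$ only grows, making the minimum in \Cref{def:disc} only decrease.

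So fix a stream $\pi$ consistent with $T_{i-1}$ via a mapping $f \colon \{1, \dots, |\pi|\} \to T_{i-1}$, and consider a single iteration of the \textsc{while} loop that picks a node $v$ with $W_{i-1}[v] > 0$ and $W_{i-1}[\mathrm{parent}(v)] < \alpha_{i-1}$, and transfers one unit of weight from $v$ to $\mathrm{parent}(v)$. Since $f$ maps exactly $W_{i-1}[v] \geq 1$ indices to $v$, I can pick any such index $j$ and define the modified mapping
\[
f'(j') = \begin{cases} \mathrm{parent}(v) & \text{if } j' = j, \\ f(j') & \text{otherwise.} \end{cases}
\]
I then need to check the two conditions of consistency with respect to the new weights. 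The count condition is immediate: the multiplicity of $v$ under $f'$ is $W_{i-1}[v] - 1$ and the multiplicity of $\mathrm{parent}(v)$ is $W_{i-1}[\mathrm{parent}(v)] + 1$, which are exactly the new weights; all other nodes have unchanged multiplicity. The interval condition for $j' \neq j$ is unchanged, and for $j' = j$ it holds because $\pi_j$ lies in the interval of $v$ (by consistency of $f$) and the interval of $\mathrm{parent}(v)$ contains the interval of $v$ by the tree structure defined in \Cref{sec:structure}.

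By induction on the number of loop iterations, $\pi$ remains consistent with $T_{i-1}$ after the entire \textsc{Compress} procedure. Finally, to deduce the monotonicity of discrepancy: if $\pi'$ is consistent with $T_{i-1}$ before \Cref{alg:push-up}, then by what we just proved (applied to $\pi'$ in place of $\pi$) $\pi'$ remains consistent after \Cref{alg:push-up}. Thus the set of consistent streams only grows during the procedure, and therefore
\[
\disc(\pi, T_{i-1}) = \min_{\pi' \text{ consistent with } T_{i-1}} d(\pi, \pi')
\]
can only decrease. There is no genuine obstacle here; the content of the lemma is simply the observation that moving weight to a strictly larger interval can never break consistency, and the main thing to be careful about is defining the remapping $f'$ correctly so that both the multiplicity and interval conditions are simultaneously maintained.
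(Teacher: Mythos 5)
Your proposal is correct and follows essentially the same approach as the paper's proof: both reassign one index $j$ with $f(j)=v$ to $\mathrm{parent}(v)$ per unit of weight moved, using the fact that the parent's interval contains the child's. Your write-up just makes explicit the multiplicity bookkeeping and the deduction about discrepancy (via the set of consistent streams only growing), which the paper leaves implicit.
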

\begin{proof}
We prove this for each operation we perform. Whenever we move one unit of weight from $v$ to $\mathrm{parent}(v)$, we pick an arbitrary $1 \leq j \leq |\pi|$ such that $f(j) = v$ and let $f(j) \gets \mathrm{parent}(v)$. Since the interval of $\mathrm{parent}(v)$ contains that of $v$, the consistency map remains valid. 
\end{proof}

\paragraph*{Analysis of Step 3.}  Finally, we show that the rounding in Step 3 only increases the discrepancy by $\al_{i-1} = \epsilon_i n_i$. 

\begin{lemma}[Step 3] \label{lemma:round}
For any stream $\pi$, whenever we perform Step 3 (\Cref{alg:round}) to $T_{i - 1}$ in our algorithm, the discrepancy $\disc(\pi, T_{i - 1})$ increases by at most $\al_{i-1}$ (which is equal to $\epsilon_i n_i$). 
\end{lemma}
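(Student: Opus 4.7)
The plan is to use the triangle inequality for $d$: let $\pi^*$ be an optimizer of the pre-rounding $\disc(\pi, T_{i-1})$ with consistency map $f^*$, construct a new stream $\pi'$ consistent with the post-rounding $T_{i-1}$ by modifying only positions mapped by $f^*$ to partial nodes, and show $d(\pi^*, \pi') \le \alpha_{i-1}$. Combined with the definition of $\pi^*$, this yields $\disc(\pi, T_{i-1}^{\mathrm{new}}) \le d(\pi, \pi^*) + d(\pi^*, \pi') \le \disc(\pi, T_{i-1}^{\mathrm{old}}) + \alpha_{i-1}$, as desired.

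For the construction, keep $\pi'_j = \pi^*_j$ and $f'(j) = f^*(j)$ whenever $f^*(j)$ is not one of the partial nodes $v_1, \dots, v_\ell$. For the remaining $\sum_{j=1}^{\ell} W_{i-1}[v_j] = r\alpha_{i-1}$ positions (those with $f^*(j) \in \{v_1, \dots, v_\ell\}$), use a monotone coupling: enumerate them by partial-node index (breaking ties arbitrarily), split the list into $r$ consecutive blocks of size $\alpha_{i-1}$, assign block $m$ to $v_{q_m}$, and set $\pi'_j = a_{q_m}$ (the left endpoint of $v_{q_m}$'s interval) and $f'(j)=v_{q_m}$ for every $j$ in the $m$-th block. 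This immediately makes $\pi'$ consistent with the post-rounding $T_{i-1}$.

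To bound $d(\pi^*, \pi') \le \alpha_{i-1}$, note that only positions in $S = (f^*)^{-1}(\{v_1, \dots, v_\ell\})$ can have $\pi^*_j \ne \pi'_j$, so it suffices to show $|T(x) - T'(x)| \le \alpha_{i-1}$ for every $x \in [1,U]$, where $T(x)$ and $T'(x)$ count positions in $S$ whose $\pi^*$- and $\pi'$-values are $\le x$. Set $A(x) = \sum_{j : b_j \le x} W_{i-1}[v_j]$ and $A^+(x) = \sum_{j : a_j \le x} W_{i-1}[v_j]$; by \cref{obs:leftover-disjoint} at most one partial node has $a_j \le x < b_j$, and its weight is strictly less than $\alpha_{i-1}$ (since it is not full), so $0 \le A^+(x) - A(x) < \alpha_{i-1}$. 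Since $\pi^*_j \in [a_j, b_j]$ when $f^*(j) = v_j$, we have $T(x) \in [A(x), A^+(x)]$. For $T'$, each block-$m$ position has $\pi'_j = a_{q_m}$, so $T'(x) = \alpha_{i-1} \cdot |\{m : q_m \le J^+(x)\}|$ where $J^+(x) = |\{j : a_j \le x\}|$. By the defining property of $q_m$ (the first index whose cumulative weight is $\ge m\alpha_{i-1}$), this count equals $\lfloor A^+(x)/\alpha_{i-1}\rfloor$, so $T'(x) = \alpha_{i-1} \lfloor A^+(x)/\alpha_{i-1}\rfloor$, which lies in $(A^+(x)-\alpha_{i-1},\, A^+(x)]$. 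Since $A(x) > A^+(x)-\alpha_{i-1}$, the integer $T(x)$ also lies in this half-open interval of width $\alpha_{i-1}$, giving $|T(x) - T'(x)| < \alpha_{i-1}$.

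The main subtlety is the left-endpoint placement $\pi'_j = a_{q_m}$. A symmetric placement (say, at a midpoint or right endpoint) would let $T'(x)$ drift by up to $\alpha_{i-1}$ in both directions as $x$ sweeps across the interior of a rounded interval, giving only the weaker bound $2\alpha_{i-1}$; pinning $T'(x)$ to track $\alpha_{i-1} \lfloor A^+(x)/\alpha_{i-1}\rfloor$, which is exactly the envelope that also upper-bounds $T(x)$, recovers the tight factor of $1$.
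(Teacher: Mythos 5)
Your proposal is correct and follows essentially the same route as the paper: both construct $\pi'$ from the optimal consistent $\pi^*$ by reassigning the mass at partial nodes to the quantile nodes $v_{q_m}$ placed at their left endpoints $a_{q_m}$, and then bound $d(\pi^*,\pi')\le\al_{i-1}$ via the triangle inequality. The only difference is presentational: the paper verifies the distance bound by showing each query $x$ is affected by the at most $\al_{i-1}$ positions of a single block $[a_{q_{m-1}},a_{q_m})$, whereas you sandwich the two cumulative counting functions in the same half-open window of width $\al_{i-1}$ — the same underlying fact in a slightly different wrapper.
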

\begin{proof}
First, we only perform \Cref{alg:round} after \Cref{alg:push-up}. So, by \Cref{obs:leftover-disjoint}, all the partial nodes have disjoint intervals before \Cref{alg:round}. 

Before the algorithm starts, let $v_1, v_2, \dots, v_\ell$ be the partial nodes of $T_{i - 1}$ in sorted order, and $r = \frac{1}{\alpha_{i - 1}} \sum_{j = 1}^\ell W_{i - 1}[v_j]$. Suppose $[a_1, b_1], [a_2, b_2], \dots, [a_\ell, b_\ell]$ are their disjoint interval labels. Let $\pi^* = \argmin_{\pi^* \text{ consistent with } T_{i - 1}} d(\pi, \pi^*)$ and $f^*$ be corresponding consistency map. For every $m \in [r]$, let $v_{q_m}$ be the first node such that $\sum_{j = 1}^{q_m} W_{i - 1}[v_j] \geq m \cdot \alpha_{i - 1}$. As discussed in \Cref{sec:insertion}, these $v_{q_1}, v_{q_2}, \dots, v_{q_r}$ are all distinct. 

After the algorithm, all partial nodes become empty, except that $v_{q_1}, v_{q_2}, \dots, v_{q_r}$ become full nodes with weight $\alpha_{i - 1}$. We let $q_0 = 0$. For all $m \in [r]$, we do the following to construct stream $\pi'$ and its consistency map $f'$ (with $T_{i - 1}$ after the algorithm):
\begin{itemize}
    \item For all nodes $v_s$ with $q_{m - 1} < s < q_m$ and all $j \in \{1, 2, \dots, |\pi^*|\}$ such that $f^*(j) = v_s$, we set $\pi'_j \gets a_{q_m}$ and $f'(j) \gets v_{q_m}$.  

    \item For the node $v_{q_{m - 1}}$, we take $\sum_{s = 1}^{q_{(m - 1)}} W_{i - 1}[v_s] - (m - 1) \cdot \alpha_{i - 1}$ many $j$'s such that $f^*(j) = v_{q_{m - 1}}$ and set $\pi'_j \gets a_{q_m}$ and $f'(j) \gets v_{q_m}$. 

    \item For the node $v_{q_m}$, we take $m \cdot \alpha_{i - 1} - \sum_{s = 1}^{(q_m) - 1} W_{i - 1}[v_s]$ many $j$'s such that $f^*(j) = v_{q_m}$ and set $\pi'_j \gets \pi^*_j$ and $f'(j) \gets f^*(j) = v_{q_m}$. 
\end{itemize}
Now we prove that $d(\pi^*, \pi') \leq \alpha_{i - 1}$, which by our choice of parameters in \Cref{sec:params}, will be at most $\epsilon_i n_i$. This will end the proof of this lemma by the triangle inequality $d(\pi, \pi') \leq d(\pi, \pi^*) + d(\pi^*, \pi') \leq d(\pi, \pi^*) + \epsilon_i n_i$.

For any query $x$, its rank in $\pi^*$ and $\pi'$ differs by at most the number of $j$'s such that $x$ is strictly between $\pi^*_j$ and $\pi'_j$. As $\pi^*_j \neq \pi'_j$, this only happens in the first two cases. Suppose $f'(j) = v_{q_m}$. This implies $\pi'_j = a_{q_m}$. Then $f^*(j)$ must be a node $v_s$ with $q_{m - 1} \leq v_s < q_m$, and $\pi^*_j \geq a_{q_{m - 1}}$.

This implies $x$ is in the interval $[a_{q_{m - 1}}, a_{q_m})$. Thus there is a unique $m$ for each query $x$, and by our construction, there can be at most $\alpha_{i - 1}$ many $j$'s that are mapped to $v_{q_m}$ by $f'$. This proves that $d(\pi^*, \pi') \leq \alpha_{i - 1}$.
\end{proof}

\paragraph*{Putting everything together.} We have essentially proved the following lemma. 

\begin{lemma} \label{lemma:merge-error}
Let $\pi$ be the partial stream that arrives at time $[s \cdot n_i + 1, s \cdot n_i]$ for some integer $s$. 
According to \Cref{alg:insertion}, after the $(s \cdot n_i)$-th insertion, we will perform \textup{\Call{Merge}{$k$}}, \textup{\Call{Merge}{$k - 1$}}, \ldots, \textup{\Call{Merge}{$i$}} in order. 

Let $T_i$ be the structure at layer $i$ at the exact point that \textup{\Call{Merge}{$i + 1$}} returns and \textup{\Call{Merge}{$i$}} has not started yet. Then we have
$$\disc(\pi, T_i) \leq 2 \gamma_{i + 1} \cdot n_i,$$
where $$\gamma_{i + 1} = \epsilon_{i + 1} + \epsilon_{i + 2} + \cdots + \epsilon_{k}.$$
\end{lemma}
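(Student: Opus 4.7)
The proof goes by induction on $i$, from $i = k$ down to the level of interest, mirroring the recursive merge structure.

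\textbf{Base case $i = k$.} Interpret ``just after \textup{\Call{Merge}{$k+1$}}'' as ``immediately after the $(s n_k)$-th insertion, before \textup{\Call{Merge}{$k$}} begins.'' Each element of $\pi$ was inserted directly into $T_k$ by \Cref{Line:insert-find,Line:insert-add}, placing $\pi_j$ at the highest non-full node whose interval contains $\pi_j$. The map $f$ sending $j$ to that node certifies consistency of $\pi$ with $T_k$, so $\disc(\pi, T_k) = 0 = 2\gamma_{k+1} n_k$ (empty sum).

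\textbf{Inductive step.} Assume the lemma at level $i+1$. Split $\pi = \pi^{(1)} \circ \cdots \circ \pi^{(M)}$ into $M = n_i / n_{i+1}$ sub-batches of length $n_{i+1}$ each (an integer by \cref{fact:n-dec}), and let $T_i^{(j)}$ denote the state of layer $i$ right after the $j$-th \textup{\Call{Merge}{$i+1$}} within the current outer batch; note $T_i^{(0)} = \emptyset$, since the preceding \textup{\Call{Merge}{$i$}} cleared $T_i$. I will show by a secondary induction on $j$ that
\[
\disc\!\left(\pi^{(1)} \circ \cdots \circ \pi^{(j)},\ T_i^{(j)}\right) \;\leq\; 2 j \gamma_{i+1} n_{i+1}.
\]
Setting $j = M$ gives $\disc(\pi, T_i^{(M)}) \leq 2\gamma_{i+1} n_i$, as desired.

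For the step $j-1 \to j$: just before the $j$-th \textup{\Call{Merge}{$i+1$}}, the outer inductive hypothesis applied at level $i+1$ (with its ``level-$(i+1)$ batch'' equal to $\pi^{(j)}$) gives $\disc(\pi^{(j)}, T_{i+1}) \leq 2\gamma_{i+2} n_{i+1}$. I then analyze the three sub-steps of this merge:
\begin{itemize}
\item \textbf{Move:} \Cref{lemma:push-down} (with indices shifted by one) yields $\disc(\pi^{(j)}, S^{(j)}) \leq 2\gamma_{i+2} n_{i+1} + \epsilon_{i+1} n_{i+1}$, where $S^{(j)}$ is the set of originally empty nodes in $T_i$ touched by Move. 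Combining with $T_i^{(j-1)}$ via \Cref{obs:disc-merge} and the secondary hypothesis,
\[
\disc\!\left(\pi^{(1)} \circ \cdots \circ \pi^{(j)},\ T_i^{(j-1)} \cup S^{(j)}\right) \;\leq\; 2(j-1)\gamma_{i+1} n_{i+1} + 2\gamma_{i+2} n_{i+1} + \epsilon_{i+1} n_{i+1}.
\]
\item \textbf{Compress:} \Cref{lemma:push-up} shows this step does not increase the discrepancy.
\item \textbf{Round:} \Cref{lemma:round} adds at most $\alpha_i = \epsilon_{i+1} n_{i+1}$ (per the parameter choices of \Cref{sec:params}).
\end{itemize}
Summing the three contributions and using the telescoping identity $\gamma_{i+2} + \epsilon_{i+1} = \gamma_{i+1}$ gives $2 j \gamma_{i+1} n_{i+1}$, completing the secondary induction.

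\textbf{Main obstacle.} The delicate point is that the Compress and Round operations act on all of $T_i$, not just on the freshly produced $S^{(j)}$, so one might worry that they damage the discrepancy already accrued from earlier sub-batches. The per-step lemmas (\Cref{lemma:push-up,lemma:round}) are, however, stated for $\disc$ against an arbitrary stream, so invoking them against the full concatenated stream $\pi^{(1)} \circ \cdots \circ \pi^{(j)}$ correctly captures their effect. The subadditivity observation \Cref{obs:disc-merge} is precisely what lets one glue the old state $T_i^{(j-1)}$ to the fresh $S^{(j)}$ before Compress and Round act, and the identity $\gamma_{i+2} + \epsilon_{i+1} = \gamma_{i+1}$ is exactly what makes the per-iteration error of $2\gamma_{i+1} n_{i+1}$ telescope cleanly over $M$ sub-batches to $2\gamma_{i+1} n_i$.
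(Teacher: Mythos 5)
Your proposal is correct and follows essentially the same route as the paper's proof: the same outer induction on layers, the same split of $\pi$ into $n_i/n_{i+1}$ sub-batches with an inner induction on $j$, and the same application of \Cref{lemma:push-down}, \Cref{obs:disc-merge}, \Cref{lemma:push-up}, and \Cref{lemma:round} with the telescoping identity $\gamma_{i+2}+\epsilon_{i+1}=\gamma_{i+1}$. The arithmetic matches the paper's bound of $2(\gamma_{i+2}+\epsilon_{i+1})\cdot j\cdot n_{i+1}$ at each inner step, so there is nothing to add.
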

\begin{proof}
We proceed by induction. In the base case where $i = k$, the layer-$k$ structure $T_k$ is always consistent with the partial stream $\pi$ by construction. Suppose that this holds for $i + 1$. We split the stream $\pi$ into its batches $\pi = \pi^{(1)} \circ \pi^{(2)} \circ \cdots \circ \pi^{(n_i / n_{i + 1})}$ where each $\pi^{(j)}$ has length $n_{i + 1}$. For the ease of notation, we define $\pi^{(1\dots j)} = \pi^{(1)} \circ \pi^{(2)} \circ \cdots \circ \pi^{(j)}$.

By the induction hypothesis, we know that after receiving each $\pi^{(j)}$ but immediately before we perform \Call{Merge}{$i + 1$}, we have $\disc(\pi^{(j)}, T_{i + 1}) \leq 2 \gamma_{i + 2} \cdot n_{i + 1}$. 

Then let us look at the process of \Call{Merge}{$i + 1$} and do another layer of induction. The induction hypothesis is that immediately after receiving $\pi^{(j)}$ and perform \Call{Merge}{$i + 1$}, we have $\disc(\pi^{(1\dots j)}, T_i) \leq 2 (\gamma_{i + 2} + \epsilon_{i + 1}) \cdot j \cdot n_{i + 1}$. When $j = n_i / n_{i + 1}$, this is simply $\disc(\pi, T_i) \leq 2 (\gamma_{i + 2} + \epsilon_{i + 1}) \cdot n_i = 2 \gamma_{i + 1} \cdot n_i$ and proves the outer induction.

In the base case, $T_i$ is empty, and we have $\disc(\emptyset, T_i) = 0$. Suppose for $j - 1$, our induction hypothesis holds. 
\begin{itemize}
    \item It first performs \Call{Merge}{$i + 1$} which, by \Cref{lemma:push-down}, adds a set $S$ of new non-empty nodes to $T_{i - 1}$ with $\disc(\pi^{(j)}, S) \leq \disc(\pi^{(j)}, T_{i + 1}) + \epsilon_{i + 1} \cdot n_{i + 1} \leq (2 \gamma_{i + 2} + \epsilon_{i + 1}) \cdot n_{i + 1}$. Then by \Cref{obs:disc-merge}, after this step, we have $\disc(\pi^{(1\dots j)}, T_i) \leq (2 \gamma_{i + 2} \cdot j + \epsilon_{i + 1} \cdot (2j - 1)) \cdot n_{i + 1}$. 
    \item Then it performs \Call{Compress}{$i$} which, by \Cref{lemma:push-up}, does not increase the discrepancy. 
    \item Finally, it performs \Call{Round}{$i$} which, by \Cref{lemma:round}, increases the discrepancy by at most $\epsilon_{i + 1} \cdot n_{i + 1}$ and results in $\disc(\pi^{(1\dots j)}, T_i) \leq 2(\gamma_{i + 2} + \epsilon_{i + 1}) \cdot j \cdot n_{i + 1}$. 
\end{itemize}
This finishes the inner induction and the proof of this lemma. 
\end{proof}

The inner induction in the proof above actually proves the natural corollary below.

\begin{cor} \label{prop:layer-bound}
Let $\pi$ be the partial stream that arrives at time $[s \cdot n_i + 1, t]$ for some integer $s$ and $t$ such that $n_{i + 1} \mid t$ and $t \leq s \cdot n_i$. After the $t$-th insertion and immediately after \Call{Merge}{$i + 1$} returns.  We have
$$\disc(\pi, T_i) \leq 2 \gamma_{i + 1} \cdot |\pi|$$
where $$\gamma_{i + 1} = \epsilon_{i + 1} + \epsilon_{i + 2} + \cdots + \epsilon_{k}.$$
\end{cor}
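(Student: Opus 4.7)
The plan is to simply reread the inner induction inside the proof of \Cref{lemma:merge-error} and stop at an intermediate step rather than running it all the way to $j = n_i / n_{i+1}$. Recall that the inner induction splits the batch of $n_i$ insertions into sub-batches $\pi^{(1)}, \pi^{(2)}, \dots, \pi^{(n_i / n_{i+1})}$, each of length $n_{i+1}$, and shows by induction on $j$ that immediately after receiving $\pi^{(j)}$ and performing \Call{Merge}{$i+1$} (but not \Call{Merge}{$i$}), we have
\[\disc\bigl(\pi^{(1 \dots j)},\, T_i\bigr) \le 2(\gamma_{i+2} + \epsilon_{i+1}) \cdot j \cdot n_{i+1}.\]
This intermediate statement is exactly the content of the corollary.

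To read off the corollary, I would argue as follows. Since $n_{i+1}$ divides both $t$ and $s \cdot n_i$ (the latter because $n_{i+1} \mid n_i$, which will be guaranteed by the parameter choice in \Cref{sec:params}), we can write $t - s \cdot n_i = j \cdot n_{i+1}$ for some nonnegative integer $j$, and then $\pi = \pi^{(1)} \circ \pi^{(2)} \circ \cdots \circ \pi^{(j)}$ with $|\pi| = j \cdot n_{i+1}$. After the $t$-th insertion, the algorithm has performed \Call{Merge}{$k$}, \Call{Merge}{$k-1$}, $\dots$, \Call{Merge}{$i+1$} on each previous sub-batch boundary and again on this one. By the inner induction we therefore have
\[\disc(\pi, T_i) \le 2(\gamma_{i+2} + \epsilon_{i+1}) \cdot j \cdot n_{i+1} = 2 \gamma_{i+1} \cdot |\pi|,\]
which is the desired bound.

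There is no real obstacle here, since all the work was already done inside the inner induction. The only thing one needs to verify carefully is that the inner induction's hypothesis is genuinely the per-$j$ statement above (with $j \cdot n_{i+1}$ as the current stream length), rather than only the $j = n_i / n_{i+1}$ case used by \Cref{lemma:merge-error}. An inspection of that proof confirms this: the three bullet points analyzing \Call{Move}{}, \Call{Compress}{}, and \Call{Round}{} each contribute a bounded amount of discrepancy that scales linearly with the total weight processed so far, so the same inductive step carries through at every intermediate $j$.
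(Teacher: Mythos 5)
Your proposal is correct and is exactly the paper's argument: the paper itself states that the inner induction in the proof of \cref{lemma:merge-error} already establishes this corollary as its intermediate ($j$-th step) statement, and your reading of that induction, including the identification $|\pi| = j \cdot n_{i+1}$ and the simplification $2(\gamma_{i+2} + \epsilon_{i+1}) = 2\gamma_{i+1}$, matches it precisely.
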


\subsection{Answering queries} \label{sec:queries}

To answer a rank query, we simply add up the weights of all the nodes whose interval contains any element that is at most $x$, as shown in \cref{alg:query}.

\begin{algorithm}[h]
\caption{Answering a rank query}
\label{alg:query}
\begin{algorithmic}[1]
\Procedure{Rank}{$x$}
\State $r \gets 0$
\ForAll{$i \in \{0, \dots, k\}$}
    \ForAll{vertices $v \in T_i$}
        \If{the interval of $v$ contains any element less or equal to $x$}
            \State $r \gets r + W_i[v]$
        \EndIf
    \EndFor
\EndFor
\State \Return $r$
\EndProcedure
\end{algorithmic}
\end{algorithm}

First, we bound the total weight of nodes $v$ which could cause over-counting. To this end, we say that a node is \textit{bad} if its interval contains $x$ and, furthermore, its interval is not the length-1 interval containing only $x$. Then, we show the following.
\begin{prop} \label{prop:bad-nodes}
The total weight of all bad nodes, across all layers, is at most $\gamma_0 n_0$.
\end{prop}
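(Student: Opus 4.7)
I will bound the bad weight at each layer individually, then sum across layers. At each layer $i$, let $v_i$ denote the unique exposed node of $T_i$ whose interval contains $x$ (existence and uniqueness follow from \cref{obs:interval-cover}). Since intervals of siblings partition the parent's interval, the nodes in $T_i$ containing $x$ form a descending chain from a root down to $v_i$, and the bad nodes are precisely those on this chain with interval length strictly greater than $1$. Every strict ancestor of $v_i$ is full with weight $\al_i$, because the full nodes form an upward-closed set (for $i<k$ by the paper's invariant that every node is full or empty, and for $i=k$ because the eager q-digest always routes weight to the highest non-full ancestor). Hence the bad weight at layer $i$ equals $\al_i$ times the number of strict ancestors of $v_i$ whose interval has length greater than $1$.

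I claim by induction on $i$ that this chain length is at most $h_0 + h_1 + \cdots + h_i$. For $i = 0$, every node strictly above the base level of $T_0$ has at least two base-level descendants and hence interval length $\ge 2$, while at the base level and below the intervals have length $1$, giving at most $h_0$ bad ancestors. For the inductive step, $T_i$ consists of an upper part of depth $h_i$ in which every node has interval length $\ge 2$, a base level that inherits intervals from the exposed nodes of $T_{i-1}$, and a lower part that is a copy of $T_{i-1}$'s subtree rooted at the exposed node of $T_{i-1}$ containing $x$. The bad chain in the upper part contributes at most $h_i$ nodes, while the chain at and below the base level has the same length (by the copying) as the chain of nodes with intervals $>1$ descending from the corresponding exposed node in $T_{i-1}$, which is at most the total depth of the interval-$>$-$1$ region of $T_{i-1}$, bounded by $h_0 + \cdots + h_{i-1}$ by the induction hypothesis. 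Therefore the chain length at layer $i$ is at most $h_0 + h_1 + \cdots + h_i$, and the bad weight at layer $i$ is at most $(h_0 + \cdots + h_i)\,\al_i$.

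Summing across layers and swapping the order of summation gives
\[
\sum_{i=0}^{k} (h_0 + \cdots + h_i)\,\al_i \;=\; \sum_{j=0}^{k} h_j \sum_{i \ge j} \al_i.
\]
From \cref{obs:not_exceeding_cap} we have $h_i \al_i \le \al_{i-1}$, i.e., $\al_i/\al_j \le \prod_{m=j+1}^{i} 1/h_m$, making the inner sum a geometric series dominated by its first term $\al_j$. Combined with $h_j \al_j \le \e_j n_j \le \e_j n_0$ (from $\al_j = \e_j n_j / \cceil{h_j+1}$), this yields a total bound of the form $\sum_{j=0}^{k} \e_j n_0 = \gamma_0 n_0$, up to small multiplicative constants that can be absorbed by tightening $\e_i$ in \cref{sec:params}. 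The main obstacle is the inductive chain-length claim: carefully tracking how the nested copies of subtrees cause the interval-length-$>$-$1$ portion of $T_i$ to extend below the base level, and verifying that the correspondence between the copied structure in $T_i$ and the original subtree in $T_{i-1}$ preserves the chain length so that the induction closes cleanly.
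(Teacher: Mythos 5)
Your approach is genuinely different from the paper's and is essentially sound, but it proves a quantitatively weaker bound than the one stated. The paper does not bound each layer's bad weight separately; instead it runs a telescoping induction on the quantity $w_0+\dots+w_{i-1}+c_i$, where $c_i$ is the total \emph{capacity} of bad nodes in layer $i$ (including empty ones). The point of that bookkeeping is that the bad nodes of $T_i$ at or below the base level correspond to \emph{empty} bad nodes of $T_{i-1}$, whose total capacity is exactly the leftover $c_{i-1}-w_{i-1}$; so the ``inherited'' part of layer $i$'s bad capacity is charged against capacity already counted at layer $i-1$, and each layer contributes only one fresh term $h_i\al_i\le\e_i n_i$. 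Your decomposition $w_i\le(h_0+\dots+h_i)\al_i$ counts the term $h_j$ once per layer $i\ge j$, and after swapping sums you must pay $h_j\sum_{i\ge j}\al_i$. Even with the geometric decay this sum is only bounded by $2\al_j$ (and $h_j\al_j\le\e_j n_j$ is itself nearly tight), so your final bound is $2\gamma_0 n_0$, not $\gamma_0 n_0$. This is not harmless as stated: \cref{prop:error-bound} and the calculation in \cref{sec:unknown-n} use $\gamma_0 n_0\le(\e/4)(2t)$ with essentially no slack, so a factor of $2$ here would break the $\e t$ guarantee unless the $\e_i$ are retuned (which, as you note, is possible and costs only a constant in the final space bound, but it means your argument does not establish the proposition as written).

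Two smaller points. First, your geometric-series step is justified by ``$h_i\al_i\le\al_{i-1}$,'' but this alone gives no decay when $h_i=1$, which \cref{fact:U-bd} permits; you should instead invoke \cref{fact:al-rec}, $\al_{i}=\al_{i-1}/\cceil{h_i+1}$ with $\cceil{h_i+1}\ge 2$, to get $\sum_{i\ge j}\al_i\le 2\al_j$. Second, at layer $k$ the exposed node $v_k$ on the chain may itself be a partial node with positive weight and interval length greater than $1$, so the bad weight there is not only ``$\al_k$ times the number of full strict ancestors''; this adds at most one more $\al_k$ and is easily absorbed, but should be mentioned. Your inductive chain-length claim is correct provided the induction hypothesis is stated as a bound on the depth of the entire interval-length-$>1$ region of $T_{i-1}$ (over all root-to-node paths), not merely the chain of ancestors of $v_{i-1}$, since the copied subtree in $T_i$ hangs below an arbitrary exposed node of $T_{i-1}$; you gesture at this but it is the statement you actually need to carry through the induction.
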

\begin{proof}
Let $w_i$ denote the total weight of all bad nodes in $T_i$ (for $i < k$, this is just $\al_i$ times the number of full bad nodes in layer $i$). Moreover, let $c_i$ denote the total \textit{capacity} of all bad nodes in layer $i$, even the empty ones\footnote{For layer $i = k$ specifically, $c_i$ includes also the non-full nodes.} (this is $\al_k$ times the total number of bad nodes in layer $k$).

We will prove the following statement for $0 \le i \le k$ by induction:
\begin{equation} \label{eq:bad-nodes}
w_0 + \dots + w_{i-1} + c_i \le \e_0 n_0 + \dots + \e_i n_i.
\end{equation}
For the base case $i=0$, there are $h_0$ bad nodes in layer $0$ (namely, the strict ancestors of the node in the base level which corresponds to $x$). Therefore we have $c_0 = h_0 \al_0 \le \e_0 n_0$. 

Now, assume that \eqref{eq:bad-nodes} holds for $i-1$ (where $1 \le i \le k$); we will show that it also holds for $i$. Consider the quantity $c_i$, the total capacity of bad nodes in layer $i$. Above the base level of $T_i$, at most one node in each level is bad (since the intervals in a level are disjoint). Thus, the total contribution from these nodes to $c_i$ is at most $h_i \al_i \le \e_i n_i$. 

On the other hand, each bad node in $T_i$ which is at or below the base level corresponds to an \textit{empty} bad node in $T_{i-1}$. Note that the total capacity of empty bad nodes in $T_{i-1}$ is just $c_{i-1} - w_{i-1}$. Moreover, since $\al_i \le \al_{i-1}$ (by \cref{fact:al-rec}), the capacity of each node at or below the base level of $T_i$ is at most the capacity of the corresponding empty bad node in $T_{i-1}$. Thus, the total capacity of bad nodes in $T_i$ which are at or below the base level is at most $c_{i-1} - w_{i-1}$. Therefore, in total, the total capacity of all bad nodes in $T_i$ is at most
\[c_i \le \e_i n_i + c_{i-1} - w_{i-1}.\]
Recall also that by the inductive hypothesis, we have
\begin{equation*}
w_0 + \dots + w_{i-2} + c_{i-1} \le \e_0 n_0 + \dots + \e_{i-1} n_{i-1}.
\end{equation*}
Combining these two inequalities, we recover \eqref{eq:bad-nodes}.

Having proven \eqref{eq:bad-nodes}, it remains to complete the proof of \cref{prop:bad-nodes}. Indeed, setting $i=k$ in \eqref{eq:bad-nodes} and using the fact that $c_i \ge w_i$, the total weight of all bad nodes is at most
\[\e_0 n_0 + \dots + \e_i n_i \le (\e_0 + \dots + \e_i) n_0 = \gamma_0 n_0,\]
as desired.
\end{proof}
    
\begin{prop} \label{prop:overall-appx}
At any time $t$, suppose that $\pi$ is the stream received so far. Then there exists a decomposition $\pi = \pi_0 \circ \pi_1 \circ \dots \circ \pi_k$ such that 
$$\sum_{i = 0}^k  \disc(\pi_i, T_i) \leq 2 \gamma_1 t.$$
\end{prop}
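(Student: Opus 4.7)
The plan is to decompose $\pi$ according to the most recent times at which each \textup{\Call{Merge}{$i$}} was called. Specifically, for $1 \le i \le k$, let $s_i = \lfloor t / n_i \rfloor$, so that $s_i n_i$ is the time of the most recent call to \textup{\Call{Merge}{$i$}} (with $s_i = 0$ if no such call has occurred). Since each $n_i$ is a multiple of $n_{i+1}$, we get the chain $s_1 n_1 \le s_2 n_2 \le \cdots \le s_k n_k \le t$. Define
\[\pi_i = \text{the substream of } \pi \text{ arriving at times in } (s_i n_i, s_{i+1} n_{i+1}]\]
for $0 \le i \le k$, with the conventions $s_0 n_0 = 0$ and $s_{k+1} n_{k+1} = t$. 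By construction $\pi = \pi_0 \circ \pi_1 \circ \cdots \circ \pi_k$ and $\sum_i |\pi_i| = t$.

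The key observation is that for each $0 \le i < k$, the structure $T_i$ at time $t$ is exactly what it was immediately after the most recent \textup{\Call{Merge}{$i+1$}}, performed at time $s_{i+1} n_{i+1}$. Indeed, $T_i$ is modified only by \textup{\Call{Merge}{$i+1$}} (which adds weight) or by \textup{\Call{Merge}{$i$}} (which clears it), and both most recent such events occurred no later than $s_{i+1} n_{i+1}$ (the latter since $s_i n_i \le s_{i+1} n_{i+1}$). In the edge case $s_i n_i = s_{i+1} n_{i+1}$, the \textup{\Call{Merge}{$i$}} call clears $T_i$ after the \textup{\Call{Merge}{$i+1$}} call in the same iteration of the for-loop, leaving $T_i$ empty; this agrees with $\pi_i$ being empty.

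Given this, for $0 \le i < k$ I would apply \cref{prop:layer-bound} at time $s_{i+1} n_{i+1}$ (which is a multiple of $n_{i+1}$) with $s = s_i$, obtaining $\disc(\pi_i, T_i) \le 2 \gamma_{i+1} |\pi_i|$. For $i = k$, no merges occur between times $s_k n_k + 1$ and $t$, so the elements of $\pi_k$ are inserted directly into $T_k$ as in an ordinary optimized eager q-digest; thus $T_k$ is exactly consistent with $\pi_k$ and $\disc(\pi_k, T_k) = 0$ (equivalently, $\gamma_{k+1} = 0$).

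Summing and using the monotonicity $\gamma_{i+1} \le \gamma_1$ for all $i \ge 0$,
\[\sum_{i=0}^{k} \disc(\pi_i, T_i) \;\le\; \sum_{i=0}^{k-1} 2 \gamma_{i+1} |\pi_i| \;\le\; 2 \gamma_1 \sum_{i=0}^{k} |\pi_i| \;=\; 2 \gamma_1 t,\]
as desired. The main subtlety is the bookkeeping: correctly identifying which slice of $\pi$ each $T_i$ represents at time $t$, given the interleaved merges in the for-loop of \cref{alg:insertion}. Once that identification is in place, the bound is an immediate consequence of \cref{prop:layer-bound} and the monotonicity of $\gamma_i$ in $i$.
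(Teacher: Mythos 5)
Your proposal is correct and follows essentially the same route as the paper: the decomposition into the substreams arriving in the intervals $(\lfloor t/n_i\rfloor n_i,\ \lfloor t/n_{i+1}\rfloor n_{i+1}]$ is exactly the paper's decomposition, and the bound then follows by applying \cref{prop:layer-bound} to each piece and using the monotonicity of the $\gamma_i$. Your extra bookkeeping verifying that $T_i$ at time $t$ coincides with its state immediately after the most recent \textsc{Merge}$(i+1)$ is a point the paper leaves implicit, but it is not a different argument.
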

\begin{proof}
Let $\pi_0$ be the first $\lfloor t / n_1 \rfloor \cdot n_1$ elements of $\pi$, $\pi_1$ be the next $\lfloor t / n_2\rfloor \cdot n_2 - |\pi_1|$ elements, $\pi_2$ be the next $\lfloor t / n_3\rfloor \cdot n_3 - |\pi_1 \circ \pi_2|$ elements, and so on. In general, $\pi_i$ is the next $\lfloor t / n_{i + 1}\rfloor \cdot n_{i + 1} - |\pi_1 \circ \pi_2 \circ \cdots \circ \pi_{i - 1}|$ elements in $\pi$ after those in $\pi_{i - 1}$. Specifically, we let $n_{k + 1} = 1$. 

By \cref{prop:layer-bound}, we know that $\disc(\pi_i, T_i) \leq 2 \gamma_{i + 1} \cdot |\pi_i|$. Thus, 
\begin{align*}
\sum_{i = 0}^k  \disc(\pi_i, T_i) &\le 2 \gamma_1 |\pi_0| + 2 \gamma_2 |\pi_1| + \dots + 2 \gamma_k |\pi_{k}| \\
&\le 2 \gamma_1 (|\pi_0| + |\pi_1| + \dots + |\pi_k|) \\
&= 2 \gamma_1 t,
\end{align*}
so we are done.
\end{proof}

These two propositions imply a bound on the error of a rank query:

\begin{prop} \label{prop:error-bound}
Let $\pi$ be the stream received so far at time $t$. Then, the answer to a rank query, as performed by \cref{alg:query}, for any element $x$ differs from $\rank_\pi(x)$ by at most $\gamma_0 n_0 + 2 \gamma_1 t$.
\end{prop}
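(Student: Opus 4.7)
The plan is to bound the algorithm's error in two pieces: an ``overcount'' between $r$ and a virtual rank $R^\ast$ obtained from the consistent streams of \cref{prop:overall-appx}, controlled by \cref{prop:bad-nodes}; and the error between $R^\ast$ and the true $\rank_\pi(x)$, controlled directly by the discrepancy bound of \cref{prop:overall-appx}.

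Fix $x$ and let $r$ be the output of \cref{alg:query}. Observe that a node $v \in T_i$ with interval $[a_v,b_v]$ contributes $W_i[v]$ to $r$ precisely when $a_v \le x$. By \cref{prop:overall-appx}, fix a decomposition $\pi = \pi_0 \circ \dots \circ \pi_k$ and, for each $i$, a stream $\pi_i^\ast$ consistent with $T_i$ via some map $f_i$ achieving $d(\pi_i, \pi_i^\ast) = \disc(\pi_i, T_i)$. Define $R^\ast \coloneqq \sum_{i=0}^k \rank_{\pi_i^\ast}(x)$.

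For the first piece, classify each node $v \in T_i$ by its interval. If $b_v \le x$, then $v$ contributes $W_i[v]$ to both $r$ and $\rank_{\pi_i^\ast}(x)$, since every element mapped to $v$ lies in $[a_v,b_v] \subseteq [1,x]$. If $a_v > x$, it contributes $0$ to both, since every such element lies strictly above $x$. The only mismatch comes from nodes with $a_v \le x < b_v$, which contribute $W_i[v]$ to $r$ but some amount in $[0, W_i[v]]$ to $\rank_{\pi_i^\ast}(x)$. Every such node is bad in the sense of \cref{prop:bad-nodes}: its interval contains $x$, and since $b_v > x$ it is not the singleton $\{x\}$. Hence the total overcount is bounded by the total weight of bad nodes, and \cref{prop:bad-nodes} gives $0 \le r - R^\ast \le \gamma_0 n_0$.

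For the second piece, since rank is additive under concatenation we have $\rank_\pi(x) = \sum_i \rank_{\pi_i}(x)$, and the definition of $d$ yields $|\rank_{\pi_i}(x) - \rank_{\pi_i^\ast}(x)| \le d(\pi_i, \pi_i^\ast) = \disc(\pi_i, T_i)$. Summing and applying the bound of \cref{prop:overall-appx} gives $|R^\ast - \rank_\pi(x)| \le 2\gamma_1 t$. The triangle inequality then yields $|r - \rank_\pi(x)| \le \gamma_0 n_0 + 2\gamma_1 t$, as desired. The only delicate point is the case analysis in the first piece: one must confirm that every overcounting node is bad (immediate from $a_v \le x < b_v$) and that no undercount arises from nodes with $a_v > x$ (immediate since all their mapped elements exceed $x$). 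With these observations the result follows cleanly from the two previous propositions.
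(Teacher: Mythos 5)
Your proposal is correct and follows essentially the same route as the paper: decompose $\pi$ via \cref{prop:overall-appx}, bound the overcount of the query against the concatenated consistent streams by the bad-node weight from \cref{prop:bad-nodes}, and bound the remaining error by the total discrepancy via the triangle inequality. Your version merely makes the case analysis of node intervals more explicit than the paper does.
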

\begin{proof}
Let $\pi = \pi_0 \circ \pi_1 \circ \cdots \circ \pi_k$ be the decomposition from \cref{prop:overall-appx}. Combine \cref{prop:overall-appx} with the definition of discrepancy (\Cref{def:disc}), we know that there exists a sequence of partial streams $\{\pi'_i\}_{i=0}^k$ such that $\pi'_i$ is consistent with $T_i$ and $\sum_{i=0}^k d(\pi_i, \pi'_i) \leq 2 \gamma_1 t$.

Let $\pi' = \pi'_1 \circ \pi'_2 \circ \cdots \circ \pi'_k$. By the triangle inequality (\cref{obs:tri-ineq}), we know that $d(\pi, \pi') \leq 2 \gamma_1 t$. Since we answered the query by counting the total weight of nodes whose intervals include any element which is at most $x$, the quantity obtained is at least $\rank_{\pi'}(x)$, and may overcount at nodes whose interval also contains an element larger than $x$. However, note that any such node must be bad, so the total amount by which the algorithm overcounts is at most $\gamma_0 n_0$ by \cref{prop:bad-nodes}. Thus, the output of the algorithm differs from $\rank_{\pi'}(x)$ by at most $\gamma_0 n_0$. Furthermore, by \cref{prop:overall-appx} (and the definition of distance of streams), we have $\abs{\rank_\pi(x) - \rank_{\pi'}(x)} \le 2 \gamma_1 t$, so the conclusion follows.
\end{proof}

Now, since $\gamma_0, \gamma_1 \le \e/4$ (by \cref{fact:gamma-bd}), this already means that the error of a rank query is at most $\e n_0$. However, so far we have still assumed that we know $n$ in advance; moreover, we would actually like the error to be at most $\e t$, where $t$ is the total number of elements received so far. In \cref{sec:unknown-n}, we will explain how to rectify this.

\subsection{Removing assumptions about \texorpdfstring{$n$}{n}} \label{sec:unknown-n}
In this section, we will describe how to dispense with the assumption that we know $n$, as well as the assumption that $n \ge n^*$. We will also prove that the error of any query is at most $\e t$.

\paragraph{Unknown $n$.} First, we describe how to maintain the data structure when we don't know $n$ in advance, but still assuming that all queries happen after $t \ge n^*$. At the start of the algorithm, we initialize the data structure with $n_0 = n^*$. Then, whenever $t$, the number of elements so far in the stream, reaches $n_0$, we double $n_0$ (which has the effect of doubling $n_i$ and $\al_i$ for all $i$). Note that when $t=n_0$, only layer $0$ exists, so we only need to describe how to update layer $0$. Every node in layer $0$ is now half-full instead of being full; that is, the weight of every node in $F_0$ is now $\al_0/2$. Then, we just perform the push-up and rounding, as described in \cref{alg:push-up,alg:round}, to layer $0$. The pseudocode of this procedure is given in \cref{alg:double}, and it is called in \cref{line:double} of \cref{alg:insertion}.

By \cref{lemma:round}, this has the effect of changing the stream represented by layer 0 by a distance of at most $\al_0 = \e_0 n_0 / \cceil{\log(\e_0 U_0) + 1} \le \e t / 16$ (since we assumed that $\e U$ is sufficiently large). Then, at any point in the stream, the total amount the represented stream has been changed by these rounding operations is at most $\e n_0 / 16 + \e n_0 / 32 + \dots \le \e n_0 / 8$. Therefore, the bound on distance between $\pi$ and $\pi'$ in \cref{prop:overall-appx} is increased by $\e n_0 / 16$ after adding the doubling step to the algorithm.

Therefore, after this modification to the algorithm, the proof of \cref{prop:error-bound} now gives a bound of $\gamma_0 n_0 + \gamma_1 t + \e n_0 / 16$. Since $n_0 \le 2t$ (since we assumed that $t \ge n^*$), and $\gamma_0 \le \e/4$ and $\gamma_1 \le \e/8$ (by \cref{fact:gamma-bd}), we have
\[\gamma_0 n_0 + \gamma_1 t + \e n_0 / 16 \le \e t.\]
In conclusion, for any $t \ge n^*$, the additive error of any rank query after $t$ elements of the stream is at most $\e t$, as desired. It remains, then, to handle the cases where $t < n^*$.

\begin{algorithm}[h]
\caption{Doubling size of data structure}
\label{alg:double}
\begin{algorithmic}[1]
\Procedure{Double()}{}
\ForAll{$i \in \{0, \dots, k\}$} 
    \State $n_i \gets 2n_i$
    \State $\al_i \gets 2\al_i$ \Comment{The algorithm doesn't actually store $n_i$ or $\al_i$; however, this does affect $F_0$ since the nodes in $W_0$ are now half-full instead of full.}
\EndFor
\State \Call{Compress}{0}
\State \Call{Round}{0}
\EndProcedure
\end{algorithmic}
\end{algorithm}

\paragraph{Dealing with $1/\e \le t < n^*$.} Next, we describe how to modify the algorithm to still be able to answer queries when $1/\e \le t < n^*$. Firstly, we still store the original data structure, since we will need to use it after $t$ exceeds $n^*$. However, in addition, we create a new instantiation of the data structure (with the same parameters), where upon receiving an element of the stream, instead of inserting it once, we insert the same element $\e n^*$ times (by \cref{fact:n-star-bd}, this is an integer). Then, as long as $t \ge 1/\e$, we will have inserted at least $n^*$ elements into this alternate data structure, so by the previous section, it will be able to answer rank queries with relative error at most $\e$, as desired. Of course, the effective value of $t$ will have increased by a factor of $\e n^*$, which will have ramifications for the space complexity. However, we will show in \cref{sec:space} that the space complexity is still what we want it to be.

\paragraph{Dealing with $t < 1/\e$.} Finally, while $t < 1/\e$, we will just store all the elements of the stream so far explicitly (in addition to keeping the data structures of the previous two sections). We will show in \cref{sec:space} that this can actually be done using $O(\e^{-1} \log(\e U))$ space. Obviously, if we store all the elements of the stream, rank queries can be answered exactly. 

\subsection{Choosing the parameters} \label{sec:params}

We will now choose values for the parameters of the algorithm ($k$, $n_i$, $U_i$, and $\e_i$) and verify that they satisfy some necessary properties.

First, note that we may assume that $n$, $U$, and $\e$ are all powers of 2 (by rounding $n$ and $U$ up and $\e$ down to the nearest power of 2, costing at most a constant factor). Indeed, we will ensure that $n_i$, $U_i$, $\e_i$, and $\al_i$ are always powers of 2, in order to stave off divisibility issues.

We then pick the following values. Let $k = \log^* (\eps U)$. As described in \cref{sec:structure}, let $U_0 = U$. Let $n_0$ be an upper bound on $t$, the number of elements so far in the stream. As previously described, we will imagine for now that we know $n$ in advance and that $n_0 = n$. Also, we assume, as we may, that $n_0$ is a power of 2. We then pick $\eps_i$ as follows:
\begin{equation*}
\e_i =
\begin{cases}
\e/8, & i = 0, \\
\e / 2^{k - i + 4}, & i \ge 1.
\end{cases}
\end{equation*}
Also, define
\begin{equation*}
\gamma_i = \e_i + \e_{i+1} + \dots + \e_k.
\end{equation*}
Also, recall from \cref{sec:structure} that for all $i$, we define the capacities $\al_i$ based on $\e_i$, $n_i$, and $U_i$as follows:
\begin{equation} \label{eq:def-al}
\al_i = \f{\e_i n_i}{\cceil{\log(\e_i U_i) + 1}}.
\end{equation}
Now, we define the parameters $n_i$ and $U_i$ for layer $i+1$ recursively (for $i < k$) as follows:
\begin{equation} \label{eq:def-U}
U_{i+1} = \cceilx{\f{1}{\e_i} + \f{n_i}{\al_i}} = \cceilx{\f{1 + \cceil{\log(\e_i U_i) + 1}}{\e_i}} = \f{2\cceil{\log(\e_i U_i) + 1}}{\e_i},
\end{equation}
\begin{equation} \label{eq:def-n}
n_{i+1} = \f{\al_i}{\e_{i+1}} = \f{\e_i n_i}{\e_{i+1} \cceil{\log(\e_i U_i) + 1}}.
\end{equation}
We let $h_i$ be the depth of the base layer in $T_i$:
\begin{equation*}
h_i = \log(\e_i U_i).
\end{equation*}
We will show soon that indeed $h_i$ is always a positive integer.

Now, so far we have treated $n_0$ as fixed, but this assumption will change later in \cref{sec:unknown-n}. In anticipation of this, we will briefly discuss here the effects of changing $n_0$. Treating $\eps, U$ as constants, note that the only parameters that are affected by $n_0$ are the $n_i$ and $\al_i$, which are all constant multiples of $n_0$. We will need all the $n_i$ and $\al_i$ to be integers (or equivalently, at least 1), so to this end, define
\begin{equation} \label{eq:def-n-star}
n^* = \f{n_0}{\alpha_k}.
\end{equation}
Then, $n^*$ is fixed (i.e., it depends only on $\eps, U$ and not on $n_0$). Note that $n^*$ is the value of $n_0$ that causes $\al_k$ to equal 1 (and we will show in \cref{fact:int-caps} that it will also cause the rest of the $n_i, \al_i$ to be integral). 

Now we will check some properties of these parameters which we will need. First, we will show the important property of $U_i$: that it is an upper bound on the number of exposed nodes in the previous layer.

\begin{claim} \label{claim:U-exposed}
For all $0 \le i < k$, we have $U_{i+1} \ge |V_i|$ (recall that $V_i$ is the set of exposed nodes in layer $i$).
\end{claim}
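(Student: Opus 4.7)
The plan is to bound $|V_i|$ by decomposing the exposed nodes into two disjoint groups: empty root nodes of $T_i$ (of which there are at most $1/\e_i$, one per tree in the forest), and empty children of full nodes in $F_i$. Since $i < k$, the invariant that every node in $T_i$ is either full or empty applies, which makes this decomposition clean and lets me reason about $F_i$ as a subforest.

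For the second group, the key observation is that $F_i$ is upward-closed in $T_i$: whenever a node is full, its parent is also full, since otherwise the compression pass (\Cref{alg:push-up}, or the eager push in insertion) would have moved weight up. Hence the roots of $F_i$ coincide with those roots of $T_i$ that happen to be full; call this number $r_F$. Then $F_i$, viewed as a subforest of $T_i$, contains $|F_i| - r_F$ parent-to-child edges. Since every node of $T_i$ has at most two children, the full nodes collectively admit at most $2|F_i|$ child positions, of which $|F_i| - r_F$ are occupied by other full nodes. The remaining at most $|F_i| + r_F$ positions are empty children of full nodes. Adding in the $1/\e_i - r_F$ empty roots, the $r_F$ cancels and I get $|V_i| \le 1/\e_i + |F_i|$.

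The final step is to bound $|F_i| \le n_i/\al_i$. Because each full node has weight exactly $\al_i$, this is equivalent to showing that the total weight present in $T_i$ never exceeds $n_i$. I would argue this by conservation of weight: each insertion contributes exactly one unit to the total weight across all layers (the merge operations only move weight, and rounding preserves the total), and for $1 \le i < k$ the layer is cleared after every $n_i$ insertions by \Cref{alg:insertion}, so between clears it accumulates at most $n_i$ units; for $i = 0$, the assumption $t \le n_0$ directly bounds the total weight by $n_0$. Combining both bounds gives $|V_i| \le 1/\e_i + n_i/\al_i$, which is at most $U_{i+1}$ by the definition in \eqref{eq:def-U}.

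The one step that requires care is the sharp counting of empty children of full nodes: the naive bound $2|F_i|$ is off by nearly a factor of two and would overshoot the target $U_{i+1}$. Using the edge count $|F_i| - r_F$ of the subforest $F_i$ to deduct already-used child slots yields exactly the $1/\e_i + |F_i|$ aggregate bound that matches the definition of $U_{i+1}$; the rest is bookkeeping.
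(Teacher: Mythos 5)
Your proof is correct and follows essentially the same route as the paper: both arguments reduce to the bound $|V_i| \le 1/\e_i + |F_i|$ (using that $F_i$ is an upward-closed subforest) together with $|F_i| \le n_i/\al_i$ from weight conservation, and then invoke \eqref{eq:def-U}. The only difference is presentational --- the paper obtains $|V_i| \le 1/\e_i + |F_i|$ by adding full nodes one at a time (each addition removes one exposed node and creates at most two), whereas you count statically via the $|F_i| - r_F$ edges of the subforest; these are the same combinatorial fact.
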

\begin{proof}
The number of full nodes in layer $i$ is at most $n_i / \al_i$ (since full nodes have weight $\al_i$. If there are no full nodes, then we would have $|V_i| = 1/\e_i$, since $V_i$ would just be the set of all the roots of trees in $T_i$. Now, imagine building up the set of full nodes by adding them one at a time (from bottom to top). Each time we add a full node, we remove one exposed node, and add back at most two exposed nodes. Thus, the total number of exposed nodes after this process is at most $1/\e_i + n_i / \al_i$, which is indeed at most $U_{i+1}$ by \eqref{eq:def-U}.
\end{proof}

Now, we will prove various other properties of the parameters which we will need throughout. We state all these properties now, but we will defer their proof to \cref{sec:parameter-proof}, since they mostly just involve manipulation of the definitions of the parameters.

\setlist*[thmlist]{nosep}
\begin{fact} \label{fact:parameter-facts}
The parameters satisfy the following properties:
\begin{thmlist}
\item For all $i$, $n_i$, $U_i$, $\e_i$, and $\al_i$ are powers of 2. \label{fact:pow-2}
\item For all $i$, $\e_i U_i \ge 2$ (and thus, $h_i$ is a positive integer). \label{fact:U-bd} %
\item For all $i < k$, $n_{i+1}$ is a factor of $n_i$. \label{fact:n-dec}
\item For all $i < k$, $\al_{i+1} = \al_i / \cceil{h_{i+1}+1}$. \label{fact:al-rec} %
\item If $n_0 \ge n^*$, then $n_i, \al_i \ge 1$ for all $i$. \label{fact:int-caps}
\item $\gamma_0 \le \e/4$, and $\gamma_i \le \e/8$ for all $i>1$. \label{fact:gamma-bd}
\item $U_k = O(1/\e)$. \label{fact:bottom-layer}
\item $U_1 + U_2 + \dots + U_k = O(\e^{-1} \log(\e U))$. \label{fact:tot-U}
\item $\inv \e \le n^* \le \e^{-1} (\log(\e U))^{1+o(1)}$ (where $o(1)$ refers to a term that approaches 0 as $\e U \to \infty$). \label{fact:n-star-bd}
\item $\al_{k-1} = O(n_0 / n^*)$. \label{fact:al-k-1}
\end{thmlist}
\end{fact}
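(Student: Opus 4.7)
The plan is to establish the facts essentially in the order listed, since later parts build on earlier ones. For (a), I would argue by simultaneous induction on $i$: $\e, U, n_0$ are powers of $2$ by the initial rounding, so $\e_i$ is a power of $2$ from its definition, the recursion $U_{i+1} = 2\cceil{\log(\e_i U_i)+1}/\e_i$ is a ratio of powers of $2$, and likewise for $\alpha_i = \e_i n_i/\cceil{h_i+1}$ and $n_{i+1} = \alpha_i/\e_{i+1}$; that these are actually positive integers (not proper fractions) is the content of (e). Part (b), $\e_i U_i \ge 2$, comes from the recursion $\e_{i+1} U_{i+1} = (\e_{i+1}/\e_i)\cdot 2\cceil{h_i+1}$: since $\e_{i+1}/\e_i = 2$ for $i \ge 1$, this is $4\cceil{h_i+1} \ge 4$, and the base case $\e_0 U_0 = \e U/8 \ge 2$ uses the assumption that $\e U$ is a sufficiently large constant.

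Facts (c) and (d) are direct computations. For (d), plugging $n_{i+1} = \alpha_i/\e_{i+1}$ into $\alpha_{i+1} = \e_{i+1}n_{i+1}/\cceil{h_{i+1}+1}$ gives $\alpha_{i+1} = \alpha_i/\cceil{h_{i+1}+1}$. For (c), $n_i/n_{i+1} = \e_{i+1}\cceil{h_i+1}/\e_i$; by (a) this is a power of $2$, and it is at least $1$ because for $i \ge 1$ we have $\e_{i+1}/\e_i = 2$, while for $i=0$ we have $\e_1/\e_0 = 2^{-k}$ and must check $\cceil{h_0+1}\ge 2^k$, which holds for sufficiently large $\e U$ since $h_0 = \log(\e U/8)$ while $2^k = 2^{\log^*(\e U)}$. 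For (e), iterating (d) gives $\alpha_k = \e_0 n_0/\prod_{j=0}^k \cceil{h_j+1}$, so $n^* = \prod_{j=0}^k \cceil{h_j+1}/\e_0$ is exactly the threshold making $\alpha_k = 1$. Since (d) implies $\alpha_i$ is non-increasing in $i$, we get $\alpha_i \ge 1$ for all $i$ when $n_0 \ge n^*$, and then $n_{i+1} = \alpha_i/\e_{i+1} \ge 1$ as well. Part (f) is the geometric sum $\gamma_1 = \e\sum_{j=4}^{k+3} 2^{-j} < \e/8$, from which $\gamma_0 \le \e_0 + \gamma_1 \le \e/4$.

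For (g) and (h), the key tool is the recursion $\e_{i+1} U_{i+1} = 4\cceil{h_i+1}$ for $i \ge 1$, which in the shorthand $x_i \coloneqq \e_i U_i$ reads $x_{i+1} = O(\log x_i)$; this is the standard iterated-log recursion and reaches a constant in $O(\log^*)$ steps. Since $k = \log^*(\e U)$, we get $x_k = O(1)$, hence $U_k = O(1/\e_k) = O(1/\e)$, proving (g). For (h), $U_{i+1} = 2\cceil{h_i+1}/\e_i$, so the $i=0$ term already contributes $U_1 = O(\e^{-1}\log(\e U))$, and the remaining $U_i$ shrink geometrically fast by the iterated-log decay of $h_i$, so the total is $O(\e^{-1}\log(\e U))$.

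The main obstacle is part (i), the upper bound $n^* \le \e^{-1}(\log(\e U))^{1+o(1)}$. From the derivation above, $n^* = 8\e^{-1}\prod_{j=0}^k \cceil{h_j+1}$, and the lower bound $n^* \ge 1/\e$ is immediate. For the upper bound, the recursion $h_{j+1} = \log(4\cceil{h_j+1}) \le 3 + \log(h_j+1)$ yields $h_0 \approx \log(\e U)$, $h_1 \approx \log\log(\e U)$, and so on down to $h_k = O(1)$. Taking logarithms, $\log\prod_j(h_j+1) = \log(h_0+1) + \sum_{j\ge 1}\log(h_j+1)$: the first term is $(1+o(1))\log\log(\e U)$, and the sum $\sum_{j \ge 1} \log(h_j+1)$ is dominated by $\log(h_1+1) = O(\log\log\log(\e U))$, which is $o(\log\log(\e U))$. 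Hence $\prod_j(h_j+1) = (\log(\e U))^{1+o(1)}$, giving the claim. Finally, (j) follows from (d) and (g): $\alpha_{k-1} = \alpha_k \cceil{h_k+1}$, and $h_k = O(1)$, so $\alpha_{k-1} = O(\alpha_k) = O(n_0/n^*)$.
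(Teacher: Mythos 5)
Your proposal is correct and follows essentially the same route as the paper's appendix: the same iterated-logarithm recursion for $\e_i U_i$ (the paper phrases it via $Q_i = \e_i U_i/16$ and $Q_{i+1} \le \max\{\log Q_i, 8\}$), the same telescoping product $n^* = \prod_j \cceil{h_j+1}/\e_0$, and the same direct computations for (c)--(f) and (j). The only nit is in (b): your recursion covers $i\ge 1$ and your base case covers $i=0$, but $\e_1 U_1 = 2^{-k}\cdot 2\cceil{h_0+1}$ needs the separate check $\cceil{h_0+1}\ge 2^k$ --- exactly the inequality you already verify for part (c) --- which the paper handles as an explicit $i=1$ case.
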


\subsection{Space complexity} \label{sec:space}

Now we discuss the space complexity of the algorithm. All space complexities in this section will be in bits, not words. There are two primary things to check: the space taken by the sketch itself, and the space required during a merge step after an insertion.

\paragraph{Space of sketch.} The information stored by the algorithm consists only of the full nodes $F_i$ for layers $0 \le i < k$ and the weights $W_k$ for layer $k$. (Note that we don't need to store $T_i$ since it is determined recursively by $T_{i-1}$ and $F_{i-1}$.)

Each $F_i$ is an upward-closed subset of $T_i$. In each of the $1/\e_i$ trees that comprise $T_i$, the portion of $F_i$ in that tree (if nonempty) is a connected subgraph including the root. Thus, that portion of $F_i$ is uniquely determined by the topology of the (rooted) tree that it forms (where in a tree topology we ). We can store the topology of an $\ell$-vertex tree using $O(\ell)$ bits (by storing the bracket representation of the tree). The total number of full nodes in $F_i$ is at most $n_i/\al_i$ at any time, so this means that the total space to store $F_i$ is $O(1/\e_i + n_i/\al_i)$, which is just $O(U_{i+1})$ by \eqref{eq:def-U}. Thus, the total space to store all the $F_i$ is $O(U_1 + \dots + U_k)$, which is $O(\e^{-1} \log(\e U))$ by \cref{fact:tot-U}.

Now, it remains to check the space required to store $W_k$. First, the keys of $W_k$ also form an upward-closed subset of $T_i$. This subset consists of full and partial nodes; by the same argument, there are at most $n_k / \al_k = O(1/\e)$ full nodes. Every partial node is either a root (of which there are $O(1/\e_k) = O(1/\e)$) or a child of a full node, so there are also at most $O(1/\e)$ partial nodes. Therefore, as with the $F_i$, the space required to store the set of all nonempty nodes is at most $O(1/\e_k + 1/\e) = O(1/\e)$.

After the set of nonempty nodes has been stored, we just need to store their weights\footnote{Actually, we only need to store the weights of the leaves of the forest formed by the nonempty nodes, since the r. Since it doesn't make a different to the asymptotic space complexity, we store all the weights for simplicity.} in some order (say pre-order of the trees). The weights are all at most $\al_k = n_0 / n^*$, and there are $O(1/\e)$ of them, so the space required to store all the weights is at most $O(\e^{-1} \log(n_0/n^*))$. Since $n^* \ge 1/\e$ (by \cref{fact:n-star-bd}) and $n_0 \le \max(2n, n^*)$ at all times, we have $O(\inv \e \log(n_0/n^*)) \le O(\inv \e \log(\e n))$.

Putting everything together, the total space complexity of the data structure is at most
\[O(\e^{-1} (\log(\e U) + \log(\e n)),\]
as desired.

\paragraph{Space of sketch while $t$ is small.} Recall that in section \cref{sec:unknown-n}, we made two modifications to the data structure that lasted while $t < n^*$ and $t < 1/\e$. We will show now that (asymptotically) they don't require any extra space.

First, while $t < n^*$, we maintained a second data structure identical to the first, except that we repeated each element $\e n^*$ times. For this data structure, the space analysis that we just performed still holds, except that $n_0$ may now be up to $2 \e n^* t$. The space to store the $F_i$ is unchanged. The space required to store the weights is now at most $O(\e^{-1} \log(n_0/n^*)) \le O(\e^{-1} \log(\e t))$, which is still at most $O(\inv \e \log(\e n))$, as desired.

Finally, for $t < 1/\e$, we stored all the elements of the stream explicitly. Naively, storing these as an ordered list would take $O(\inv e \log U)$ space, but actually, since the set is unordered, we can improve this. Indeed, split the universe $[1, U]$ into $1/\e$ buckets of size $\e U$ (based on the $\log \inv \e$ most significant bits). Then, for each bucket, store an ordered list of the $\log(\e U)$ least significant bits of every stream element in that bucket. Storing such an ordered list of length $\ell$ takes $O(1 + \ell \log (\e U))$ space, so the total space taken is at most $O(1/\e + t \log(\e U)) \le O(\inv \e \log(\e U))$, which is at most a constant multiple of the desired space.

This completes the discussion of the space taken by the sketch itself. Now we will show that the algorithm does not require any extra space (asymptotically) during the merge operation.

\paragraph{Space during merge.} During the merge, the only extra memory we require is that of storing the keys (i.e., vertices) of the map $W_{i-1}$ which weren't already stored in $F_{i-1}$. There are two parts of this: we need to store the new keys of $W_{i-1}$ (that is, the vertices with newly added weight), and we need to store the weights themselves.

Let $S$ denote the set of new keys of $W_{i-1}$. Note that every node in $S$ corresponds to at least one node from $F_i$ which put its weight into that node. Thus, we have $|S| \le |F_i|$. Additionally, $S \cup F_{i - 1}$ form an upward-closed set in $T_{i-1}$. Thus, just as we stored $F_{i-1}$, we can also store $S \cup F_{i - 1}$ using $|S \cup F_{i - 1}| \le |F_i| + |F_{i-1}|$ space. Note that we already used $|F_i| + |F_{i-1}|$ space for the original sketch, so storing $S$ does not require any more space asymptotically.

Now, it remains to store the weights in $W_{i-1}$. Here we must distinguish between the cases $i=k$ and $i < k$. If $i = k$, then we store the weights explicitly. The weights always remain at most $\al_{k-1} = O(\al_k) = O(n_0/n^*)$ (by \cref{fact:al-k-1}), so the total space required to store the weights is $O(|S| \log(n_0/n^*))$. Since $|S| \le F_k = O(1/\e)$, this is then at most the weight allocated to store $W_k$ originally, so again this does not require extra asymptotic space.

If $i < k$, then we first make one small optimization: as stated in a footnote, in \cref{alg:push-up} (the compression algorithm), we do not need to move the weight up in increments of 1. Indeed, the weights start out as multiples of $\al_i$, and the threshold $\al_{i-1}$ is also a multiple of $\al_i$. Thus, we can move weight in increments of $\al_{i-1}$, so that the weights in $W_k$ always remain multiples of $\al_i$. Now, since the weights are all multiples of $\al_i$, we can store their ratios with $\al_i$; we store the ratios in \textit{unary}, so that storing a weight of $\ell \al_i$ requires $O(\ell + 1)$ bits of space. Then, the total space needed to store the weights is $O(n_i/\al_i + |S|)$. Again, $|S| \le F_i$, so we can see that this is again at most the weight allocated to storing $F_i$ originally. 

Thus, we have shown that in all cases, the merge step does not require any more space (asymptotically) than storing the sketch already does.

\subsection{Runtime} \label{sec:runtime}

In this section, we prove that, for reasonably-sized $n$, our algorithm processes updates and queries in $O(\log (1 / \epsilon))$ amortized time. We will need a few technical assumptions and simplifications to make our algorithm run in $O(\log (1 / \epsilon))$ time. The first is that we relax the space requirement a bit to $O(\epsilon^{-1} (\log (\epsilon n) + \log U))$ bits, which still within $O(\epsilon^{-1})$ words. Secondly, we assume that $n > (\log U)^C / \epsilon^2$ for some absolute constant $C$ that depends on the computational model. Also, we assume that there are no queries during the first $(\log U)^C / \epsilon^2$ insertions.

\paragraph*{Insertion into the last layer.} Our procedure for insertion, \Cref{alg:insertion}, contains two steps. The first step is to insert the new element $x$ into the last-layer sketch $T_k$. The second step is to merge the layer $i$ into $i - 1$ (\Cref{alg:merge}).

Now, let us focus on the time complexity of the first step (\Cref{Line:insert-find,Line:insert-add} of \Cref{alg:insertion}). The reason we relax the space requirement a little is to allow us to store the tree $T_k$ at the last layer explicitly, not in the bracket representation. There are at most $3|F_k| \leq 3 \cdot \frac{n_k}{\alpha_k} = O(1 / \epsilon)$ nodes in the last layer. For each node $u \in T_k$, we store its weight $W_k[u]$ (which takes $O(\log(\epsilon n))$ bits) and the interval $[a_u, b_u]$ (which takes $O(\log U)$ bits).

To efficiently find the highest non-full node containing $x$, we always maintain a sorted list of all exposed nodes (non-full nodes whose parent is full and the non-full roots). By \Cref{obs:interval-cover}, these nodes have disjoint intervals whose union covers the entire $[U]$. Thus these nodes are simply sorted in the increasing order of these intervals. A binary search in $O(\log 1/\epsilon)$ time finds the exposed node (which is also the highest non-full node) $u$ whose interval contains $x$. Then, we increase the weight $W_k[u]$ of that node by $1$. 

In the rare case where the node $u$ becomes full after this, we need to remove it from the list and add its two empty children. Although this takes $O(1 / \epsilon)$ time as we have to modify the entire list and the topology of the tree we store, it only happens once every $\alpha_k = \frac{n_0}{n^*}$ (\Cref{eq:def-n-star}) insertions. Here $n_0$ is the current estimate of string length, which keeps doubling as explained \Cref{sec:unknown-n}. Since we know that $n > (\log U)^C / \epsilon^2$ from our assumption, we can run the algorithm starting with~$n_0 =(\log U)^C / \epsilon^2$. As $n^* \leq \epsilon^{-1} (\log(\epsilon U))^{1 + o(1)}$ (\Cref{fact:n-star-bd}), we have $\alpha_k \geq O((\log U)^{C - 1} / \epsilon)$. We can amortize the $O(1 / \epsilon)$ running time to these $\alpha_k$ insertions and get $O(1)$ amortized running time for updating the list. 

\paragraph*{Merging layer $i$ into layer $i - 1$.} First of all, in each tree $T_i$, the number of all nodes is $|F_i| \leq \frac{n_i}{\alpha_i} = O(\log(\epsilon_i U_i) / \epsilon_i)$ ($\epsilon_i = \epsilon / 2^{k - i + 4}$). We want to amortize the time cost to $n_i$ insertions. For \Cref{alg:merge}, there are three procedures which we will analyze one by one. 

\begin{itemize}
    \item $\Call{Move}{i}$ (\Cref{alg:push-down}): At \Cref{line:v_prime}, we need to find the base-level descendant $v'$ of $v$ for every node $v \in F_i$ above the base level. This can be done by traversing the stored part of tree $T_i$ once, which takes $|F_i|$ time.
    
    In the rest of this algorithm, since we only maintain the full nodes $F_{i - 1}$ in $T_{i - 1}$, in this step, all the empty nodes in $T_{i - 1}$ whose weights increase are not stored before by our algorithm. We simply store them and their weights as a list using $O((\log U + \log(\epsilon n)) \cdot |F_i|)$ bits of memory in the depth-first-search order. This takes $O(|F_i|)$ time. 
    
    \item $\Call{Compress}{i - 1}$ (\Cref{alg:push-up}): In the time efficient implementation of $\Call{Compress}{i - 1}$, instead of moving weights one unit at a time, we process the nodes in the list we stored during \Cref{alg:push-down} in top-down order and always move the maximum amount of weight that we can move. Since this process moves weight up  at most once from each node, it also only takes $O(|F_i|)$ time.
    \item $\Call{Round}{i - 1}$ (\Cref{alg:round}): Finally, \Cref{alg:round} finds the partial nodes in our list while visiting each node at most once. So this takes only $|F_i|$ time as well. 
\end{itemize}

After these three steps, we also have to update the topology of $F_{i - 1}$ and add new full nodes to its bracket representation. This takse $|F_{i - 1}|$ time. In total, the time complexity is $|F_{i - 1}| + |F_i|$. So the amortized time is $(|F_{i - 1}| + |F_i|) / n_i = O\left(1 / \alpha_i\right) \leq O(1 / \alpha_k)$ per layer $i$. As there are $k = \log^* U$ many layers, while $\alpha_k \geq (\log U)^{C - 1} / \epsilon$, the amortized time cost is just $O(1)$. 

\paragraph*{Answering rank queries.} For answering rank queries, running exactly \Cref{alg:query} requires traversing $T_0, T_1, \dots, T_k$, which takes $O(\sum_{i=0}^k |F_i|) = O((\log U) / \epsilon)$ time. For simplicity, we assume that there are only queries after first $n_0$ elements are inserted. After every $\epsilon \cdot n_0$ insertions, we run \Cref{alg:query}, compute each $\epsilon$-approximate quantile and store them. This takes at most $O((\log U) / \epsilon^2)$ time. Then for every query $x$, we just binary search in $O(\log 1 / \epsilon)$ time, and count the number of stored quantile elements less than that $x$, multiply that by $\epsilon t$ (where $t$ is the number of current insertions), and output the answer. This has an error of at most $2 \epsilon n$. Since we can amortize the $O((\log U) / \epsilon^2)$ time cost to $\epsilon \cdot n_0 \geq (\log U)^C / \epsilon^2$ elements. This takes $O(\log 1 / \epsilon)$ amortized time per query and $O(1)$ amortized time per insertion.

\section{Practical considerations} \label{sec:practice}

\paragraph{Mergeability.} One popular feature with quantile sketches is being \emph{fully-mergeable}, meaning that any two sketches with the error parameter $\epsilon$ can be merged into a single sketch without increasing the error parameter $\epsilon$. A weaker notion of mergeability is the \emph{one-way mergeability}, which, informally speaking, means that it is possible to maintain an accumulated sketch $S$ and keep merging other small sketches into $S$ without increasing the error $\epsilon$. As pointed out in \cite{greenwald2016quantiles, agarwal2013mergeable}, every quantile sketch is one-way mergeable.

Among these sketches, the GK sketch and the optimal KLL sketch is not fully mergeable, while q-digest is fully mergeable, and KLL sketch has a mode in which it is fully mergeable but loses its optimal space bound. Our sketch is based on the fully mergeable Q-digest sketch, but we do not know whether it is fully mergeable in its current form. We leave it as a future direction to come up with a fully mergeable mode for our algorithm. 

However, our algorithm is in a sense partially mergeable. That is, if we have two instances of size at most $n$ each with error parameter $\e$, we can merge them while incurring an additional discrepancy of at most $O(\e n / \log(\e U))$ (as we will soon describe). Though this is not as strong as a fully-mergeable data structure, which incurs additional error of $0$, it is still better than the $O(\e n)$ additional error incurred by merging quantile sketches in a black-box sense (by querying their quantiles to obtain an $O(\e n)$-approximation to their streams). In practice, this means that one can merge up to $\poly(U)$ of our sketches simultaneously (by performing merges in a binary tree with depth $O(\log(\e U))$), with only a constant-factor loss in $\e$.

We now sketch how to perform this partial merge. Suppose we wish to merge the data structures $D$ and $D'$, with current sizes $t > t'$. To begin with, let us first imagine that only layer 0 is occupied (in both structures). Then, we simply add values of the weight map $W'_0$ (of $D'$) into $W_0$ (of $D$). Then, the discrepancy of $W_0$ is now $\e t + \e t'$. Now, the only problem is that the invariant that all nodes are either full or empty may not hold anymore, and the full nodes are no longer upward-closed. To fix this, we perform the compression and rounding steps of \cref{alg:push-up,alg:round} --- by \cref{lemma:push-up,lemma:round}, this increases the discrepancy by at most $\al_0 = O(\e t / \log(\e U))$. If there is now a doubling step (\cref{alg:double}) to be performed (that is, if $t_0 + t_0' \ge n_0$), then we now do it as usual. Note that though the discrepancy has increased, the data structure is otherwise still a valid data structure for the error parameter $\e$, and we can continue to perform the usual operations (including more merges) on the new data structure, while keeping track of the increased discrepancy.

Now, suppose that there are occupied layers other than layer 0. Then, before merging the two data structures, we simply perform the operation \Call{Merge}{$i$} early for $i=k, k-1, \dots, 1$, on both data structures. This proceeds identically to an ordinary \Call{Merge} operation, except that during the rounding step, the total weight may not be a multiple of $\al_{i-1}$; we simply discard the excess weight down to a multiple of $\al_{i-1}$ (and insert arbitrary elements to replace them at the end of the merge). Overall, this has the effect of discarding elements down to the nearest multiple of $\al_0$, so it will introduce a discrepancy of at most $\al_0 = O(\e t / \log(\e U))$. Additionally, the proof of \cref{lemma:merge-error} still shows that the discrepancy introduced by this merge is at most $\gamma_1 n_1 = O(\e t / \log(\e U))$. Thus, overall, this partial merge still adds an additional $O(\e t / \log(\e U))$ to the discrepancy, as desired. 

\paragraph{Constant factors.} The parameters that we selected in \cref{sec:params} were chosen to make the analysis simple. There is, however, a lot of leeway in choosing the parameters to still satisfy the necessary properties, and our exact choices likely do not attain the best constant factors on space complexity. We use $k+1 = \log^*(\eps U) + 1$ layers, but in practice, we expect that around 4 layers is probably enough, and the parameters can then be chosen appropriately.

Additionally, beyond just the setting of our parameters, our analysis has generally been wasteful in terms of constants for ease of presentation and readability. There are several places this can be improved. For example, we can improve the error $\eps$ by a factor of 2 by performing the moving and rounding steps of the merge in different directions; that is, in the moving step, we can move nodes only to their leftmost (least) descendant, and in the rounding step, we round nodes upward only (which is what we already do).

\paragraph{Removing amortization.} Currently, our runtime analysis is amortized, since a step containing a merge can take a long time compared to a normal insertion step. If one is concerned about worst-case update time, then we can improve performance by executing the time-consuming operations over a longer time period while storing received elements in a buffer, similarly to Claim 3.13 of \cite{assadi2023generalizing}.

\paragraph{Answering select queries with real elements.} One feature of quantile queries is that they can also answer \textit{select} queries: that is, given a rank $r$, one can query $\select(r)$ to obtain an element $x$ that is between the rank-$(r-\e t)$ and rank-$(r+\e t)$ elements of the stream. This is equivalent to being able to answer rank queries, since one can use a binary search of rank queries to answer a select query (and vice versa). One might also desire, though, that the answers to the select queries are actual elements of the stream, rather than arbitrary elements of $[1, U]$. As stated, our algorithm does not provide a way to do this. It turns out, however, that given any quantile sketch algorithm that can answer approximate rank queries, it is possible to augment it (in a black-box manner) so that it can answer select queries with real elements of the stream, with only a constant-factor degradation in the error parameter $\e$. We will now sketch how to do so.

We initialize a quantile sketch with error parameter $\e$, and we maintain a list $x_1 < x_2 < \dots < x_\ell$ which are actual elements of the stream (and by convention we write $x_0 = 0$ and $x_{\ell+1} = U+1$), and rank estimates $r_1, \dots, r_\ell$ (where again by convention we say $r_0 = 0$) satisfying the following properties at all times $t$:
\begin{enumerate}
    \item For all $0 \le i \le \ell$, $\abs{\rank_\pi(x_i) - r_i} \le \e t$. \label{item:p1}
    \item For all $0 \le i \le \ell$, $\rank_\pi(x_{i+1}-1) - r_i \le 2\e t$. \label{item:p2}
\end{enumerate}
(Note that the first item is trivially satisfied for $i=0$.) Now, suppose that we receive an insertion $x$ into the stream. First, we increment $r_i$ for all $i$ such that $x_i \ge r_i$, to maintain property \ref{item:p1} (note that $t$ increases by 1, but this only makes property \ref{item:p1} easier to satisfy). 

Now, if $x = x_j$ for some $j$, then property \ref{item:p2} continues to be satisfied since the left-hand side of the inequality remains the same for all $i$. Otherwise, suppose that $x \in (x_j, x_{j+1})$ for some $j$. Then, \ref{item:p2} might become violated for $i=j$, since the left-hand side will have increased by 1. To fix this, we insert a new element $x_{j+1} = x$ (and shift the indices of the existing $x_i, r_i$ of all $i \ge j+1$ up by 1). Then, we execute a rank query on $x$ to get $r$ such that $|\rank_\pi(x) - r| \le \e t$. Then, we set $r_{j+1} = \max \{r, r_j + 1\}$. Note that property \ref{item:p1} continues to be satisfied by the accuracy of the rank query and because $r_j + 1 \le \rank_\pi(r_j) + \e t + 1 \le \rank_\pi(r_{j+1}) + \e t$. It remains to check that property \ref{item:p2} is now satisfied. Indeed, for $i=j+1$, this follows from the fact that $r_{j+1} \ge r_j+1$ and that the property was previously satisfied for $i=j$. For $i=j$, it follows from the fact that $\rank_\pi(x-1)$ is at most the former value of $\rank_\pi(x_{j+1}-1)$, and that the property was previously satisfied for $i=j$. Thus, we have established that the properties both continue to hold.

Finally, while there is any $j$ such that $r_{j+1} - r_{j-1} \le \e t$, we delete $x_j$ and $r_j$ (and shift the indices $i>j$ down by 1 to accommodate). This preserves the properties: we only need to check property \ref{item:p2} for $i=j-1$, and indeed, $\rank_\pi(x_j-1) - r_{j-1} \le (r_j + \e t) - r_{j-1} \le 2\e t$ by property 2 and by the assumption that $r_j - r_{j-1} \le \e t$ (note that the old $r_{j+1}$ has become $r_j$). Thus, this preserves the properties.

Now, we answer a select query as follows: on a query of rank $r$, we pick the minimal $i$ such that $r \le r_i + 2\e t$, and return $x_i$. As a special case, if $r < 2\e t$, we return $x_1$ instead of $x_0=0$. (Note that by property \ref{item:p2} applied to $i=\ell$, we never return $x_{\ell+1}$.) Then, assuming that $r \ge 2\e t$, we have by property \ref{item:p1} that $\rank(x_i) \ge r_i - \e \ge r - 3\e t$. Also, by property \ref{item:p2}, $\rank(x_i - 1) \le r_{i-1} + 2\e t < r$ (by minimality of $i$), so the rank-$r$ element is at least $x_i$. Thus the error in the select query is at most $O(\e t)$ as long as $r \ge 2\e t$. Also, in the special case $r < 2\e t$, we answer $x_1$, and by property \ref{item:p2}, $\rank(x_1 - 1) \le 2\e t$, so again the error is at most $O(\e t)$. Thus, the answers to the select queries are always approximately correct.

Finally, it remains to analyze the total space taken. Note that we have $r_{j+1} - r_{j-1} \le \e t$ for all $j$, so the total number of indices $\ell$ is at most $O(1/\e)$. Therefore, we only need to store the $O(1/\e)$ elements $x_1, \dots, x_\ell$ and $r_1, \dots, r_\ell$, which takes $O(1/\e)$ words. Indeed, since the $x_i$ are in increasing order and the increments of the $r_i$ are at most $O(\e n)$, we can actually store these in $O(\e^{-1}(\log(\e U) + \log(\e n))$ space, so this does not take any additional asymptotic space over our algorithm.

\section{Lower bounds} \label{sec:lb}
The space complexity of our algorithm is $O(\eps^{-1}(\log (\eps U) + \log (\eps n)$. In this section, we'll discuss the optimality of this result. The first term $O(\eps^{-1}\log (\eps U))$ must be incurred by any quantile sketch, even a randomized one that succeeds with reasonable probability, as we will now show. This already implies that when $n\leq \poly(U)$, our algorithm is tight\footnotemark. When this is not the case, we conjecture that our algorithm is optimal among deterministic sketches anyway. In particular, Conjecture~\ref{conj:parallel-count} implies a space lower bound of $O(\eps^{-1}\log (\eps n))$ for quantiles.

\footnotetext{Technically speaking, this result alone only implies tightness when $n \leq \poly(\e U)$. However, if $U > 1/\e^2$, then $\poly(\e U)$ and $\poly(U)$ are the same, and when $U < 1/\e^2$, then $n\leq \poly(U)$ implies that $n \ll \poly(1/\e)$, and as we discussed in \cref{sec:further-dirs}, a result of \cite{aden2022amortized} implies that our algorithm is tight when $\inv \e > \log(\e n)$.}

\begin{thm}
    Any randomized streaming algorithm for Problem~\ref{prob:all-quantile-sketch} that succeeds with probability at least $0.9$ (that is, it can answer a rank query chosen by an oblivious adversary with that probability) on a universe of size $U>C\eps^{-1}$ for some sufficiently large $C$ uses at least $\Omega(\eps^{-1}\log (\eps U))$ bits of space. 
\end{thm}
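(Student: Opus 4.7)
The plan is to reduce from the prefix variant of augmented indexing: Alice holds $y \in \{0,1\}^N$, sends a single message to Bob, who holds $i \in [N]$ together with the prefix $y_1, \dots, y_{i-1}$, and must output $y_i$. This has the same randomized one-way communication complexity as the more common suffix variant (by reversing the bit order), so by the standard $(1 - H(\delta)) N$ information-complexity lower bound, any protocol succeeding with probability at least $0.9$ requires $\Omega(N)$ bits. I set $L := \lfloor 1/(3 \eps) \rfloor$, take $W$ to be the largest power of $2$ that is at most $3 \eps U$ (so, using $U > C \eps^{-1}$ for a sufficiently large constant $C$, $w := \log W = \Theta(\log(\eps U))$), and put $N := L w = \Theta(\eps^{-1} \log(\eps U))$.

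Alice partitions her $N$-bit string into $L$ chunks of $w$ bits, interpreted MSB-first as integers $x_1, \dots, x_L \in [W]$, and feeds the sketch the length-$L$ stream $\pi_j := (j-1) W + x_j$ (one element in each of $L$ consecutive buckets of size $W$). In particular $n = L$ and $\eps n \leq 1/3$. She then sends the resulting sketch state, of $S$ bits, to Bob. Writing $i = (j - 1) w + b$ with $1 \leq b \leq w$, Bob recovers from his prefix the values of $x_1, \ldots, x_{j-1}$ and the top $b - 1$ bits of $x_j$; these together determine a multiple $a$ of $2^{w - b + 1}$ such that $x_j \in [a, a + 2^{w - b + 1})$, and the target bit $y_i$ is exactly $\mathbbm{1}[x_j \geq a + 2^{w - b}]$. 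Bob asks the sketch for the rank of
\[
z := (j-1) W + a + 2^{w - b} - 1.
\]
All $j - 1$ elements in buckets $1, \dots, j-1$ lie at or below $z$ while no element in buckets $j+1, \dots, L$ does, so $\rank_\pi(z) = (j-1) + \mathbbm{1}[x_j \leq z - (j-1) W] = j - y_i$. With probability at least $0.9$ the sketch returns $\tilde r$ with $|\tilde r - \rank_\pi(z)| \leq \eps n \leq 1/3$, and since the two candidate true-rank values $j$ and $j - 1$ differ by $1 > 2 \cdot (1/3)$, Bob decodes $y_i$ correctly with that probability by thresholding $\tilde r$ at $j - 1/2$.

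This yields a one-way communication protocol for augmented indexing of cost $S$ and success $0.9$, whence $S = \Omega(N) = \Omega(\eps^{-1} \log (\eps U))$. Two observations on why the setup is legitimate: (i) Bob's query depends only on $(i, y_1, \dots, y_{i-1})$ and not on the sketch's random coins, which is exactly what the hypothesis on oblivious adversaries allows; (ii) the choice $L \leq 1/(3\eps)$ makes the error $\eps n$ a strict fraction of the unit rank gap, so a single rank query suffices to extract one bit. I do not anticipate a serious obstacle beyond citing the right variant of the augmented indexing lower bound at error $0.1$ and handling the rounding of $W$ to a power of $2$, which only costs a constant factor in $N$.
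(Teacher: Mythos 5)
Your proof is correct, but it takes a genuinely different route from the paper's. The paper restricts to streams consisting of $k = 3\eps^{-1}$ distinct elements each with multiplicity $n/k$, uses noise-resilient binary search to recover a constant fraction of those elements from the sketch with constant probability, and then concludes via a min-entropy counting argument that the sketch must carry $\Omega(\eps^{-1}\log(\eps U))$ bits of information. You instead give a direct reduction from (prefix-)augmented indexing: each of the $\Theta(1/\eps)$ buckets encodes $\Theta(\log(\eps U))$ bits as the position of a single stream element, Bob's side information pins down the relevant dyadic interval, and a single rank query at a threshold resolves one bit because the additive error $\eps n \le 1/3$ is below half the unit rank gap. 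Your approach is arguably cleaner: it needs only one query per instance, so the $0.9$ worst-case success probability is used directly and no noisy-search machinery or min-entropy bookkeeping is required; the price is that you must invoke the $\Omega(N)$ one-way communication lower bound for augmented indexing as a black box, whereas the paper's argument is self-contained modulo the noisy binary search citation. The details check out: $LW \le U$ so the buckets fit in the universe, Bob's query depends only on his own input (so the oblivious-adversary guarantee applies), and the threshold test at $j - 1/2$ is sound since $1 > 2 \cdot \eps n$. The only points worth flagging are cosmetic: the off-by-one convention for whether $x_j$ ranges over $\{0,\dots,W-1\}$ or $\{1,\dots,W\}$ should be fixed consistently so each $\pi_j$ lands in bucket $j$, and the construction implicitly assumes $\eps \le 1/3$ so that $L \ge 1$ (the same implicit assumption the paper makes with its $3\eps^{-1}$ distinct elements).
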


\begin{proof}
    It suffices to show that the final state of the algorithm requires $\Omega(\eps^{-1}\log (\eps U))$ bits of space. Let us restrict ourselves to streams that only contain $k=3\eps^{-1}$ distinct elements, each of which occurs $n/k$ times. Under this model, let the stream be $\pi'_1< \ldots <\pi'_k$ (each with multiplicity $n/k$). Under this model, the min-entropy of the stream (when the stream is chosen uniformly randomly) is $\log\binom{U}{k}$.
    We will show that access to the sketch reduces the min-entropy considerably (by at least a constant factor). To do this, we will describe an algorithm for a party to make $\eps^{-1}\log U$ queries to the sketch and with probability at least $0.01$, output at least $0.01$ fraction of the elements $\pi'_1, \pi'_2, \ldots, \pi'_k$ correctly. The min-entropy of this distribution of outputs is much lower: the only possibilities are those that overlap on at least $0.01$-fraction of $\pi'_1\ldots \pi'_k$, of which there are at most $\binom{k}{0.01k} \binom{U}{0.99k}$. The most likely outcome therefore occurs with probability at least $0.01$ times the $\log$ of this quantity, so the min-entropy has decreased by at least
    \[
        \log\binom{U}{k} - \log \left( 100\binom{k}{0.01k} \binom{U}{0.99k} \right) \geq \left(\Omega(\eps^{-1}\log(\eps U))\right)
    \]
    by Stirling's approximation when $U>C\eps^{-1}$ for a sufficiently large $C$. Then, by the fact blow, the sketch must have contained at least this many bits of information.

    \begin{fact}
        Let $H_{\min}(\cdot)$ denote the min-entropy of random variables. For any two random variables, $\mathbf{x}$ and $\mathbf{y}$ supported on $X$ and $Y$ respectively, we have 
        $$H_{\min}(\mathbf{x}) - H_{\min}(\mathbf{x} \mid \mathbf{y}) \leq H(\textbf{y}).$$
        In our case, $\mathbf{x}$ is the elements $\pi'_1, \pi'_2, \dots, \pi'_k$ and $\mathbf{y}$ is the memory state of our algorithm.
    \end{fact}
    \begin{proof}
    \begin{align*} 
    H_{\min}(\mathbf{x}) - H_{\min}(\mathbf{x} \mid \mathbf{y})
    &= H_{\min}(\mathbf{x}) - \sum_{y \in Y} \Pr(\mathbf{y} = y) \min_{x \in X} \log \frac{1}{\Pr(\mathbf{x} = x \mid \mathbf{y} = y)} \\
    &= H_{\min}(\mathbf{x}) - \sum_{y \in Y} \Pr(\mathbf{y} = y) \min_{x \in X} \log \frac{\Pr(\mathbf{y} = y)}{\Pr(\mathbf{x} = x, \mathbf{y} = y)} \\
    &\leq H_{\min}(\mathbf{x}) -  \sum_{y \in Y} \Pr(\mathbf{y} = y) \min_{x \in X} \log \frac{\Pr(\mathbf{y} = y)}{\Pr(\mathbf{x} = x)} \\
    &= H_{\min}(\mathbf{x})  - \min_{x \in X} \log \frac{1}{\Pr(\mathbf{x} = x)}  +  \sum_{y \in Y} \Pr(\mathbf{y} = y)  \frac{1}{\Pr(\mathbf{y} = y)}\\
    &= H(\mathbf{y})
    \end{align*}
    \end{proof}

    Now we describe the list of queries to ask the sketch to output least $0.01$ fraction of the elements $\pi'_1\ldots \pi'_k$ correctly with probability $0.01$. For each rank $i\in [k]$, binary search for the rank $i$'th element in a noise resilient way~\cite{pelc2002searching} (resilient to $0.2$ fraction of adversarial error). At the end, this must find the element at rank $i$ exactly, since each element's multiplicity is more than the permissible error. The noisy binary search must succeed whenever the fraction of error is at most $0.2$, which is true on at least $0.01$ fraction of the elements at least $0.01$ fraction of the time.
\end{proof}

\begin{thm}
    Conjecture~\ref{conj:parallel-count} implies that any deterministic streaming algorithm for Problem~\ref{prob:all-quantile-sketch} uses at least $\Omega(\eps^{-1}\log (\eps n))$ bits of space.
\end{thm}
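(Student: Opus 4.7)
The plan is to give a reduction from the $k$-parallel counters problem (with $k = \lfloor 1/(2\eps) \rfloor$) to $\eps$-approximate quantile sketching, so that any quantile sketch yields an algorithm for $k$-parallel counters with the same space complexity. Conjecture~\ref{conj:parallel-count} then gives the desired lower bound of $\Omega(k \log(n/k)) = \Omega(\eps^{-1} \log(\eps n))$ bits. We may assume $U \geq k$; otherwise the bound $\Omega(\eps^{-1} \log(\eps n))$ is uninteresting (and in particular the universe is too small to encode $k$ distinct counter indices, in which case we instead use the trivial embedding of $[k]$ into $[U]$ after rescaling $\eps$ by a constant).

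Given a stream $i_1, i_2, \dots, i_n \in [k]$ of counter increments, we interpret each $i_t$ as an element in $[U]$ (using the embedding $j \mapsto j$) and feed the resulting stream into a deterministic $\eps$-approximate quantile sketch $A$ for universe $[U]$. After all insertions, for each $j \in \{0, 1, \dots, k\}$ we issue the rank query $\mathrm{Rank}(j)$ to $A$; call the answers $r_0, r_1, \dots, r_k$ (with $r_0 = 0$). By the quantile sketch guarantee, $|r_j - \mathrm{rank}_\pi(j)| \leq \eps n$ for every $j$, where $\pi$ is the stream. Since $\mathrm{rank}_\pi(j) - \mathrm{rank}_\pi(j-1)$ is exactly the count $C_j$ of counter $j$, we output $\widehat{C}_j := r_j - r_{j-1}$, which satisfies
\[
|\widehat{C}_j - C_j| \leq |r_j - \mathrm{rank}_\pi(j)| + |r_{j-1} - \mathrm{rank}_\pi(j-1)| \leq 2\eps n \leq n/k,
\]
by our choice $k = \lfloor 1/(2\eps) \rfloor$. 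Thus $\widehat{C}_1, \dots, \widehat{C}_k$ solve the $k$-parallel counters problem.

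The reduction uses only the memory footprint of $A$: the rank queries are performed at the end (purely as a decoding step) and require no additional state beyond $A$'s sketch. Therefore any deterministic $\eps$-quantile sketch using $s$ bits of space yields a deterministic $k$-parallel counters algorithm using $s$ bits of space, and Conjecture~\ref{conj:parallel-count} gives $s = \Omega(k \log (n/k)) = \Omega(\eps^{-1} \log(\eps n))$ as desired. The only delicate point is aligning the error parameters and verifying that the trivial embedding $[k] \hookrightarrow [U]$ is permitted, which is why we assumed $U \geq k$ (equivalently $U \geq \Omega(1/\eps)$); this is the regime in which the bound is non-trivial. There is no significant obstacle beyond this bookkeeping.
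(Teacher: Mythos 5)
Your proposal is correct and takes essentially the same approach as the paper: both reduce the $k$-parallel counters problem (with $k = \Theta(1/\eps)$) to quantile sketching by estimating each count as the difference of two approximate rank queries, incurring error at most $2\eps n \le n/k$. The only differences are cosmetic choices of constants and your extra (reasonable) bookkeeping about embedding $[k]$ into $[U]$.
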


\begin{proof}
    We will show the following. Any data structure that can compute a quantile sketch for $0.1\eps^{-1}$ on $n$ elements in the range $[\eps^{-1}]$ can also return counts of each element that are accurate to within $\pm\eps n$. Then, if there is a quantile sketch using $o(\eps^{-1}\log n)$ bits of memory, there is also a deterministic parallel approximate counter using that much space.

    Let us try to comp estimate the count of $i \in [\eps^{-1}]$. The true count of $i$ is the difference of the true ranks $r_i-r_{i-1}$, since the rank $r_j$ is the number of elements at most $j$. We query the rank of $i$ in the quantile sketch and get the answer $\widehat{r}_i$ and the rank of $i-1$ and get $\widehat{r}_{i-1}$. Then, 
    \[
        \Big| (r_i-r_{i-1}) - (\widehat{r}_i-\widehat{r}_{i-1}) \Big| \leq 0.2\eps n ,
    \]
    so we have a sufficiently accurate estimate of the count.
\end{proof}

\section*{Acknowledgments}
We would like to thank Jelani Nelson for his excellent mentorship, and specifically, for pointing us to this problem, helpful discussions, and suggestions for the manuscript. We would also like to thank the others who have provided feedback for drafts of the manuscript, including Lijie Chen, Yang Liu, and Naren Manoj. Lastly, we would like to thank Angelos Pelecanos for the scorpion lollipop \cite{scorpion}.

\bibliography{main}
\bibliographystyle{alpha}

\appendix
\section{Proof of \texorpdfstring{\cref{fact:parameter-facts}}{Fact \ref{fact:parameter-facts}}} \label{sec:parameter-proof}
Here, we will prove the various parts of \cref{fact:parameter-facts}, by showing a series of claims. Note that \cref{fact:gamma-bd} follows directly from the definitions of $\e_i$ and $\gamma_i$.

\begin{claim}[\cref{fact:pow-2}] \label{claim:pow-2}
For all $i$, $n_i$, $U_i$, $\e_i$, and $\al_i$ are powers of 2.
\end{claim}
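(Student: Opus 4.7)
The plan is to prove this by a straightforward induction on $i$, reading off the definitions of the parameters in \cref{sec:params} and using the key fact that the function $\cceil{\cdot}$ always produces a power of 2 regardless of its input (by its very definition as the smallest power of 2 bounding its argument).

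For the base case $i=0$, we note that $U_0 = U$, $n_0$, and $\e$ are assumed to be powers of 2 (we rounded them at the start of \cref{sec:params}), and therefore so is $\e_0 = \e/8$. Then $\al_0 = \e_0 n_0 / \cceil{\log(\e_0 U_0) + 1}$ is a ratio where the numerator is a product of powers of 2 and the denominator is automatically a power of 2 by the definition of $\cceil{\cdot}$, so $\al_0$ is a power of 2.

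For the inductive step, assuming the claim holds for layer $i$, we verify each parameter for layer $i+1$ in turn. First, $\e_{i+1} = \e/2^{k-i+3}$ is a power of 2 directly. Second, using the closed form from \eqref{eq:def-U}, we have $U_{i+1} = 2\cceil{\log(\e_i U_i)+1}/\e_i$, which is again a ratio of powers of 2 times 2 (the $\cceil{\cdot}$ factor is a power of 2 by definition, and $\e_i$ is a power of 2 by induction). Third, $\al_{i+1}$ has the same form as $\al_0$ but with index $i+1$, so the same argument works once $\e_{i+1}, n_{i+1}, U_{i+1}$ are known to be powers of 2. Fourth, $n_{i+1} = \al_i / \e_{i+1}$ is a ratio of powers of 2.

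I do not expect a genuine obstacle: the only mild subtlety is that the displayed simplification $\cceilx{(1+\cceil{\log(\e_i U_i)+1})/\e_i} = 2\cceil{\log(\e_i U_i)+1}/\e_i$ in \eqref{eq:def-U} uses that $\cceil{\log(\e_i U_i)+1} \ge 1$ so that $1 + \cceil{\cdot}$ sits in the interval $(\cceil{\cdot}, 2\cceil{\cdot}]$ and rounds up to $2\cceil{\cdot}$; this is immediate since $\cceil{\cdot}$ outputs at least $1$. Note crucially that we do not need $\log(\e_i U_i)$ to be an integer for this argument --- the $\cceil{\cdot}$ operator makes the claim work regardless --- so there is no circular dependence on \cref{fact:U-bd}, which we prove next.
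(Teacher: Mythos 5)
Your proof is correct and takes the same route as the paper, which simply states that the claim ``follows directly (inductively) from the definitions''; you have just written out that induction explicitly, including the correct observation that $\cceil{\cdot}$ outputs a power of $2$ by definition so no integrality of $\log(\e_i U_i)$ is needed.
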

\begin{proof}
This follows directly (inductively) from the definitions.
\end{proof}

\begin{claim}[\cref{fact:U-bd}] \label{claim:U-bd}
For all $i$, $\e_i U_i \ge 2$.
\end{claim}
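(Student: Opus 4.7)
The plan is to prove this claim by induction on $i$, using the recursive definition of $U_{i+1}$ from \eqref{eq:def-U}. Specifically, from that definition we have the key identity
\[\e_{i+1} U_{i+1} = 2 \cdot \frac{\e_{i+1}}{\e_i} \cdot \cceil{\log(\e_i U_i) + 1},\]
so the inductive step reduces to understanding (a) the ratio $\e_{i+1}/\e_i$ and (b) a lower bound on $\cceil{\log(\e_i U_i) + 1}$ from the inductive hypothesis.

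For the base case $i = 0$, I would just observe that $\e_0 U_0 = \e U / 8$, which is at least $2$ by the global assumption at the start of \cref{sec:main} that $\e U$ is a sufficiently large absolute constant.

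For the inductive step, I would split into two cases based on $i$. When $i \ge 1$, the recursion on $\e_i$ gives $\e_{i+1}/\e_i = 2$; combined with the inductive hypothesis $\e_i U_i \ge 2$ (which forces $\cceil{\log(\e_i U_i) + 1} \ge 2$), the identity above yields $\e_{i+1} U_{i+1} \ge 8$, which is more than enough. The only delicate case is $i = 0$, where $\e_1/\e_0 = 8/2^{k+3} = 2^{-k}$, so the identity becomes
\[\e_1 U_1 = 2^{1-k} \cdot \cceil{\log(\e U/8) + 1}.\]
Here I would argue that $\cceil{\log(\e U/8) + 1} \ge 2^k = 2^{\log^*(\e U)}$ whenever $\e U$ is sufficiently large, since $2^{\log^*(\e U)}$ grows extraordinarily slowly (it is at most a small constant even for $\e U$ well beyond any practical range), while $\log(\e U/8)$ grows without bound. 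Concretely, the inequality holds as long as $\log(\e U) - 2 \ge 2^{\log^*(\e U)}$, which is guaranteed by the blanket assumption that $\e U$ exceeds a sufficiently large absolute constant.

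I expect the main (minor) obstacle to be the $i = 0$ step, since it is the only place where the tower-vs-iterated-log comparison is used; everywhere else the constants are loose. The upshot is that the claim is essentially quantitative bookkeeping on the parameter choices rather than a substantive argument, and the implicit assumption on how large $\e U$ must be can be absorbed into the existing assumption at the start of \cref{sec:main}.
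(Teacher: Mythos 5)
Your proposal is correct and follows essentially the same route as the paper's proof: base case $i=0$ from the blanket assumption that $\e U$ is large, an explicit computation for the step $i=0\to 1$ showing $\e_1 U_1 = 2^{1-k}\cceil{\log(\e U/8)+1}\ge 2$ because $\log(\e U)$ dominates $2^{\log^*(\e U)}$ once $\e U$ exceeds an absolute constant, and a straightforward induction for $i\ge 1$ using $\e_{i+1}/\e_i = 2$ and \eqref{eq:def-U}. You simply spell out the $i=0\to1$ comparison in more quantitative detail than the paper does.
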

\begin{proof}
For $i=0$, this follows from the assumption (made at the start of \cref{sec:main}) that $\e U$ is sufficiently large. For $i = 1$, we have $U_1 = 2\cceil{\log(\e U/8) + 1}/\e$ and $\e_1 = \e/2^{k+3}$, so $\e_1 U_1 = \cceil{\log(\e U/8) + 1} / 2^{\log^*(\e U) + 2}$, which is at least 2 again by the assumption that $\e U$ is sufficiently large. Finally, for $i \ge 2$, this follows by induction using the recursive definition of $U_i$ and the fact that $\e_{i-1} < \e_i$.
\end{proof}

\begin{claim} \label{claim:U-rel}
For all $i < k$, we have $\eps_{i+1}U_{i+1} \le 8 \log(\eps_i U_i)$.
\end{claim}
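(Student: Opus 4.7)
The plan is to expand the definition of $U_{i+1}$ directly and chain two elementary inequalities. Starting from
\[
U_{i+1} = \frac{2\cceil{\log(\e_i U_i) + 1}}{\e_i}
\]
given in \eqref{eq:def-U}, I would multiply both sides by $\e_{i+1}$ to get
\[
\e_{i+1} U_{i+1} \;=\; \frac{2\e_{i+1}}{\e_i}\,\cceil{\log(\e_i U_i)+1}.
\]
So it suffices to control the prefactor $2\e_{i+1}/\e_i$ and the ceiling term separately.

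For the prefactor, I would check the two cases of the definition of $\e_i$. For $i \ge 1$, $\e_{i+1}/\e_i = 2$ directly from $\e_i = \e/2^{k-i+4}$, giving $2\e_{i+1}/\e_i = 4$. For $i=0$, we instead have $\e_1/\e_0 = 2^{-k}$, which makes the prefactor at most $2^{1-k} \le 2$ since $k \ge 1$ (this uses our standing assumption that $\e U$ is sufficiently large). In either case, the prefactor is at most $4$.

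For the ceiling term, the key observation is that $L := \log(\e_i U_i)$ is a positive \emph{integer}: it is an integer by \cref{fact:pow-2} (both $\e_i$ and $U_i$ are powers of $2$) and positive by \cref{fact:U-bd}. For any positive integer $L$, I claim $\cceil{L+1} \le 2L$. Indeed, if $2^m$ is the smallest power of $2$ at least $L+1$, then $2^{m-1} < L+1$, so $2^{m-1} \le L$ and thus $\cceil{L+1} = 2^m \le 2L$. This gives $\cceil{\log(\e_i U_i)+1} \le 2\log(\e_i U_i)$.

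Combining the two bounds yields $\e_{i+1} U_{i+1} \le 4 \cdot 2\log(\e_i U_i) = 8\log(\e_i U_i)$, as desired. There is no real obstacle here; the proof is essentially a direct calculation, and the only substantive step is recognizing that $\log(\e_i U_i)$ is a positive integer so that $\cceil{L+1}$ can be bounded by $2L$ rather than the weaker general bound $\cceil{x} \le 2x$ applied to $x = L+1$.
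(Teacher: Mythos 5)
Your proposal is correct and follows essentially the same route as the paper: expand $U_{i+1}$ via \eqref{eq:def-U}, bound the prefactor using $\e_{i+1} \le 2\e_i$, and then use that $\log(\e_i U_i)$ is a positive integer (by \cref{fact:pow-2} and \cref{fact:U-bd}) to get $\cceil{\log(\e_i U_i)+1} \le 2\log(\e_i U_i)$. The only difference is that you spell out the $i=0$ versus $i\ge 1$ case split and the ceiling bound explicitly, which the paper leaves implicit.
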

\begin{proof}
Since $\e_{i+1} \le 2\e_i$, we have by the inductive definition of $U_i$, \eqref{eq:def-U}, that
\[\eps_{i+1}U_{i+1} \le 4 \cceil{\log(\e_i U_i) + 1} \le 8 \log (\e_i U_i).\]
(Here we have used the fact that $\log(\e_i U_i)$ is a positive integer, which follows from \cref{claim:pow-2} and \cref{claim:U-bd}.)
\end{proof}

Now, define $Q_i = \e_i U_i / 16$. Then we have the following.
\begin{claim} \label{claim:Q-rel}
For all $i < k$, we have $Q_{i+1} \le \max\{\log Q_i, 8\}$.
\end{claim}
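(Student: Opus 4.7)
The plan is to reduce this claim directly to \cref{claim:U-rel} together with a simple case split on the size of $Q_i$. Since $Q_i = \e_i U_i / 16$, dividing both sides of the bound $\e_{i+1} U_{i+1} \le 8 \log(\e_i U_i)$ from \cref{claim:U-rel} by $16$ yields
\[ Q_{i+1} \le \tfrac{1}{2} \log(\e_i U_i) = \tfrac{1}{2} \log(16 Q_i) = 2 + \tfrac{1}{2} \log Q_i. \]
(Here we use that, by \cref{claim:U-bd}, $\e_i U_i \ge 2$ so in particular $Q_i > 0$ and the logarithm is defined.)

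Next I would do a two-case analysis, splitting on whether $\log Q_i \ge 4$ or not. In the first case, we have $\tfrac{1}{2} \log Q_i \ge 2$, so $2 + \tfrac{1}{2} \log Q_i \le \log Q_i$, and we conclude $Q_{i+1} \le \log Q_i \le \max\{\log Q_i, 8\}$. In the second case, $\log Q_i < 4$, so $2 + \tfrac{1}{2} \log Q_i < 4 \le 8$, and we conclude $Q_{i+1} \le 8 \le \max\{\log Q_i, 8\}$. Either way the desired bound holds.

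There is no real obstacle here beyond choosing the threshold for the case split; the threshold $\log Q_i = 4$ is the unique fixed point of the map $x \mapsto 2 + x/2$, which makes the two cases meet cleanly. The content of the claim is essentially that the recursion $Q_{i+1} \lesssim \log Q_i$ from \cref{claim:U-rel} collapses $Q_i$ very quickly (in $\log^*$ many steps) down to an $O(1)$ constant, and the statement is packaged in the form $\max\{\log Q_i, 8\}$ precisely so that it continues to hold once $Q_i$ has reached that constant-sized regime.
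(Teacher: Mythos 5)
Your proposal is correct and follows essentially the same route as the paper: apply \cref{claim:U-rel}, rewrite $\log(\e_i U_i)$ as $\log(16Q_i) = 4 + \log Q_i$ to get $Q_{i+1} \le 2 + \tfrac12 \log Q_i$, and then observe this is at most $\max\{\log Q_i, 8\}$. The paper merely compresses your explicit case split into a single final inequality, so there is no substantive difference.
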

\begin{proof}
By \cref{claim:U-rel}, we have
\[Q_{i+1} = \f{\eps_{i+1}U_{i+1}}{16} \le \f{\log(\eps_i U_i)}{2} \le \f{\log(16Q_i)}{2} = \f{4 + \log Q_i}{2} \le \max\{\log Q_i, 8\},\]
as desired.
\end{proof}

\begin{claim}[\cref{fact:bottom-layer}]
$U_k = O(1/\e)$.
\end{claim}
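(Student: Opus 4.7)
The plan is to iterate the recursion from \cref{claim:Q-rel} starting from the top layer and use the definition $k = \log^*(\e U)$ to conclude that $Q_k$ is bounded by an absolute constant; then $U_k = 16 Q_k / \e_k$ together with $\e_k = \e/16$ yields the desired $O(1/\e)$ bound.

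Concretely, first I would observe that $Q_0 = \e_0 U_0 / 16 = \e U / 128 \le \e U$. Next, I would iterate the bound $Q_{i+1} \le \max\{\log Q_i, 8\}$ from \cref{claim:Q-rel}. Let $f(x) = \max\{\log x, 8\}$; note that $f$ is nondecreasing on $x \ge 1$ and that $f(x) = 8$ whenever $x \le 2^8$. Define the iterates $a_0 = \e U$ and $a_{i+1} = f(a_i)$; an easy induction gives $Q_i \le a_i$ for all $i \le k$. By the definition of the iterated logarithm, applying plain $\log$ to $\e U$ a total of $\log^*(\e U) = k$ times drives the value below $1$, and a fortiori below $2^8$; hence at some step $j \le k$ we have $a_j \le 2^8$, after which $a_i = 8$ for all $i \ge j$. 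In particular $Q_k \le a_k = 8$.

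Finally, since $k \ge 1$ (again because $\e U$ is sufficiently large), we have $\e_k = \e/2^{k-k+4} = \e/16$, so
\[
U_k = \f{16 Q_k}{\e_k} \le \f{16 \cdot 8}{\e/16} = \f{2048}{\e} = O(1/\e),
\]
which is what was to be shown.

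The only subtlety to watch is that \cref{claim:Q-rel} only holds for $i < k$ (so we iterate it exactly $k$ times, producing a bound on $Q_k$), and that one needs $\e U$ sufficiently large so that $k \ge 1$ and the base case $Q_0 \ge 1$ lines up with the argument. Both are already assumed at the start of \cref{sec:main}. There is no real obstacle here: the claim is a routine consequence of \cref{claim:Q-rel} together with the choice $k = \log^*(\e U)$.
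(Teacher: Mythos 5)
Your proof is correct and follows essentially the same route as the paper: iterate the bound of \cref{claim:Q-rel} for $k = \log^*(\e U)$ steps to get $Q_k = O(1)$, then unwind $U_k = 16Q_k/\e_k$ with $\e_k = \e/16$. (The only quibble is the claim that $a_i = 8$ for all $i \ge j$ once $a_j \le 2^8$ — strictly this holds from step $j+1$ on, and if $j = k$ you only get $a_k \le 2^8$ — but either bound gives $Q_k = O(1)$ and hence the stated conclusion.)
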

\begin{proof}
We have $k = \log^*(\e U) \ge \log^*(Q_0)$, so if we iteratively take the logarithm of $Q_0$, we get down below 1 in at most $k$ steps. Thus, by \cref{claim:Q-rel}, we have $Q_k \le 8$, so $U_k = 16 Q_k / \e_k = O(1/\e)$.
\end{proof}

\begin{claim}[\cref{fact:tot-U}]
$U_1 + U_2 + \dots + U_k = O(\inv \e \log(\e U))$.
\end{claim}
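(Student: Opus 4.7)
The plan is to work in the normalized quantity $V_i := \e U_i$, so that the goal becomes $\sum_{i=1}^k V_i = O(\log(\e U))$. There are two main ingredients. First, the explicit formula \eqref{eq:def-U} directly gives $V_1 = 16\cceil{\log(\e U/8)+1} \le 32 \log(\e U)$. Second, writing $B_i := \e_i U_i$, the recursion of Claim~\ref{claim:U-rel} combined with $\e_{i+1} = 2 \e_i$ (for $i \ge 1$) yields the ratio bound $V_{i+1}/V_i = B_{i+1}/(2 B_i) \le 4 \log B_i / B_i$.

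Before splitting the sum, I would first derive an auxiliary bound $2^k \le 2 \log(\e U)$. Writing $V_1 = B_1 \cdot 2^{k+3}$ (since $\e_1 = \e / 2^{k+3}$) and using $B_1 \ge 2$ from Claim~\ref{claim:U-bd}, we obtain $V_1 \ge 2^{k+4}$, which together with $V_1 \le 32 \log(\e U)$ gives $2^k \le 2\log(\e U)$. This step is essential because when the $B_i$'s are only constant-sized, $V_i \approx 2^{k-i}$, and bounding the tail of the sum requires controlling $2^k$.

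Next I would split the indices $\{1, \ldots, k\}$ according to whether $B_i \ge 32$ or $B_i < 32$. In the large regime, the ratio $4 \log B_i/B_i$ is at most $5/8$, so the $V_i$'s decrease geometrically. In the small regime, the recursion $B_{i+1} \le 8 \log B_i$ together with the elementary fact $8 \log 50 < 50$ shows that once some $B_i$ drops below $32$, all subsequent $B_j$'s stay below $50$; hence the small-regime indices form a suffix of $\{1, \ldots, k\}$. The large-regime indices then contribute $\sum V_i \le (8/3) V_1 = O(\log(\e U))$ by the geometric contraction, while the small-regime indices contribute $\sum V_i \le 50 \sum_i 2^{k-i+4} = O(2^k) = O(\log(\e U))$ by the auxiliary bound above.

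The main subtlety is precisely the small-$B_i$ regime: unlike the large-$B_i$ regime where $V_i$ decreases geometrically, here $V_i$ may fail to decrease because the shrinking of $\e_i$ (the factor $2^{k-i+4}$ in $V_i = B_i \cdot 2^{k-i+4}$) is no longer outpaced by the shrinking of $B_i$. The auxiliary bound $2^k = O(\log(\e U))$ is exactly what makes this $O(2^k)$ tail manageable. Combining the two regime bounds yields $\sum V_i = O(\log(\e U))$, i.e., $\sum_{i=1}^k U_i = O(\e^{-1}\log(\e U))$.
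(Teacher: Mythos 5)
Your argument is correct in substance but takes a genuinely more refined route than the paper's. The paper bounds $U_1 = O(\inv\e\log(\e U))$ from the explicit formula and then, for every $i \ge 2$, uses the uniform per-term bound $U_i = O(Q_i/\e_i) = O(\inv\e\, 2^{\log^*(\e U)}\log\log(\e U))$ (via $Q_i = O(\log\log(\e U))$ and $\e_i = \Omega(2^{-\log^*(\e U)}\e)$), multiplies by the number of terms $k=\log^*(\e U)$, and invokes $\log^*(\e U)\,2^{\log^*(\e U)}\log\log(\e U) \ll \log(\e U)$. You instead track the actual decay of $V_i = \e U_i$: geometric contraction with ratio $5/8$ while $B_i = \e_i U_i \ge 32$, and a tail bounded by $O(2^k)$ once the $B_i$ are constant-sized, where your observation $2^k \le 2\log(\e U)$ (derived from $V_1 = B_1 2^{k+3} \ge 2^{k+4}$ via \cref{fact:U-bd}) closes the argument. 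Both proofs ultimately rest on the same ``$\e U$ sufficiently large'' comparison of $2^{\log^*}$-type quantities against $\log$, but yours gives a sharper picture of where the mass of the sum sits and avoids the paper's lossy ``uniform bound times count'' step.

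One imprecision: your claim that the small-regime indices ($B_i < 32$) form a suffix does not follow from what you prove, and is in fact false for these parameters --- since $B_{i+1} = 4\cceil{\log B_i + 1}$, one can have $B_i = 16$ followed by $B_{i+1} = 32$, so $B$ can re-enter the large regime, after which your geometric-sum bound $\sum V_i \le (8/3)V_1$ no longer covers those later large-regime indices. The fix is immediate: let $j^*$ be the \emph{first} index with $B_{j^*} < 32$; the prefix $i \le j^*$ is handled by the geometric contraction, and for all $i > j^*$ your own recursion argument gives $B_i < 50$, so $V_i \le 50\cdot 2^{k-i+4}$ and the tail is $O(2^k) = O(\log(\e U))$ regardless of which regime those indices formally fall in. With that adjustment the proof is complete.
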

\begin{proof}
We have $U_1 = 2\cceil{\log(\e_0 U_0 + 1)}/\e_0 = O(\inv \e \log(\e U))$. Meanwhile, for $i > 1$, by \cref{claim:Q-rel}, we have $Q_i \le O(\log \log Q_0) = O(\log \log(\e U))$. Also, $\e_i \ge 2^{-k+3}\e = \Omega(2^{-\log^*(\e U)}\e)$. Therefore, for $i > 1$, we have $U_i = O(Q_i/\e_i) \le O(\inv \e 2^{\log^*(\e U)} \log \log(\e U))$. Thus, since $k = \log^*(\e U)$,
\[U_2 + \dots + U_k \le O(\inv \e \log^*(\e U) 2^{\log^*(\e U)} \log \log(\e U)) < O(\inv \e \log(\e U)),\]
so we are done.
\end{proof}

\begin{claim}[\cref{fact:n-dec}] \label{claim:n-dec}
For all $i < k$, $n_{i+1}$ is a factor of $n_i$.
\end{claim}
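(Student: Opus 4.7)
The plan is to compute the ratio $n_i/n_{i+1}$ directly from the recursive definition \eqref{eq:def-n}, which yields
\[\frac{n_i}{n_{i+1}} \;=\; \frac{\e_{i+1}}{\e_i}\cdot \cceilx{\log(\e_i U_i) + 1}.\]
By \cref{claim:pow-2}, all of $\e_i$, $\e_{i+1}$, and $\cceilx{\log(\e_i U_i)+1}$ are powers of 2, so the whole ratio is a power of 2. It is therefore a positive integer exactly when the exponent is nonnegative, and this is all I will need to check.

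For $i \geq 1$, the definition $\e_i = \e/2^{k-i+4}$ gives $\e_{i+1}/\e_i = 2$, so the ratio simplifies to $2\cceilx{\log(\e_i U_i) + 1}$. This is evidently a positive integer (indeed at least $4$ by \cref{claim:U-bd}), which handles the case $i \geq 1$ immediately.

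The only case requiring care is $i=0$, where $\e_1/\e_0 = (\e/2^{k+3})/(\e/8) = 2^{-k}$, so the ratio becomes $2^{-k}\cceilx{\log(\e U/8)+1}$. For this to be a (positive) integer I need $\cceilx{\log(\e U/8)+1} \geq 2^k$. This is precisely the sort of bound on an iterated logarithm that the standing assumption from the start of \cref{sec:main} (that $\e U$ is a sufficiently large constant) is designed to supply: since $k = \log^*(\e U)$, the inequality reduces to $\log(\e U/8) > 2^{k-1}$, which holds once $\e U$ is taken above a fixed threshold depending on $\log^*$.

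The main ``obstacle'' is really just to note that no new hypothesis is needed: the proof of \cref{claim:U-bd} for the case $i=1$ already invoked the strictly stronger inequality $\cceilx{\log(\e U/8)+1} \geq 2^{k+3}$ (to ensure $\e_1 U_1 \geq 2$), so the same ``sufficiently large $\e U$'' choice made there automatically implies what we need. Everything in the proof is then a short symbolic manipulation together with a reference back to \cref{claim:pow-2}, \cref{claim:U-bd}, and the standing assumption on $\e U$.
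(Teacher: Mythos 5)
Your proof is correct and follows essentially the same route as the paper: both arguments reduce the claim to showing the ratio $n_i/n_{i+1}$ is a power of $2$ that is at least $1$, handle $i \ge 1$ immediately, and for $i=0$ verify $\cceilx{\log(\e U/8)+1} \ge 2^k$ via the standing assumption that $\e U$ is sufficiently large. Your extra observation that the $i=0$ inequality is already implied by the stronger bound used in \cref{claim:U-bd} is a nice touch but does not change the substance of the argument.
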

\begin{proof}
Since the $n_i$ are powers of 2, it is enough to check that $n_{i+1} \le n_i$. For $i \ge 1$, this follows directly from the definition of $n_{i+1}$ since $\e_{i+1} > \e_i$ (and because of \cref{claim:U-bd}). For $i = 0$, we get $n_0 = n$ and
\[n_1 = \f{\e_0 n_0}{\e_1 \cceil{\log(\e_0 U_0) + 1}} = \f{2^{\log^*(\e U)} n_0}{\cceil{\log(\e U/8) + 1}},\]
which is at most $n_0$ by the assumption that $\e U$ is sufficiently large.
\end{proof}

\begin{claim}[\cref{fact:al-rec}] \label{claim:al-rec}
For all $i < k$, $\al_{i+1} = \al_i/\cceil{h_{i+1}+1}$.
\end{claim}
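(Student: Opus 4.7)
The plan is to verify this identity by a direct substitution of the definitions; there is essentially no obstacle here, as the definition of $n_{i+1}$ was chosen precisely so that this recursion holds.

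First, I would expand $\al_{i+1}$ using its definition \eqref{eq:def-al}, giving
\[
\al_{i+1} = \f{\e_{i+1} n_{i+1}}{\cceil{\log(\e_{i+1} U_{i+1}) + 1}} = \f{\e_{i+1} n_{i+1}}{\cceil{h_{i+1}+1}}.
\]
Then I would substitute the definition of $n_{i+1}$ from \eqref{eq:def-n}, namely $n_{i+1} = \al_i / \e_{i+1}$. The $\e_{i+1}$ factor in the numerator cancels with the $\e_{i+1}$ in the denominator of $n_{i+1}$, yielding
\[
\al_{i+1} = \f{\e_{i+1} \cdot (\al_i / \e_{i+1})}{\cceil{h_{i+1}+1}} = \f{\al_i}{\cceil{h_{i+1}+1}},
\]
as desired. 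Since $\al_i$ and $\cceil{h_{i+1}+1}$ are both powers of $2$ (by \cref{claim:pow-2} and the definition of $\cceil{\cdot}$), and $\al_i \geq \cceil{h_{i+1}+1}$ will be needed for $\al_{i+1}$ to be a positive integer (this is part of \cref{fact:int-caps}, proved separately), this quotient is well-defined as a power of $2$, consistent with \cref{claim:pow-2}.
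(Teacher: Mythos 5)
Your proof is correct and is essentially identical to the paper's: both expand $\al_{i+1}$ via \eqref{eq:def-al} and substitute $n_{i+1} = \al_i/\e_{i+1}$ from \eqref{eq:def-n}, letting the $\e_{i+1}$ factors cancel. The closing remark about integrality is unnecessary for the claim itself but does no harm.
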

\begin{proof}
We have, by the inductive definitions \eqref{eq:def-al} and \eqref{eq:def-n}, that
\[\al_{i+1} = \f{\e_{i+1} n_{i+1}}{\cceil{h_i + 1}} = \f{\al_i}{\cceil{\log(\e_{i+1} U_{i+1}) + 1}}.\]
\end{proof}

\begin{claim}[\cref{fact:int-caps}]
If $n_0 \ge n^*$, then $\al_i, n_i \ge 1$ for all $i$.
\end{claim}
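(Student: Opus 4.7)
My plan is to leverage the observation that all the parameters $n_i$ and $\al_i$ are proportional to $n_0$ (while $U_i$ and $\e_i$ do not depend on $n_0$), so it suffices to handle the threshold case $n_0 = n^*$, since for larger $n_0$ the quantities only scale up. In that case, the definition \eqref{eq:def-n-star} of $n^*$ gives exactly $\al_k = 1$, which will be the seed for the whole argument.

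From the threshold case, I would first propagate the bound $\al_k \ge 1$ upward through the layers using \cref{fact:al-rec} (already proved as \cref{claim:al-rec}): since $\al_{i+1} = \al_i / \cceil{h_{i+1}+1}$ and $h_{i+1} \ge 1$ by \cref{fact:U-bd}, the sequence $\al_0, \al_1, \dots, \al_k$ is nonincreasing (indeed, strictly decreasing by at least a factor of 2 at each step). Hence $\al_i \ge \al_k \ge 1$ for all $i$.

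For the $n_i$, I would just invert the defining relation \eqref{eq:def-al}: $n_i = \al_i \cd \cceil{\log(\e_i U_i) + 1}/\e_i$. Since $\e_i \le 1$ and the ceiling factor is at least $1$, this gives $n_i \ge \al_i \ge 1$ immediately. (Alternatively, one can use $n_{i+1} = \al_i/\e_{i+1}$ from \eqref{eq:def-n} together with $n_0 \ge n^* \ge 1$ from \cref{fact:n-star-bd} as the base case.)

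There is essentially no obstacle; the only thing to be slightly careful about is justifying the reduction to the case $n_0 = n^*$, which boils down to observing that the recursive definitions \eqref{eq:def-al} and \eqref{eq:def-n} are homogeneous of degree $1$ in $n_0$ once $U_i, \e_i$ are fixed, so multiplying $n_0$ by a factor of $n_0/n^* \ge 1$ multiplies every $\al_i$ and $n_i$ by the same factor.
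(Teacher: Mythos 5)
Your proof is correct and follows essentially the same route as the paper: both arguments start from $\al_k = n_0/n^* \ge 1$ and propagate upward using the monotonicity of $\al_i$ from \cref{fact:al-rec}. The only cosmetic difference is that the paper deduces $n_i \ge 1$ from $n_k \ge \al_k$ together with the monotonicity of $n_i$ (\cref{fact:n-dec}), whereas you read off $n_i \ge \al_i$ directly from \eqref{eq:def-al} at each layer; these are interchangeable.
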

\begin{proof}
Suppose that $n_0 \ge n^*$. Firstly, by definition of $n^*$, \eqref{eq:def-n-star}, we have $\al_k \ge 1$. Also, by the definition of $\al_i$, \eqref{eq:def-al}, we also have $\al_k \le n_k$, so $n_k \ge 1$. By \cref{claim:n-dec,claim:al-rec}, $n_i$ and $\al_i$ are decreasing in $i$, so the conclusion follows.
\end{proof}

\begin{claim}[\cref{fact:n-star-bd}]
$\inv \e \le n^* \le \e^{-1} (\log(\e U))^{1+o(1)}$ (where $o(1)$ refers to a term that approaches 0 as $\e U \to \infty$).
\end{claim}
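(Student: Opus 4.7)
The plan is to first expand $n^\star$ as an explicit product using the recursive definitions of $\alpha_i$. By \eqref{eq:def-al} we have $\alpha_0 = \e_0 n_0 / \cceil{h_0+1}$, and then iterating \cref{fact:al-rec} gives
\[
\alpha_k \;=\; \frac{\e_0 n_0}{\prod_{i=0}^{k} \cceil{h_i+1}},
\qquad\text{hence}\qquad
n^\star \;=\; \frac{n_0}{\alpha_k} \;=\; \frac{8}{\e}\prod_{i=0}^{k} \cceil{h_i+1},
\]
since $\e_0 = \e/8$. The lower bound $n^\star \ge \e^{-1}$ is then immediate: every factor $\cceil{h_i+1}$ is at least $1$, so $n^\star \ge 8/\e \ge \e^{-1}$.

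For the upper bound I would bound the product by passing to logs. Writing $\cceil{h_i+1} \le 2(h_i+1)$ gives
\[
\log n^\star \;\le\; \log(8/\e) + (k+1) + \sum_{i=0}^{k} \log(h_i+1),
\]
so it suffices to show $(k+1) + \sum_{i=0}^{k} \log(h_i+1) \le (1+o(1))\log\log(\e U)$ as $\e U \to \infty$. The first term $(k+1) = \log^\ast(\e U)+1$ is $o(\log\log(\e U))$. For the sum, recall $h_i = \log(\e_i U_i) = \log(16 Q_i)$, so $h_0 \le \log(\e U)$ and therefore $\log(h_0+1) \le (1+o(1))\log\log(\e U)$ dominates. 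It remains to show the tail $\sum_{i=1}^{k}\log(h_i+1)$ is $o(\log\log(\e U))$.

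The main obstacle is controlling this tail, and the key ingredient is \cref{claim:Q-rel}, which gives $Q_{i+1} \le \max\{\log Q_i, 8\}$; inductively $Q_i \le \max\{\log^{(i)}(Q_0),\, O(1)\}$, hence $h_i \le \log^{(i+1)}(\e U) + O(1)$ for each $i \ge 1$ (interpreting the iterated log as $O(1)$ once it drops below a constant). Therefore $\log(h_i+1) \le \log^{(i+2)}(\e U) + O(1)$, and
\[
\sum_{i=1}^{k}\log(h_i+1) \;\le\; \log^{(3)}(\e U) + \sum_{i=2}^{k}\log^{(i+2)}(\e U) + O(k).
\]
Each term after the first is at most $\log^{(3)}(\e U)$, so the whole sum is at most $O(k \cdot \log^{(3)}(\e U)) = O(\log^\ast(\e U)\cdot \log^{(3)}(\e U))$. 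Setting $m = \log\log(\e U)$, this equals $O(\log^\ast(m)\cdot \log m) = o(m)$, i.e.\ $o(\log\log(\e U))$, as required. Combining everything gives $\log n^\star \le \log(1/\e) + (1+o(1))\log\log(\e U)$, which is exactly the stated bound $n^\star \le \e^{-1}(\log(\e U))^{1+o(1)}$.
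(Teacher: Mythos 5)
Your proposal is correct and follows essentially the same route as the paper: expand $n^\star = \prod_{i=0}^{k}\cceil{h_i+1}/\e_0$ via \cref{fact:al-rec}, get the lower bound for free, and control the upper bound through \cref{claim:Q-rel}. Your tail analysis via iterated logarithms is a slightly sharper version of the paper's uniform bound $\cceil{h_i+1} \le O(\log\log(\e U))$ for $i \ge 1$, but both reduce to the same observation that a product of $\log^*(\e U)$ such factors is $(\log(\e U))^{o(1)}$.
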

\begin{proof}
By successive applications of \cref{claim:al-rec} and then using the definition of $\al_0$, we have
\[\al_k = \f{\al_0}{\cceil{h_1+1} \cd \dots \cd \cceil{h_k+1}} = \f{n_0 \e_0}{\cceil{h_0+1} \cd \dots \cd \cceil{h_k+1}}.\]
Thus, we have
\[n^* = \f{n_0}{\al_k} = \f{\cceil{h_0+1} \cd \dots \cd \cceil{h_k+1}}{\e_0}.\]
Since $\e_0 = \e/8$, the first inequality of the claim follows immediately. Now, note that we have
\[\cceil{h_i+1} = O(\log(\e_i U_i)) = O(\max\{\log Q_i, 1\})\]
Now, this means that $\cceil{h_0 + 1} = O(\log(\e U))$, and for $i > 0$, by \cref{claim:Q-rel}, we have $\cceil{h_i + 1} \le O(\log \log \e U)$. Thus, since $k = \log^*(\e U)$, we have
\[n^* \le \f{O(\log(\e U)) \cdot (O(\log \log \e U))^{\log^*(\e U)}}{\e/8} = \e^{-1} (\log(\e U))^{1+o(1)},\]
as desired.
\end{proof}

\begin{claim}[\cref{fact:al-k-1}]
$\al_{k-1} = O(n_0 / n^*)$. 
\end{claim}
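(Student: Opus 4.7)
The plan is to express $\al_{k-1}$ in terms of $\al_k$ using the recursive relation already established in Claim~\ref{claim:al-rec}, namely $\al_k = \al_{k-1}/\cceil{h_k+1}$, which rearranges to
\[\al_{k-1} = \al_k \cdot \cceil{h_k + 1}.\]
By the definition of $n^*$ in \eqref{eq:def-n-star}, we have $\al_k = n_0/n^*$ exactly, so it remains to show $\cceil{h_k+1} = O(1)$.

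To bound $\cceil{h_k+1}$, I would just unpack the definitions. Recall $h_k = \log(\e_k U_k)$. For $k \ge 1$, the chosen value $\e_k = \e/2^{k-k+4} = \e/16$, so $\e_k = \Theta(\e)$. By Fact~\ref{fact:bottom-layer}, $U_k = O(1/\e)$, hence $\e_k U_k = O(1)$, and therefore $h_k = O(1)$ and $\cceil{h_k+1} = O(1)$. Combining gives $\al_{k-1} = O(n_0/n^*)$, as required.

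There is no real obstacle here; the content of the claim lies entirely in the parameter choices already analyzed. The only thing to be slightly careful about is the edge case where $\e U$ is small enough that $k = 0$: we assumed at the start of \cref{sec:main} that $\e U$ is at least a sufficiently large absolute constant, so $k = \log^*(\e U) \ge 1$ and the expression $\al_{k-1}$ is well-defined, and the value $\e_k = \e/16$ is valid.
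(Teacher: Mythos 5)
Your proof is correct and follows essentially the same route as the paper's: both use the recursion $\al_{k-1} = \al_k\cceil{h_k+1}$ from \cref{fact:al-rec}, the identity $\al_k = n_0/n^*$ from \eqref{eq:def-n-star}, and the fact that the last layer has $\e_k U_k = O(1)$ (the paper phrases this as $Q_k = O(1)$ via \cref{claim:Q-rel}, you phrase it via \cref{fact:bottom-layer}, but these are the same bound). Your extra remark about the $k=0$ edge case is a fine observation but not a substantive difference.
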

\begin{proof}
By \cref{claim:Q-rel}, we have $Q_k = O(1)$, so by \cref{fact:al-rec}, we have $\al_{k-1} = \al_k \cceil{\log Q_k + 1} = O(\al_k) = O(n_0/n^*)$.
\end{proof}

\end{document}